\begin{document}

\title{Generalizations of Szpilrajn's Theorem
in economic and game theories%\thanks{Grants or other notes
%about the article that should go on the front page should be
%placed here. General acknowledgments should be placed at the end of the article.}
}
%\subtitle{Do you have a subtitle?\\ If so, write it here}

%\titlerunning{Short form of title}        % if too long for running head

\author{Athanasios Andrikopoulos      %\and  %etc.
}

%\authorrunning{Short form of author list} % if too long for running head

\institute{Athanasios Andrikopoulos\at
              University of Patras\\ Department of Computer Engineering \& Informatics\\ 
              Tel.: +302610992113\\
              \email{aandriko@upatras.gr} 
          %  \\
            %  \emph{Present address:} of F. Author  %  if needed
      % \and
          %J. Stabakis \at
          % University of Patras\\ Department of Mathematics \\
              %Tel.: +302610997397\\
              %\email{jnstabakis@upatras.gr}   
              %second address
}

\date{Received: ......... / .........: .........}
% The correct dates will be entered by the editor

\maketitle

\begin{abstract} 
Szpilrajn's Lemma entails that each partial order extends to a linear order. Dushnik and Miller use Szpilrajn's Lemma 
to show that each partial order has a relizer.
Since then, many authors 
utilize Szpilrajn's
Theorem and the Well-ordering principle to prove more general existence type theorems
on
extending binary relations.
Nevertheless, we are often interested not only in the existence
of extensions of a binary relation $R$
satisfying certain axioms of orderability, but in something more: (A) The
conditions of the sets of alternatives and the properties which $R$ satisfies to be
inherited
when one passes to any member of a subfamily of
the
family
of extensions of $R$ and: (B) The size of a family of ordering
extensions of $R$, whose intersection is $R$, to be the smallest one.  
The
key to addressing these kinds of problems is the szpilrajn inherited
method. In this paper, we define the notion of $\Lambda(m)$-consistency,
where $m$ can reach the first infinite ordinal $\omega$,
and we give two general inherited type 
theorems
on extending binary relations, a Szpilrajn type and a Dushnik-Miller type
theorem, which generalize 
all the well known existence and inherited
type extension theorems in the literature.
\keywords{Consistent binary
relations, Extension theorems, Intersection of binary relations.}
% \PACS{PACS code1 \and PACS code2 \and more}
\par\noindent
{\bf JEL Classification }\ C60, D00, D60, D71. 
%\subclass{C7, D7.}
\end{abstract}

\section{Introduction} 
One of the most fundamental results on extensions of binary relations is the following theorem proved by E. Szpilrajn
in 1930 \cite{szp}.
\begin{theorem}\label{gam}{\rm Let $\leq$ be a partial order on a set $X$ and let $x$ and $y$ be two incomparable elements of $X$
(neither $x\leq y$ nor $y\leq x$). Then, there exists a linear order $\leq^{\ast}$ on $X$ which contains all pairs of $\leq$ and all pairs $(\kappa,\lambda)$ for which
$\kappa\leq y$ and $x\leq \lambda$ holds.}
\end{theorem}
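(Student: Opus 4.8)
The plan is to exhibit a single partial order $P$ on $X$ that already contains $\leq$ together with all the prescribed pairs, and then to pull a linear order out of it by a maximality argument. Concretely, I would set
\[
P \;=\; {\leq} \;\cup\; \{(\kappa,\lambda) : \kappa\leq y \text{ and } x\leq\lambda\},
\]
so that $P$ automatically contains every pair of $\leq$ and every pair $(\kappa,\lambda)$ named in the statement; taking $\kappa=y$ and $\lambda=x$ shows in particular that $(y,x)\in P$, i.e.\ the construction resolves the incomparable pair in favour of $y$ preceding $x$. Reflexivity of $P$ is inherited from $\leq$, so the substance of the first stage is to verify that $P$ is transitive and antisymmetric.

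For transitivity I would take $aPb$ and $bPc$ and run through the four cases according to whether each pair is \emph{old} (a pair of $\leq$) or \emph{new}. When both are old, transitivity of $\leq$ suffices; when exactly one is new, one composes the old $\leq$-pair with the defining conditions $\kappa\leq y$, $x\leq\lambda$ to produce a new pair witnessing $(a,c)\in P$; and when both are new, the outer data $a\leq y$ and $x\leq c$ already exhibit $(a,c)$ as a new pair. In every case $aPc$ follows, so $P$ is transitive.

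Antisymmetry is where the hypothesis bites and is the step I expect to be the crux. Supposing $aPb$ and $bPa$ with $a\neq b$, the all-old case contradicts antisymmetry of $\leq$, while every case involving a new pair forces a $\leq$-chain of the form $x\leq a\leq y$ or $x\leq b\leq y$, hence $x\leq y$ by transitivity of $\leq$ — contradicting the incomparability of $x$ and $y$. Thus no such pair exists, and $P$ is a partial order extending $\leq$. Notice that the incomparability hypothesis is used \emph{only} here, which is exactly what makes it indispensable.

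Finally, to upgrade $P$ to a linear order I would invoke Szpilrajn's Lemma applied to $P$ (equivalently, a direct Zorn's Lemma argument): order the family of all partial orders on $X$ that contain $P$ by inclusion, observe that the union of any chain is again such a partial order, and let $\leq^{\ast}$ be a maximal element. If $\leq^{\ast}$ admitted two incomparable elements, applying the very construction above to that pair would yield a strictly larger partial order, contradicting maximality; hence $\leq^{\ast}$ is complete, i.e.\ a linear order. Since $\leq^{\ast}\supseteq P\supseteq{\leq}$ and $\leq^{\ast}$ contains all the prescribed pairs, this is precisely the assertion.
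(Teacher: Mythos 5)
Your proposal is correct and follows essentially the same route as the paper's own argument: you form $P={\leq}\cup\{(\kappa,\lambda):\kappa\leq y,\ x\leq\lambda\}$ (the paper's $\leq\cup(A\times B)$), verify it is a partial order, and then extract a maximal, hence total, extension via Zorn's Lemma exactly as in the paper's second step. The only difference is one of completeness in your favour: you spell out the transitivity case analysis and the antisymmetry check (where incomparability of $x$ and $y$ is used), steps the paper merely asserts.
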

\par
The original proof of the theorem splits into two steps: In the first step, 
if $x$ and $y$ are two incomparable elements of $X$, then
it is constructed a partial order $\leq^{\prime}$
such that every element of 
$A=\{\kappa\in X\vert \kappa\leq y\}$
must lie below, 
with respect to $\leq^{\prime}$, to
every element of 
$B=\{\lambda\in X\vert  x\leq \lambda\}$.
So we take $\leq^{\prime\prime}=\leq\cup \leq^{\prime}=
\leq\cup (A\times B)$, in other words, we include these obvious consequences of putting $y$ below $x$ and no others. Clearly, $R^{\prime}$ is transitive.
In the second step, we see that
if $\sqsubseteq$ is a maximal element (under inclusion) in the set of partial orders extending $\leq^{\prime}$, then, $\sqsubseteq$ must be a total order. Because otherwise,
if
$x$ and $y$
are incomparable in $\sqsubseteq$, then, by the first step, we have an extension $\sqsubseteq^{\prime}$ of $\sqsubseteq$ such that $y\sqsubseteq^{\prime} x$,
a contradiction to the maximality of $\sqsubseteq$. (We have first to show that the union of a chain of posets is a poset. This is a standard Zorn's Lemma argument.)

The crucial point in the original proof of Szpilrajn's Theorem is the relationship of the pair $(x,y)$ with the pairs $(\kappa,\lambda)\in A\times B$
which concludes the transitive axiom for the relation $\leq^{\prime}$.
In fact: ($\alpha$) For every linear extension $\sqsubseteq$ of $\leq$, $y \sqsubseteq x$ implies $\kappa \sqsubseteq \lambda$ and: ($\beta$)
$x<^{\prime}\lambda$ implies $y<^{\prime}\lambda$ and $\kappa <^{\prime} y$ implies $\kappa<^{\prime} x$.
In case ($\alpha$), we say that $(y,x)$ {\it covers} $(\kappa,\lambda)$ and in case ($\beta$), we say that $(y,x)$ ia an {\it uncovered pair} of $\leq^{\prime}$ (see \cite[Lemma 5]{RR}).
The true meaning of the Szpilrajn theorem 
is that,
although it is not constructive, it preserves prescribed properties in the extended relation. 
In fact,
by extending a binary relation $R$, it is interesting to see
whether
the conditions of the underlying space $X$ or the properties which $R$ satisfies
should be inherited when one passes to any member of some
family of linear extensions of $R$.
Moreover, in extending a binary relation relation $R$,
the problem will often be how to incorporate some additional
data depending on the binary relation with a minimum of disruption of the existing structure or how to
extend the relation so that some desirable new condition is fulfilled. 
For example, as shows case ($\beta$) above, if
we might wish to adjoin the pair $(x,y)$ to a transitive relation $R$ that does not already relates $x$ and $y$, 
in order to preserve the relation $R$ the axiom of transitivity,
we must also adjoin all other pairs of the form $(\kappa,\lambda)$ where $(\kappa,y)\in R$ and $(x,\lambda)\in R$.
Generally speaking,
a natural question in a 
extension process is to
ask, when 
a given binary relation $R$ defined on a set of alternatives $X$
will preserve the properties and
the characteristics of $X$, or of $R$.
For instance, 
if we refer to a property ($P$) of $R$,
the answer is affirmative if (P) is the property that 
the chains generated by
$R$ are well-ordered (see \cite{BP}) or if (P) is the property that $x^{\ast}\in X$ is a maximal element of $R$ (see Proposition \ref{a14} below). 
Addressing a slightly different question, we might wish to find conditions under which
the properties which $R$ or $X$ satisfies to be
inherited
when one passes to 
a linear extension of $R$.
For example,
Fucks \cite[Corollary 13]{fuc}
finds conditions under which
homogeneous partial orders can be extended to homogeneous linear orders.
Kontolatou and Stabakis \cite{KS} give an analogue
of the Szpilrajn Lemma for
partially ordered abelian groups.
On the other hand,
many other papers in the
literature deal with the characterization of the set of binary relations which have an ordering
extension that satisfies some additional conditions. 
See, among others, 
Demuynck \cite{dem}
for the additional conditions
of convexity,
monotonicity and homotheticity and Demuynck and Lauwers \cite{DL} for the condition of
linearity. 
If $X$ is endowed with some topology $\tau$ one is mainly interested in
continuous or semicontinuous linear orders or preorders instead of only linear orders or preorders.
In this case,
two natural problems have to be discussed:
(a) Let $R$ be a continuous binary relation defined on a topological space $(X,\tau)$. Determine necessary
and sufficient conditions for $R$ to have a continuous linear order extension;
(b) Determine necessary and sufficient conditions for $\tau$ on $X$ to have
the Szpilrajn property {\rm (}every continuous binary
relation $R$ on $X$ has a continuous linear order extension{\rm )}.
In this direction, some authors
utilize the
method of Szpilrajn to find the conditions under which $\tau$ is preserved in the extended relation.
For example,
Jaffray \cite{jaf} and Bossert, Sprumont, and Suzumura
\cite{BSS} 
provide conditions for the existence of upper semicontinuous extensions
of strict (or weak) orders and consistent binary  
relations, respectively and Herden and Pallack \cite{HP}
provide conditions for the existence of continuous extensions.

In conclusion, there are many types of conditions that one may wish to preserve, or to achieve, in an extension process. These include:
\par
(i) Order theoretic conditions (consistency, acyclicity, transitivity, completeness, e.t.c.);
\par
(ii) Topological conditions (continuity, openness or closedness of the preference sets);
\par
(iii) linear-space conditions (convexity, homogeneity, translation-inva-
\par\noindent
riance).

In the following, we call as {\it inherited type extensions theorems},
all these theorems that preserve, or achieve in an extension process the properties of the original space.
The main feature of these theorems, is that they don't use in their proof the Szpilrajn's Theorem.

On the other hand,
many authors give
generalizations of the Szpilrajn's result by utilizing the original theorem. 
In what follows, we refer to such results as {\it existence type extension theorems}. 
Arrow \cite[page 64]{arr}, Hansson \cite{han} and Fishburn \cite{fis} prove
on the basis of the original Szpilrajn's extension theorem
that the result remains true if
asymmetry is replaced with reflexivity, that is, any quasi-ordering
has an ordering extension.
While the property of being a quasi-ordering is sufficient
for the existence of an ordering extension of a relation, this is
not necessary. As shown by Suzumura \cite{suz}, consistency, 
as it is defined by him,
is
necessary and sufficient for the existence of an ordering extension.
The
existence type extension theorems 
have played an important role
in the theory of choice. One way of assessing whether a preference
relation is rational\footnote{It is well known that the economic
approach to rational behaviour traditionally begins with a preference
relation $R$ and determines the optimal choice function $F$ from
$R$. Revealed preference theory provides another axiomatic approach
to rational behavior by reversing the above procedure.} is to check
whether it can be extended to a transitive and complete relation
(see \cite{cla} and \cite{ric}\footnote{In particular Szpilrajn
theorem is the main tool for proving a known theorem of Richter that
establishes the equivalence between rational and congruous
consumers.}). 
In addition, the Szpilrajn's existence type theorems are applied: (i) By Stehr
\cite{ste} to characterize the global orientability; (ii) By
Sholomov \cite{sho} to characterize ordinal relations; (iii) By
Nehring and Puppe \cite{NP} on a unifying structure of abstract
choice theory; (iv) By Blackorby, Bossert and Donaldson \cite{BBD} in
pure population problems e.t.c.

Dushnik and Miller \cite{DM} use the Szpilrajn's Theorem to
prove the following result:
\begin{theorem}\label{ase}{\rm Let $\leq$ be a partial order on a set $X$. Then, there exists a
collection of linear extensions $\mathcal{F}$ of $R$ 
such that: ($\alpha$) The intersection of the members of $\mathcal{F}$ coincides with $\leq$ and: ($\beta$) 
for every pair of elements $x, y\in  X$ with $x$ incomparable to $y$, there
exists an $Q\in \mathcal{F}$  with $(x,y)\in Q$. 
}
\end{theorem}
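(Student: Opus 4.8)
The plan is to build the realizer $\mathcal{F}$ with one linear extension per incomparable pair, and then to read off both conclusions from the antisymmetry of each member of $\mathcal{F}$. First I would fix, for every ordered pair $(x,y)$ of \emph{incomparable} elements of $X$, a linear extension $Q_{(x,y)}$ of $\leq$ that orders $x$ below $y$, i.e.\ with $(x,y)\in Q_{(x,y)}$. Such an extension is furnished directly by Theorem~\ref{gam}: applying Szpilrajn's Theorem to the incomparable pair, with the roles of the two elements chosen so that the produced pair points in the direction $(x,y)$, yields a linear order on $X$ containing all of $\leq$ together with the desired pair. I would then set $\mathcal{F}$ to be the collection of all these $Q_{(x,y)}$. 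In the degenerate case where $\leq$ has no incomparable pair at all, so that $\leq$ is already a linear order, I would instead take $\mathcal{F}=\{\leq\}$, which keeps $\mathcal{F}$ nonempty and makes condition ($\beta$) vacuously true.

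Condition ($\beta$) is then immediate from the construction: given incomparable $x,y$, the member $Q_{(x,y)}\in\mathcal{F}$ is a linear extension of $\leq$ with $(x,y)\in Q_{(x,y)}$, which is exactly the required witness.

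The substance of the argument lies in ($\alpha$), the claim $\bigcap_{Q\in\mathcal{F}}Q=\leq$. One inclusion is free: every $Q\in\mathcal{F}$ extends $\leq$, so $\leq\subseteq\bigcap_{Q\in\mathcal{F}}Q$. For the reverse inclusion I would argue by contradiction, taking a pair $(a,b)$ in the intersection with $(a,b)\notin\,\leq$. Reflexivity of $\leq$ forces $a\neq b$, and I would then split into two cases according to whether $a$ and $b$ are $\leq$-comparable. If they are incomparable, the witness $Q_{(b,a)}\in\mathcal{F}$ contains $(b,a)$, while $(a,b)$ also lies in $Q_{(b,a)}$ since $(a,b)$ belongs to the intersection; as $a\neq b$, this contradicts the antisymmetry of the linear order $Q_{(b,a)}$. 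If instead $a$ and $b$ are comparable, then $(a,b)\notin\,\leq$ forces $(b,a)\in\,\leq$, whence $(b,a)\in Q$ for every $Q\in\mathcal{F}$; again $(a,b)$ and $(b,a)$ both sit in any such $Q$, contradicting antisymmetry. Either way no such $(a,b)$ exists, so the intersection equals $\leq$.

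The main obstacle I anticipate is not a single deep step but the bookkeeping surrounding ($\alpha$): one must treat the comparable-but-reversed pairs separately from the genuinely incomparable ones, and must not overlook the degenerate case in which $\leq$ is already total, where Theorem~\ref{gam} supplies no new extensions and $\mathcal{F}=\{\leq\}$ must be invoked directly. The only place where Szpilrajn's Theorem enters is the construction of the family $\{Q_{(x,y)}\}$; everything else follows from each member of $\mathcal{F}$ being an antisymmetric total relation that extends $\leq$.
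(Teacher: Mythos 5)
Your proof is correct and follows essentially the same route as the paper: for each incomparable pair you invoke Szpilrajn's Theorem (Theorem~\ref{gam}) to produce a linear extension ordering the pair in the desired direction (the paper takes both $\leq_{xy}$ and $\leq_{yx}$, which matches your family $\{Q_{(x,y)}\}$ over ordered pairs), and the intersection of this family is $\leq$. Your write-up merely supplies details the paper leaves implicit, namely the antisymmetry-based case analysis for the reverse inclusion and the degenerate case where $\leq$ is already total.
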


A family 
$\mathcal{F}$ of linear extensions of $R$
which satisfies conditions ($\alpha$) and ($\beta$) is called a {\it realizer} of $R$.
By the theorem of Szpilrajn, for every pair $(x,y)$ of incomparable elements of $R$ we choose two linear extensions $\leq_{xy}$ and $\leq_{yx}$ for which there holds 
$x\leq_{xy}y$ and $y\leq_{yx}x$.
Then, the intersection of all linear orderings $\leq_{xy}$ and $\leq_{yx}$,
where $(x,y)$ runs through the set of all pairs of incomparable elements of 
$R$ is the relation $\leq$.
But, this
set of linear extensions, has many more elements than necessary.
As a consequence of what we have said above, by using the notion of uncovered pair,
one can obtain a partial order $\leq$ with the intersection of a reduced number of
its linear order extensions. 
The
concept of a realizer $\mathcal F$ of $R$ leads to the definition of dimension of $\leq$.
According to Dushnik and Miller, the {\it dimension} of a partial order
$\leq$ is defined as the minimum
size of a realizer of $\leq$.
In fact, the Dushnik-Miller's theorem
provides a procedure
that represent binary relations as an intersection of a number of linear order extensions equal
to its dimension.
In what follows, given a binary relation $R$, a
Dushnik-Miller {\it existence type extension theorem} means that 
there exists a
collection of linear extensions $\mathcal{F}$ of $R$ 
whose intersection is $R$ and
a
Dushnik-Miller {\it inherited type extension theorem} means that $R$ has a realizer.

Much of economic and social behavior observed is either group behavior or that of an individual acting for a group.
Group preferences may be regarded as derived from individual preferences, by means of some process of aggregation.
For example, if all voters agree that some alternative $x$ is
preferred to another alternative $y$, then the majority rule will return this ranking.
In this case, there is one simple condition that is nearly always assumed
called the {\it principle of unanimity}\footnote{Let $(R_1,R_2,...,R_n)$ be a fixed profile of
the individual preference relations. A binary relation $Q$, is
called \textit{Pareto unanimity relation}, if
\begin{center}
$xQy\Leftrightarrow xR_iy$ for all $i\in \{1,2,...,n\}$ and all $x,y
\in X$.
\end{center}
If $R_1,R_2,...,R_n$ are transitive then $Q$ is quasi-transitive.}
or {\it Pareto principle}. This declares that the preference relation for a group
of individuals should include the intersection of their individual preferences. Another example of the use of intersections is in
the description
of
simple games
which can be represented as the intersection of
weighted majority games \cite{FP}.
Dushnik-Miller existence type theorems have been given by many authors. For example,
the sufficient part of Suzumuras's extension result,
was subsequently used by Donaldson and Weymark
\cite{DW} in their proof that every quasi-ordering is the
intersection of a collection of orderings; this result extends
Dushnik and Miller's fundamental observation on intersections of
strict linear orders.
Duggan \cite{dug} proves a general Dushnik-Miller existence type
theorem from which the above results -and several new ones- can be obtained as
special cases.
On the existence of a social
welfare ordering for a fixed profile in the sense of Bergson and
Samuelson, Weymark \cite{wey} applies Dushnik and Miller extension
theorem in order to prove a generalization of Moulin's Pareto
extension theorem. 

In this paper, 
we introduce the notion of 
$\Lambda(m)$-consistency, where $m$ belongs to the set of
all ordinals less than or equal to the first infinite ordinal $\omega$ and
we characterize: ($\mathfrak{a}$) The existence of a general inherited type 
theorem
on extending binary relations and: ($\mathfrak{b}$) The existence of a realizer for a binary relation.
The results of the two given 
general inherited type 
theorems
on extending binary relations,
namely, 
the Szpilrajn type extension theorem and the Dushnik-Miller type extension
theorem, generalize 
all the well known existence and inherited
type extension theorems in the literature.
We also give examples in a general context 
to highlight the importance of the
inherited
type extension theorems and to
illustrate its difference
from the notion of the
existence
type extension theorems.

\section{\protect\bigskip Notations and definitions}

Let $X$ be a non-empty universal set of alternatives, and let
$R\subseteq X\times X$ be a binary relation on $X$. We sometimes
abbreviate $(x,y)\in R$ as $xRy$. 
The {\it composition} of two binary relations $R_1$ and $R_2$ is given by $R_1\circ R_2$ where $(x,y)\in R_1\circ R_2$ 
if and only if there exists $z\in X$ such that $(x,z)\in R_1$ and $(z,y)\in R_2$. A binary relation $R$ can always be composed with itself, that is
$R\circ R=R^2$. This can be generalized to a relation $R^m$ on $X$ where $R^m=R\circ R\circ...\circ R$ ($m$-times).
Let $P(R)$ and $I(R)$ denote,
respectively, the {\it asymmetric part} of $R$ and the {\it
symmetric part} of $R$, which are defined, respectively, by
$P(R)=\{(x,y)\in X\times X \vert (x,y)\in R$ and $(y,x)\notin R\}$
and $I(R)=\{(x,y)\in X\times X \vert (x,y)\in R$ and $(y,x)\in R\}$.
Let also $\Delta=\{(x,x)\vert x\in X\}$ denotes the diagonal ox $X$.
An element $x\in X$ is called {\it maximal} if for all $y\in X$, $yRx$ implies $xRy$.
We say that $R$ on $X$ is (i) {\it reflexive} if for each $x\in X$
$(x,x)\in R$; (ii) {\it irreflexive} if we never have $(x,x)\in R$;
(iii) {\it transitive} if for all $x,y,z\in X$, [$(x,z)\in R$ and
$(z,y)\in R$] $\Longrightarrow (x,y)\in R$; (iv) {\it antisymmetric}
if for each $x,y\in X$, [$(x,y)\in R$ and $(y,x)\in R$]
$\Longrightarrow x=y$; (v) {\it total} if for each $x,y\in X$,
$x\neq y$ we have $xRy$ or $yRx$. (vi) {\it complete} if
for each $x,y\in X$, we have $xRy$ or $yRx$. 
It follows that $R$ is complete if and only if it is reflexive and total.
The {\it transitive
closure} of a relation $R$ is denoted by $\overline{R}$, that is for
all $x,y\in X, (x,y)\in \overline{R}$ if there exist $m\in
\mathbb{N}$ and $z_{_0},..., z_{_m}\in X$ such that $x=z_{_0},
(z_{_k},z_{_{k+1}})\in R$ for all $k\in \{0,...,m-1\}$ and
$z_{_m}=y$. Clearly, $\overline{R} $ is transitive and, because the
case $m=1$ is included, it follows that $R\subseteq \overline{R}$.
{\it Acyclicity} says that there do not exist $m$ and $z_{_0},
z_{_1},...,z_{_{m}}\in X$ such that $x=z_{_0},\
(z_{_k},z_{_{k+1}})\in R$ for all $k\in \{0,...,m-1\} $ and
$z_{_m}=x$. The relation $R$ is $S$-{\it consistent} (consistent in the sense of Suzumura \cite{suz}), if for all $x,y\in
X,$ for all $m\in \mathbb{N}$, and for all $z_{_0}, z_{_1},...,
z_{_m}\in X$, if\ $x=z_{_0}, (z_{_k},z_{_{k+1}})\in R$ for all $k\in
\{0,...,m-1\}$ and $z_{_m}=y$, we have that $(y,x)\notin P(R)$. 
The following combination of properties are considered in the next
theorems. A binary relation $R$ on $X$ is (i) {\it quasi-ordering} if
$R$ is reflexive and transitive; (ii) {\it ordering} if $R$ is a
total quasi-ordering; (iii) {\it partial order} if $R$ is
an antisymmetric quasi-ordering; (iv) {\it linear order} if $R$ is a
total partial order; (v) {\it strict partial order} if
$R$ is irreflexive and transitive. (vi) {\it strict linear order} if
$R$ is a total strict partial order. 
A binary relation $Q$
is an {\it extension} of a binary relation $R$ if and only if
$R\subseteq Q$ and $P(R)\subseteq P(Q)$. 
If an extension $Q$ of $R$ is an ordering, we call it an {\it ordering extension} of $R$, and if $Q$
is an extension of $R$ that is a linear order, we refer to it as a {\it linear order extension} or $R$.
In fact, an extension $Q$ of $R$ subsumes all the pairwise
information provided by $R$, and possibly further information.

The following definitions may be seen as natural extensions of classical definitions used in the partial order case.
Let $inc(R)=\{(x,y)\in X\times X\vert (x,y)\notin R$ and $(y,x)\notin R\}$ be the set of incomparable pairs of $R$.
The set of all of the linear extensions of
$R$ is denoted by $\mathcal{Q}$.
For $(x,y)$ and $(\kappa,\lambda)\in  inc(\overline{R})$ we write $((x,y),(\kappa,\lambda))\in F$- in words
$(x, y)$ {\it covers} $(\kappa, \lambda)$- if for every linear extension $Q$ of $R$, $(x,y)\in Q$ implies $(\kappa,\lambda)\in Q$.
We call a maximal element $(x^{\ast},y^{\ast})$ of $(inc(\overline{R}),F)$, i.e., 
$(x^{\ast},y^{\ast})$ in $\mathcal{M}(inc(\overline{R}),F)$,
an {\it uncovered pair} of $R$. 
By $F_{(x,y)}$ we denote the set $\{(\kappa,\lambda)\in  inc(\overline{R})\vert ((x,y),(\kappa,\lambda))\in F\}$.
Any subset $\mathcal{F}\subseteq \mathcal{Q}$ is a
{\it realizer} of $R$ if and only if: ($\bar{\alpha}$)
The intersection of the members of $\mathcal{F}$ coincides with $R$ and: ($\bar{\beta}$) 
for every pair $(x,y)\in inc(R)$, $x, y\in  X$, there
exists an $Q\in \mathcal{F}$  with $(x,y)\in Q$. The {\it dimension} of a binary relation $R$ (see \cite[Page 601] {DM}) is the smallest number of linear orderings whose
intersection is $\overline{R}$. 

Let $R$ be a binary relation defined on a topological space $(X,\tau)$.
We say that $R$ is {\it continuous}, if it is a closed subset of $X\times X$. This is the same thing as
saying that for every point $x\in X$, both sets $\{y\in X\vert xRy\}$
and $\{y\in X\vert yRx\}$ are closed subsets of $X$ (see \cite[Proposition 1]{nac}).
We say that $R$ is {\it upper semicontinuous} if for all $y\in X$, the set $\{x\in X\vert (x,y)\in P(R)\}$ is open in $X$. 
In general, there is no relationship between a binary relation and a topology on a space. However, there
is one topology that is inherently connected with a total order $R$, called the {\it order topology}, which is
generated by the subbase consisting of all sets of the form
$\{x\in X\vert xP(R) a\}$ and $\{x\in X\vert bP(R) x\}$, where $a$ and $b$ are points of $X$.
The space $(X,\tau)$ )is {\it compact} if for each collection of $\tau$-open sets which cover $X$ there exists a finite subcollection that also covers $X$.

\vfill\eject

\section{\protect\bigskip The extension theorems}

In the context of examining if the individualistic assumptions used in economics can be used in the aggregation of individual 
preferences (\cite [Definition 5, Theorem 2]{arr}, Arrow
proved a key lemma that extends the famous Szpilrajn's Theorem.

\par\noindent
{\bf Arrow's lemma}.\cite[pp. 64-68]{arr}. Let $R$ be a quasi-ordering on $X$, $Y$ a subset of $X$ such that, if $x\neq y$ and $x, y\in Y$, then $(x,y)\notin R$,
and $T$ an ordering on $Y$. Then, there exists an ordering extension $Q$ such that $Q/Y=T$.

In fact, the lemma says that, if $R$ is a binary relation
defined on a set of alternatives $X$, then given
any ordering $T$ to any subset $Y$ of incomparable elements of $R$,
there is a way of ordering all the alternatives which will be compatible both with $R$ and with the given ordering $T$ in  $Y$. In this case, it is important that
 the linear extension of $R$ inherits the relationship we put between the incomparable elements of $R$.

Arrow's generalization of the Szpilrajn's extension theorem as well as 
all the well known generalizations of this theorem, use in their proof the Szpilrajn theorem itself.
This procedure lead us in existence type extension theorems.

In the following $\omega$ denotes 
the first infinite ordinal
which comes after 
all natural numbers, that is, 
the order type of the natural numbers under their usual linear ordering. 
By $\Omega_{_0}$ we denote the set $\{\omega, 1, 2, 3, ......\}$.

A great deal of work in computational economics and Computational social science has been done in an attempt to find a fast algorithm
to count the exact number of linear extensions of a partial order, as well as, to find an efficient algorithm to compute the dimension of a partial order.
In this direction,
we give two general inherited type extension theorems, by reducing the path length of the transitive closure in the definition of $S$-consistency
to a minimum,
without losing information. To be more precise, we give the following definition.

\begin{definition}\label{a54}{\rm Let $R$ be a binary relation on a set $X$, let $m\in \Omega_{_0}$ and let $x, y\in X$.
We say that: (i) $R$ is $m$-{\it consistent}, if for all $x,y\in
X$ and for all $z_{_0}, z_{_1},...,
z_{_m}\in X$, if\ $x=z_{_0}, (z_{_k},z_{_{k+1}})\in R$ for all $k\in
\{0,...,m-1\}$ and $z_{_m}=y$, 
we have that $(y,x)\notin P(R)$; (ii) $R$ has the $m$-{\it rank of symmetry}
if for each $n\geq m$ we have $I(R^n)=I(R^m)$;
(iii) $R$ is $\Lambda(m)$-{\it consistent} if $m$ is the largest natural satisfying $m$-consistency and $m$-rank of symmetry.}
\end{definition}

\begin{remark}\label{IAM}{\rm If $R$ is $\Lambda(m)$-consistent, then it is $m^\prime$-consistent for all $1\leq m^\prime\leq m$.
Therefore, if there exist 
$x, \mathfrak{g}_{_0}, \mathfrak{g}_{_1},...,\mathfrak{g}_{_{m^{\prime}}}\in X$ such that
$x=\mathfrak{g}_{_0}, (\mathfrak{g}_{_k},\mathfrak{g}_{_{k+1}})\in R$ for all $k\in
\{0,...,m^\prime-1\}$, and $\mathfrak{g}_{_{m^{\prime}}}=x$, 
then we have that $(\mathfrak{g}_{_k},\mathfrak{g}_{_{k+1}})\in I(R).$}
\end{remark}

The following proposition is evident from Definition \ref{a54}(i).
\begin{proposition}\label{1a1}{\rm Let $X$ be a non empty set and let $m\in\Omega_{_0}$ . 
A binary relation $R$ on $X$ is $m$-consistent if and only if $P(R)\subseteq P(R^m)$.}
\end{proposition}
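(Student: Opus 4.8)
The plan is to read Definition~\ref{a54}(i) verbatim and translate it into a statement about the relation $R^m$. By the definition of the $m$-fold composition, the hypothesis that there exist $z_0,\dots,z_m$ with $x=z_0$, $(z_k,z_{k+1})\in R$ for all $k\in\{0,\dots,m-1\}$ and $z_m=y$ is \emph{exactly} the assertion $(x,y)\in R^m$. Hence $m$-consistency says precisely: for all $x,y\in X$, if $(x,y)\in R^m$ then $(y,x)\notin P(R)$. First I would rewrite this by contraposition as: whenever $(y,x)\in P(R)$ we have $(x,y)\notin R^m$; equivalently, for $(x,y)\in P(R)$ we have $(y,x)\notin R^m$. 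This isolates the ``antisymmetry'' content of consistency and is precisely the half that matches the requirement $(y,x)\notin R^m$ occurring in the definition of $P(R^m)$.

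For the direction $m$-consistent $\Rightarrow P(R)\subseteq P(R^m)$, I would take $(x,y)\in P(R)$ and verify the two conditions defining membership in $P(R^m)$. That $(y,x)\notin R^m$ is exactly the contrapositive reformulation above. That $(x,y)\in R^m$ is the remaining point, and it follows once we know $R\subseteq R^m$, since then $P(R)\subseteq R\subseteq R^m$. Combining the two yields $(x,y)\in R^m$ and $(y,x)\notin R^m$, i.e.\ $(x,y)\in P(R^m)$.

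For the converse $P(R)\subseteq P(R^m)\Rightarrow m$-consistent, I would argue by contradiction: suppose $(x,y)\in R^m$ but $(y,x)\in P(R)$. Then $(y,x)\in P(R)\subseteq P(R^m)$, so by definition of $P(R^m)$ we get $(x,y)\notin R^m$, contradicting $(x,y)\in R^m$. Hence $(y,x)\notin P(R)$, which is exactly $m$-consistency.

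The step I expect to require genuine care is the inclusion $P(R)\subseteq R^m$ used in the forward direction; the asymmetry half is immediate from unwinding the definition, but membership of a pair of $P(R)$ in $R^m$ is not automatic for a composition of length \emph{exactly} $m$ (a single edge need not extend to a chain of length $m$). I would dispatch it by establishing $R\subseteq R^m$: for $m=1$ this is trivial, for $m=\omega$ it is $R=R^1\subseteq\overline{R}$, and for finite $m$ one pads a single edge $(x,y)\in R$ into a chain $x,y,y,\dots,y$ of length $m$ using a loop at $y$, which is available when $R$ is reflexive (the case of the quasi-orderings that drive the paper). This is the only point where the equivalence rests on more than reading off Definition~\ref{a54}(i), so I would make the ambient hypothesis securing $R\subseteq R^m$ explicit.
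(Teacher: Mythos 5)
Your proof is correct, and since the paper offers no proof at all (it simply declares the proposition ``evident from Definition \ref{a54}(i)''), your unwinding is essentially the intended argument: the chain condition in the definition is literally membership in $R^m$, the contrapositive gives the half $(x,y)\in P(R)\Rightarrow (y,x)\notin R^m$, and your two-line contradiction handles the converse direction, which indeed needs no extra hypothesis. The caveat you flag, however, is not pedantry but a genuine gap in the statement as printed: for finite $m\geq 2$ the inclusion $R\subseteq R^m$ can fail, and then so does the proposition. In fact the paper's own Example \ref{panay} is a counterexample to the unqualified claim: $G$ there is $2$-consistent, yet $(x_1,x_2)\in P(G)$ while $(x_1,x_2)\notin G^2$ (the only $G$-successors of $x_2$ are $x_3$ and $x_5$), so $P(G)\not\subseteq P(G^2)$. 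The statement is therefore true only under a hypothesis securing $R\subseteq R^m$ --- reflexivity, which the paper invokes ``without loss of generality'' in the proof of Theorem \ref{a2} and which is exactly the padding trick you describe (repeating the loop at $y$ to stretch a single edge into a chain of length $m$) --- or in the extreme cases $m=1$ and $m=\omega$, as you note. Making that hypothesis explicit, as you propose, is the right call; your write-up is complete and strictly more careful than the paper's.
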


If $m=\omega$, then
$R^\omega=\displaystyle\bigcup_{k=1}^{\infty}R^k=\overline{R}$. 
Since
 $I(R^\omega)=I(\overline{R})=I(R^{\omega^\prime})$ holds
for all 
ordinals $\omega^{\prime}\geq\omega$, Definition \ref{a54} and Proposition \ref{1a1} imply
the following proposition.

\begin{proposition}\label{0a1}{\rm A binary relation $R$ is $\Lambda(\omega)$-consistent if and only if $R$ is $S$-consistent.}
\end{proposition}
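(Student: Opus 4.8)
The plan is to reduce the claimed equivalence to the single identity $R^{\omega}=\overline{R}$ together with Proposition \ref{1a1}, and then to show that both $\Lambda(\omega)$-consistency and $S$-consistency are equivalent to the inclusion $P(R)\subseteq P(\overline{R})$.

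First I would dispose of the two auxiliary clauses hidden in the definition of $\Lambda(\omega)$-consistency. Because $\omega$ is the greatest element of $\Omega_{_0}$, the requirement that $\omega$ be the \emph{largest} index satisfying the two conditions is vacuous: there is no larger candidate to exclude. Moreover the $\omega$-rank of symmetry condition, namely $I(R^{n})=I(R^{\omega})$ for all $n\geq\omega$, holds automatically, since $I(R^{\omega})=I(\overline{R})=I(R^{\omega'})$ for every ordinal $\omega'\geq\omega$ as noted before the statement. Hence $R$ is $\Lambda(\omega)$-consistent if and only if $R$ is $\omega$-consistent, and by Proposition \ref{1a1} the latter is equivalent to $P(R)\subseteq P(R^{\omega})$. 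Using $R^{\omega}=\overline{R}$, this reads $P(R)\subseteq P(\overline{R})$.

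It then remains to identify $S$-consistency with the same inclusion. The key observation is that a chain $x=z_{_0},\ (z_{_k},z_{_{k+1}})\in R,\ z_{_m}=y$ of some length $m\in\mathbb{N}$ exists precisely when $(x,y)\in\overline{R}$; thus $S$-consistency says exactly that $(x,y)\in\overline{R}$ implies $(y,x)\notin P(R)$. For the forward direction I would take $(a,b)\in P(R)$; then $(a,b)\in R\subseteq\overline{R}$, and if we had $(b,a)\in\overline{R}$, applying $S$-consistency to this pair would give $(a,b)\notin P(R)$, a contradiction, so $(b,a)\notin\overline{R}$ and $(a,b)\in P(\overline{R})$. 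For the converse, if $(x,y)\in\overline{R}$ while $(y,x)\in P(R)\subseteq P(\overline{R})$, then $(x,y)\notin\overline{R}$, again a contradiction. Chaining the two equivalences yields the proposition.

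The routine parts are the unwinding of the definitions; the only point that requires care is the bookkeeping of the $\Lambda(\omega)$ clause—confirming that both the maximality requirement and the rank-of-symmetry requirement are automatically met at $m=\omega$—so that $\Lambda(\omega)$-consistency collapses cleanly to $\omega$-consistency before Proposition \ref{1a1} is invoked.
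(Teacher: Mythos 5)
Your proof is correct and follows essentially the same route as the paper: the paper's own (very terse) argument likewise notes that $R^{\omega}=\bigcup_{k=1}^{\infty}R^{k}=\overline{R}$, that the $\omega$-rank of symmetry is automatic because $I(R^{\omega})=I(\overline{R})=I(R^{\omega^{\prime}})$ for all ordinals $\omega^{\prime}\geq\omega$, and then appeals to Definition \ref{a54} and Proposition \ref{1a1}. Your only addition is to spell out the definitional identification of $S$-consistency with the inclusion $P(R)\subseteq P(\overline{R})$ (via contraposition), a step the paper leaves implicit.
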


As shows in the following example, an $m$-consistent binary relation is not an $S$-consistent one.

\begin{example}\label{panay}{\rm Let $X=\{x_{_1},x_{_2},x_{_3},x_{_4},x_{_5}\}$ and 
\begin{center}
$G=\{(x_{_1},x_{_2}),(x_{_2},x_{_3}),(x_{_3},x_{_4}),(x_{_4},x_{_3}),(x_{_4},x_{_1}),
(x_{_2},x_{_5})\}$.
\end{center}
}
Clearly, $G$ is a $2$-consistent binary relation but not an $S$-consistent one.
\end{example}

\begin{theorem}\label{a2}{\rm Let $R$ be a binary relation on $X$, $m\in\Omega_{_0}$ and $Y$ a subset of $X$ such that, if $x\neq y$ and $x, y\in Y$, then $(x,y)\notin \overline{R}$,
and $T$ an ordering on $Y$. Then, there exists an ordering extension $Q$ of $R$ such that $Q/Y=T$ if and only if $R$ is a $\Lambda(m)$-consistent binary relation.}
\end{theorem}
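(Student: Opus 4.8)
The plan is to route the proof through Suzumura's characterization, namely that a binary relation admits an ordering extension precisely when it is $S$-consistent. The backbone is therefore a lemma asserting that, for every $m\in\Omega_{_0}$, the relation $R$ is $\Lambda(m)$-consistent if and only if $R$ is $S$-consistent; granting this, the statement becomes a constrained version of Suzumura's theorem that additionally prescribes the trace $T$ on the $\overline{R}$-antichain $Y$, i.e.\ a common generalization of Arrow's lemma.

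First I would prove the hard implication of the lemma: if $R$ is $\Lambda(m)$-consistent with $m$ finite, then $R$ is $S$-consistent. By Remark \ref{IAM} the hypothesis already grants $m'$-consistency for every $1\le m'\le m$, so any failure of $S$-consistency must arise from a \emph{bad cycle}, i.e.\ a closed walk $z_{_0},z_{_1},\dots,z_{_M}=z_{_0}$ in $R$ carrying a strict edge $(z_{_0},z_{_1})\in P(R)$; a shortest such cycle has return length $M-1>m$. The engine is the $m$-rank of symmetry $I(R^{n})=I(R^{m})$ for all $n\ge m$. Inspecting a nontrivial strongly connected component of $R$ through its period $d$, the presence of a diagonal pair $(u,u)$ in $I(R^{n})$ requires $d\mid n$, so if $d>1$ the sets $I(R^{n})$ could not be constant for all $n\ge m$; hence every nontrivial component is aperiodic, some power of $R$ is full on each such component, and $I(R^{m})=I(R^{n})$ for large $n$ then forces $(u,v)\in R^{m}$ for every pair $u,v$ in one component. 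Since the whole bad cycle lies in a single nontrivial component, we get $(z_{_1},z_{_0})\in R^{m}$, that is, a walk of length exactly $m$ from $z_{_1}$ to $z_{_0}$, and $m$-consistency yields $(z_{_0},z_{_1})\notin P(R)$, a contradiction. The remaining cases are immediate: for $m=\omega$ the lemma is exactly Proposition \ref{0a1}, and the converse (that $S$-consistency yields $\Lambda(m)$-consistency for some $m\in\Omega_{_0}$) also follows from Proposition \ref{0a1} by taking $m=\omega$.

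With the lemma in hand I would settle the theorem. For the ``only if'' direction, an ordering extension $Q$ of $R$ is transitive and complete with $P(R)\subseteq P(Q)$, whence $\overline{R}\subseteq Q$; if some $(y,x)\in P(R)$ satisfied $(x,y)\in\overline{R}$ we would get both $(x,y)\in Q$ and $(x,y)\notin Q$, so $R$ is $S$-consistent and, by the lemma, $\Lambda(m)$-consistent. For the ``if'' direction I would run an Arrow--Suzumura construction on $R^{\ast}:=\overline{R\cup T}$. The step I expect to be the main obstacle is checking that $R\cup T$ is again $S$-consistent and that $R^{\ast}|_{Y}=T$: because $Y$ is an $\overline{R}$-antichain, no two distinct elements of $Y$ are joined by an $R$-path, so along any cycle of $R\cup T$ the successive $Y$-vertices must be linked by $T$-edges while every $R$-excursion returns to its starting vertex. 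Transitivity of the ordering $T$ then collapses those $Y$-vertices into a single $T$-indifference class, and $S$-consistency of $R$ rules out a strict edge on the $R$-excursions, so no bad cycle survives and the trace of $R^{\ast}$ on $Y$ is exactly $T$.

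Finally I would extend $R^{\ast}$ to an ordering. Since $R^{\ast}$ is $S$-consistent, Suzumura's theorem (the reflexive-case analogue of the machinery behind Theorem \ref{gam}, cf.\ \cite{suz} and \cite{arr}) produces an ordering extension $Q$ with $R\subseteq R^{\ast}\subseteq Q$ and $P(R)\subseteq P(R^{\ast})\subseteq P(Q)$. It remains to verify $Q/Y=T$: as $T$ is total on $Y$ it already decides every pair of $Y$, and the extension can be performed without creating new comparabilities inside the $\overline{R}$-antichain $Y$, so the prescribed trace is preserved. The delicate part throughout is the finite-$m$ half of the lemma; once the rank-of-symmetry argument forces aperiodicity of the nontrivial components, the remainder reduces to the now-classical consistent-extension technology together with the bookkeeping that keeps $Y$ an antichain so that $T$ survives intact.
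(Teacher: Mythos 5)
Your proposal is essentially correct, but it follows a genuinely different route from the paper's. The paper proves the ``if'' direction directly and self-containedly: it adjoins to $R$ the pairs forced by imposing $T$ on $Y$ (the relation $R^{\ast}=R\cup\widehat{R}$), verifies that $R^{\ast}$ is still $\Lambda(m)$-consistent, applies Zorn's lemma to the family of $\Lambda(m)$-consistent extensions containing $R\cup T$, and shows that the transitive closure $\overline{Q}$ of a maximal element is the required ordering, the rank-of-symmetry hypothesis entering through the decomposition $\overline{Q}=Q^m\cup\bigcup_{p>m}P(Q^p)$; it never invokes Szpilrajn's or Suzumura's theorems, which is precisely the ``inherited type'' methodology the paper advertises. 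You instead prove a reduction lemma -- for finite $m$, $\Lambda(m)$-consistency implies $S$-consistency -- and then delegate to the known Suzumura/Arrow machinery (the $m=\omega$ case, Proposition \ref{0a1}, and \cite{suz}, \cite{suz0}) applied to $\overline{R\cup T}$. Your lemma is true, and its mechanism (the $m$-rank of symmetry forces every vertex lying on a cycle to have period $1$; padding with closed walks then places the reverse of a strict edge inside $I(R^n)=I(R^m)\subseteq R^m$, so $m$-consistency kills the bad cycle) is sound. Your route buys brevity, given the existing $S$-consistency theory, and a structural insight the paper obscures: for finite $m$ the hypothesis of Theorem \ref{a2} never reaches beyond $S$-consistency, so the theorem is not a genuine strengthening of Suzumura's result; in particular the relation $G$ of Example \ref{panay}, invoked in Remark \ref{mnha}, is not $\Lambda(2)$-consistent at all (its four-element strongly connected component has period $2$, so $I(G^n)$ never stabilizes), contrary to what the paper suggests. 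What the paper's route buys is self-containedness, uniformity in $m$, and fidelity to its own program of not using existence-type theorems as black boxes.

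Two caveats on your write-up. First, the step ``some power of $R$ is full on each such component'' is valid only for finite components, and $X$ is not assumed finite anywhere in the paper; replace it by the local argument you actually need -- the closed walks at a single vertex $z_{_0}$ of the bad cycle have lengths whose gcd is $1$, hence realize every sufficiently large length, which already puts both $(z_{_0},z_{_1})$ and $(z_{_1},z_{_0})$ in $R^n$ for all large $n$. Second, in the ``only if'' direction you conclude ``$S$-consistent, hence by the lemma $\Lambda(m)$-consistent''; for a fixed finite $m$ this implication is false (the relation $\{(a,b),(b,a)\}$ on a two-point set is $S$-consistent but has no finite rank of symmetry), and only $\Lambda(\omega)$-consistency follows. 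This is not a defect relative to the paper -- the paper's own proof of that direction likewise delivers only $\Lambda(\omega)$-consistency, so the stated ``if and only if'' can only be read with $m$ existentially quantified -- but your text should state this explicitly rather than appeal to a biconditional lemma quantified ``for every $m$''. Similarly, the closing claim that the extension ``can be performed without creating new comparabilities inside $Y$'' is better replaced by the precise observation that $P(T)\subseteq P(R\cup T)\subseteq P(Q)$, which together with completeness of $T$ on $Y$ forces $Q/Y=T$.
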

\begin{proof}To prove necessity, let $R$ be a $\Lambda(m)$-consistent binary relation on $X$.
Without loss of generality, we can assume that $R$ is reflexive (see \cite[Lemma, Page 387]{szp}).
We put

\begin{center}
$R^{\ast}=R\cup\{(\kappa,\lambda)\vert \kappa \overline{R}y\ {\rm and} \ x \overline{R}\lambda, {\rm where} \ x,y\in Y\ {\rm and} \ \ (y,x)\in T\}=R\cup  \widehat{R}.$
\end{center}
Since $R$ is reflexive, we have $(y,x)\in \widehat{R}$ and $x\neq y$ for all $x, y\in Y$.
It is easy to check that 
\begin{center}
$(R^{\ast})^m=R^m\cup\{(\kappa,\lambda)\vert \kappa \overline{R}y\ {\rm and} \ x \overline{R}\lambda, {\rm where} \ x,y\in Y\ {\rm and} \ \ (y,x)\in T\}=R^m\cup  \widehat{R}.$
\end{center}
By the definition of $R^{\ast}$, we have
$\kappa\neq \lambda$, because otherwise $(x,y)\in \overline{R}\circ \overline{R}=\overline{R}$, a contradiction.
We first prove that $R^{\ast}$
$R^{\ast}$ is $\Lambda(m)$-consistent. Indeed, suppose to the contrary that 
there are alternatives
$\nu, z_{_0},z_{_1},z_{_2},...,z_{_m}\in X$ such that
\begin{center}
$\nu=z_{_0}P(R\cup \widehat{R})z_{_1}(R\cup \widehat{R})z_{_2}...(R\cup \widehat{R})z_{_m}=\nu.$
\end{center}
Since $R$ is 
$\Lambda(m)$-consistent, there must exists $k=0,1,2,...,m-1$ such
that $(z_{_k},z_{_{k+1}})\in \widehat{R}$ and for all $i\in \{0,1,2,...,m-1\}$
with $i\neq k$, $(z_{_i},z_{_{i+1}})\in R^{\ast}$ if and only if
$(z_{_i},z_{_{i+1}})\in R$. 
It follows that $(x,y)\in \overline{R}$, a contradiction.
It remains to prove that for each $n\geq m$ there holds $I((R^{\ast})^n)=I((R^{\ast})^m)$.
Let $\kappa, \lambda\in X$ and $n\in \mathbb{N}$ be such that $(\kappa,\lambda)\in I((R^{\ast})^n)$.
Then, we have four cases to consider:

\par\smallskip\par\noindent
{\it Case 1.}  $(\kappa,\lambda)\in R^n$ and $(\lambda,\kappa)\in  R^n$.
It follows that $(\kappa,\lambda)\in I(R^m)\subseteq I((R^{\ast})^m)$.
\par\smallskip\par\noindent
{\it Case 2.} $(\kappa,\lambda)\in R^n$, $(\lambda,\kappa)\in  \widehat{R}$.
It follows that $(\kappa,\lambda)\in R^m$, $(x,\kappa)\in\overline{R}$ and $(\lambda,y)\in \overline{R}$. Therefore, $(x,y)\in \overline{R}$ which is impossible.
\par\smallskip\par\noindent
{\it Case 3.} It is similar to case 2.
\par\smallskip\par\noindent
{\it Case 4.} In this case we have 
$(\kappa,y)\in \overline{R}$, 
$(x,\lambda)\in \overline{R}$,
$(\lambda,y)\in \overline{R}$
and $(x,\kappa)\in \overline{R}$. It follows that $(x,y)\in \overline{R}$ which is impossible.
Therefore, $I((R^{\ast})^n)=I(R^n)=I(R^m)\subseteq I((R^{\ast})^m)$.
The last conclusion shows that
$R^{\ast}$ is a $\Lambda(m)$-consistent binary relation on $X$
satisfying $R\cup T\subseteq R^{\ast}$.
We now prove that $R^{\ast}$ is
an extension of $R$, that is,
$R\subseteq R^{\ast}$ and $P(R)\subseteq P(R^{\ast})$.
The first is obvious from the definition of $R^{\ast}$. To prove the second, let 
$(\kappa,\lambda)\in P(R)$. Then, $(\kappa,\lambda)\in P(R)\subseteq R\subseteq R^{\ast}$.
Suppose to the contrary that $(\kappa,\lambda)\notin P(R^{\ast})$. It follows that $(\lambda,\kappa)\in R^{\ast}$.
We have two cases to consider:
($\alpha$) $(\lambda,\kappa)\in R$; ($\beta$) $\lambda \overline{R}y$ and  $x \overline{R}\kappa$.
In case (i), we have a contradiction to $(\kappa,\lambda)\in P(R)$.
In case (ii), 
$\lambda \overline{R}y\ {\rm and} \ x \overline{R}\kappa$ jointly to and $(\kappa,\lambda)\in P(R)$ implies that 
$(x,y)\in \overline{R}\circ P(R)\circ \overline{R}\subseteq \overline{R}$ which is impossible.
The last contradiction shows that $(\kappa,\lambda)\in P(R^{\ast})$ which implies that $P(R)\subseteq P(R^{\ast})$.

Suppose that
$\widetilde{{\mathcal{R}}}=\{\widetilde{R}_i\vert i\in I\}$ denote the
set of $\Lambda(m)$-consistent extensions of $R$ such that $R\cup T\subseteq \widetilde{R}_i$.
Since $R^{\ast}\in \widetilde{{\mathcal{R}}}$ we have that $\widetilde{{\mathcal{R}}}\neq \emptyset$.
Let ${{\mathcal{Q}}}=(Q_{_i})_{_{i\in
I}}$ be a chain in $\widetilde{{\mathcal{R}}}$, and let
$\widehat{Q}=\displaystyle\bigcup_{i\in I}Q_{_i}$. 
We prove that $\widehat{Q}\in \widetilde{{\mathcal{R}}}$.
Clearly,
$R\cup T\subseteq \widehat{Q}$.
To  prove that 
$\widehat{Q}$ is a $\Lambda(m)$-consistent extension of $R$, we first show that $\widehat{Q}$ is $m$-consistent.
Indeed, 
suppose to the contrary that there are alternatives
$\mu, \gamma_{_0}, \gamma_{_1}, \gamma_{_2},..., \gamma_{_{m}}\in X$ such that
\begin{center}
$\mu=\gamma_{_0}P(\widehat{Q})\gamma_{_1}\widehat{Q}\gamma_{_2}...\widehat{Q}\gamma_{_m}\widehat{Q}\gamma_{_{m}}=\mu.$
\end{center}
Consider the largest $i$ for which there exist such $\mu, \gamma_{_0}, \gamma_{_1}, \gamma_{_2},..., \gamma_{_{m+1}}$.
It follows that $Q_{_i}$ is non $m$-consistent, a
contradiction.
Therefore, $\widehat{Q}$ is $m$-consistent.
On the other hand, if $n\geq m$, then $I(Q_i^n)=I(Q_i^m)$ for all $i\in I$ implies that
$I((\displaystyle\bigcup_{i\in I}Q_i)^n)=I((\displaystyle\bigcup_{i\in I}Q_i)^m)$.
Indeed, let $\kappa, \lambda\in X$ such that
$(\kappa,\lambda)\in  I((\displaystyle\bigcup_{i\in I}Q_{_i})^n)$
and $(\kappa,\lambda)\notin  I((\displaystyle\bigcup_{i\in I}Q_{_i})^m)$.
Since
$(Q_{_i})_{_{i\in
I}}$ is a chain, there exists $i^{\ast}\in I$ such that $(\kappa,\lambda)\in I(Q^n_{_{i^{\ast}}})$ and $(\kappa,\lambda)\notin I(Q^m_{_{i^{\ast}}})$, a contradiction to 
$I(Q^n_{_{i^{\ast}}})=I(Q^m_{_{i^{\ast}}})$.
The last conclusion shows that $\widehat{Q}$ is $\Lambda(m)$-consistent.
We now prove that
$P(R)\subseteq P(\widehat{Q})$. Take any $(\kappa,\lambda)\in P(R)$
and suppose to the contrary that $(\kappa,\lambda)\notin P(\widehat{Q})$. Clearly, $\kappa\neq \lambda$ and for each $i\in I$, $(\kappa,\lambda)\in Q_{_i}$.
Since $(\kappa,\lambda)\notin P(\displaystyle\bigcup_{i\in I}Q_{_i})$
we conclude that $(\lambda,\kappa)\in \displaystyle\bigcup_{i\in I}Q_{_i}$. Hence, $(\lambda,\kappa)\in Q_{_{i^{^\ast}}}$
for some $i^{\ast}\in I$, a contradiction to 
$(\kappa,\lambda)\in P(R)\subseteq P(Q_{_{i^{^\ast}}})$.
Therefore, $\widehat{Q}$ is a $\Lambda(m)$-consistent extension of $R$ such that 
$R\cup T\subseteq \widehat{Q}$.
By Zorn's lemma $\widetilde{{\mathcal{R}}}$  possesses an
element, say $Q$, that is maximal with respect to set
inclusion.
We prove that $\overline{Q}$ is a ordering extension of $R$ satisfying the requirements of theorem.
Since $\overline{Q}$ is reflexive and transitive, it remains to prove that: ($\widehat{\alpha}$) $\overline{Q}$ is total and ($\widehat{\beta}$) $P(Q)\subseteq P(\overline{Q})$. 
To prove ($\widehat{\alpha}$),
take any $x, y\in X$ such that
$(x,y)\notin \overline{Q}$ and $(y,x)\notin \overline{Q}$.
Then, we have two subcases to consider: ($\widehat{\alpha_{_1}}$) $x, y\in Y$; ($\widehat{\alpha_{_2}}$) $x\notin Y$ or $y\notin Y$.
In subcase ($\widehat{\alpha_{_1}}$) we have $(x,y)\notin T$. By the completeness of $T$ we conclude that $(y,x)\in T\subseteq Q\subseteq \overline{Q}$, a contradiction.
In subcase ($\widehat{\alpha_{_2}}$), if $(x,y)\notin \overline{Q}$ and $(y,x)\notin \overline{Q}$, we
define
\begin{center}
$Q^{\ast}=Q\cup\{(\kappa,\lambda)\vert \kappa \overline{Q}y\ {\rm and} \ x \overline{Q}\lambda\}$.
\end{center}
Then,
as in the case of $R^{\ast}$ above, 
where $Y^{\ast}=\{x,y\}$ and $T^{\ast}=\{(y,x)\}$
play the role of $Y$ and $T$ respectively, 
we conclude that 
$Q^{\ast}$ is a $\Lambda(m)$-consistent extension of $R$, a contradiction to the maximality of $Q$. The last contradiction shows that $\overline{Q}$ is complete (total and reflexive).
To prove ($\beta$), we first prove that $\overline{Q}=Q^m\displaystyle\bigcup (\displaystyle\bigcup_{p=m+1}^{\infty}P(Q^p))$.
Clearly, $Q^m\displaystyle\bigcup (\displaystyle\bigcup_{p=m+1}^{\infty}P(Q^p))\subseteq \overline{Q}$. To prove the converse, suppose to the contrary that
$(\kappa,\lambda)\in \overline{Q}$ and $(\kappa,\lambda)\notin Q^m\displaystyle\bigcup (\displaystyle\bigcup_{p=m+1}^{\infty}P(Q^p))$.  
Since $(\kappa,\lambda)\in \overline{Q}\setminus Q^m$, 
there exists $\rho>m$ such that $(\kappa,\lambda)\in Q^\rho$.
On the other hand,
$(\kappa,\lambda)\notin \displaystyle\bigcup_{p=m+1}^{\infty}P(Q^p)$ implies that
$(\kappa,\lambda)\in I(Q^\rho)$. 
By $m$-rank equivalence we have $I(Q^\rho)=I(Q^m)$,
a contradiction to $(\kappa,\lambda)\notin Q^m$. 
Therefore, $\overline{Q}=Q^m\displaystyle\bigcup (\displaystyle\bigcup_{p=m+1}^{\infty}P(Q^p))$.
To prove that $P(Q)\subseteq P(\overline{Q})$,
suppose to the contrary that $(\kappa,\lambda)\in P(Q)\subseteq P(Q^m)$ and $(\kappa,\lambda)\notin P(\overline{Q})$.
It follows that $(\lambda,\kappa)\in Q^m\displaystyle\bigcup (\displaystyle\bigcup_{p=m+1}^{\infty}P(Q^p))$. Since 
$(\kappa,\lambda)\in P(Q^m)$ we conclude that $(\lambda,\kappa)\in \displaystyle\bigcup_{p=m+1}^{\infty}P(Q^p)$.
It follows that 
$(\kappa,\lambda)\in I(Q^q)$ for some $q>m$. But then,
$(\kappa,\lambda)\in I(Q^q)=I(Q^m)$, a contradiction to $(\lambda,\kappa)\notin Q^m$.
Therefore, $P(Q)\subseteq P(\overline{Q})$.
To complete the sufficiency part we show that $ \overline{Q}/Y=T$.
Evidently, $T\subseteq  \overline{Q}/Y$. To prove the converse, let $(\kappa,\lambda)\in  \overline{Q}/Y$. Suppose to the contrary that $(\kappa,\lambda)\notin T$.
Since $T$ is complete $(\lambda,\kappa)\in T$ holds which implies $(\lambda,\kappa)\in R^{\ast}$. On the other hand, $(\kappa,\lambda)\notin T$ 
and $(\kappa,\lambda)\notin  \overline{Q}\supseteq R$ ($\kappa,\lambda \in Y$)
imply that
$(\kappa,\lambda)\notin R^{\ast}$. Since $Q$ is an ordering extension of $R^{\ast}$, we have that
$(\lambda,\kappa)\in P(R^{\ast})\subseteq P(Q)\subseteq P( \overline{Q})$. It follows that
$(\kappa,\lambda)\notin  \overline{Q}/Y$, a contradiction. The last contradiction shows that
$ \overline{Q}/Y=T$.

In order to prove sufficiency, let us assume
that $R$ has an ordering 
extension $Q$ satisfying the requirements of the theorem. 
Then, $R$ is $S$-consistent and thus $\Lambda(\omega)$-consistent. Indeed,
suppose to the contrary that 
there are alternatives
$\tau, \pi_{_0}, \pi_{_1}, \pi_{_2},..., \pi_{_\sigma}\in X$ such that
\begin{center}
$\tau=\pi_{_0}P(R)\pi_{_1}R\pi_{_2}R...R\pi_{_\sigma}R\pi_{_0}=\tau.$
\end{center}
Since $Q$ is an ordering extension of $R$ we have $\tau P(Q)\tau$ which is impossible. Therefore, 
$R$ is
$S$-consistent.
The last conclusion completes the proof.
\end{proof}

\begin{remark}\label{mnha}
{\rm According to Theorem \ref{a2}, $m$-consistency ensures the existence of a reflexive and complete (tournament) extension of $R$ and
it has nothing to do with the existence of transitivity.
As we can see,
the relation $G$ of example \ref{panay} has as a complete extension, the relation
\begin{center}
$G^{\ast}=G^2\displaystyle\bigcup (\displaystyle\bigcup_{p=3}^{\infty}P(G^p))=G^2\cup P(G^3)\cup P(G^4)=
\{(x_{_1},x_{_2}),(x_{_2},x_{_3}),(x_{_3},x_{_4}),(x_{_4},x_{_3}),(x_{_4},x_{_1}),(x_{_1},x_{_5}),(x_{_1},x_{_3}),(x_{_3},x_{_1}),(x_{_2},x_{_4}),$
$(x_{_4},x_{_2}),(x_{_4},x_{_5}),(x_{_3},x_{_5}),(x_{_2},x_{_5}).$
\end{center}
}
\end{remark}

This means that Theorem \ref{a2} guarantees a complete extension $G^{\ast}$ without $G$ being $S$-consistent.

The following corollary is an immediate consequence of Theorem \ref{a2} for $Y=\{x,y\}$ and $T=\{(y,x)\}$.

\begin{corollary}\label{aty2}{\rm Let $R$ be a $\Lambda(m)$-consistent binary relation on $X$, $m\in\Omega_{_0}$. Then,
for every pair $(x,y)\in inc(\overline{R})$, $x, y\in  X$, 
there exists an 
ordering extension $Q_{xy}$ of $R$ such that
 $(x,y)\in Q_{xy}$.}
\end{corollary}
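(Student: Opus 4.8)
The plan is to derive the statement directly from Theorem~\ref{a2} by specializing the data $(Y,T)$ to the two-element situation induced by the given incomparable pair. Concretely, for a pair $(x,y)\in inc(\overline{R})$ with $x\neq y$ (the case $x=y$ being immediate, since every ordering is reflexive and hence contains $(x,x)$), I would set $Y=\{x,y\}$ and let $T$ be the ordering on $Y$ determined by $x\,T\,y$, that is, $T=\{(x,x),(y,y),(x,y)\}$. The sole purpose of $T$ here is to record the single comparison we wish to impose between the two previously incomparable alternatives, oriented so that $x$ precedes $y$.

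The first step is to verify that this choice meets the hypotheses of Theorem~\ref{a2}. Because $(x,y)\in inc(\overline{R})$ means precisely $(x,y)\notin\overline{R}$ and $(y,x)\notin\overline{R}$, the set $Y=\{x,y\}$ satisfies the requirement that any two distinct members of $Y$ be $\overline{R}$-incomparable. Moreover $T$ is reflexive on $Y$, vacuously transitive, and total on $\{x,y\}$ since the pair $(x,y)$ is present, so $T$ is indeed an ordering on $Y$. Finally, $R$ is $\Lambda(m)$-consistent by assumption. Thus all the hypotheses are in place.

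The second step is to invoke Theorem~\ref{a2} in the direction that produces an extension from $\Lambda(m)$-consistency, which yields an ordering extension $Q$ of $R$ satisfying $Q/Y=T$. Setting $Q_{xy}:=Q$, I would then simply observe that $(x,y)\in T=Q/Y\subseteq Q_{xy}$, which is exactly the desired conclusion: $Q_{xy}$ is an ordering extension of $R$ containing the pair $(x,y)$.

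There is essentially no obstacle beyond the bookkeeping above, as the entire weight of the argument is carried by Theorem~\ref{a2}; this is why the result is flagged as an immediate corollary. The only point demanding care is the orientation of $T$: to secure $(x,y)\in Q_{xy}$ one must take the ordering with $x\,T\,y$ rather than $y\,T\,x$, and the freedom to make this choice is guaranteed by the symmetry of incomparability, namely $(x,y)\in inc(\overline{R})$ if and only if $(y,x)\in inc(\overline{R})$. The symmetric selection $T=\{(y,x)\}$, augmented with the reflexive pairs, would work identically and would instead place $(y,x)$ in the extension.
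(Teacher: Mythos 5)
Your proof is correct and takes essentially the same route as the paper, which likewise obtains this corollary as an immediate consequence of Theorem~\ref{a2} specialized to $Y=\{x,y\}$ and a single-comparison ordering $T$ (the paper writes $T=\{(y,x)\}$ and appeals to the symmetry of $inc(\overline{R})$ by interchanging the roles of $x$ and $y$, whereas you orient $T$ as $\{(x,x),(y,y),(x,y)\}$ to land on $(x,y)\in Q_{xy}$ directly). Your explicit verification of the hypotheses, inclusion of the reflexive pairs in $T$, and handling of the degenerate case $x=y$ merely make the paper's bookkeeping explicit.
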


By interchanging the roles of $x$ and $y$ ($inc(\overline{R})$ is symmetric), Corollary \ref{aty2}
gives an analogous result.

Since transitivity implies $\Lambda(m)$-consistency, $m\in \Omega_{_0}$, Arrow's Lemma is an immediate consequence of the sufficient part of Theorem \ref{a2}
for $m=1$.

\par\smallskip\par

\begin{definition}{\rm For each $m\in \mathbb{N}$, a $\Lambda(m)$-consistent binary relation $R$ on $X$ is $\Delta(m)$-{\it consistent}
if $I(R^m)=\Delta$.
}
\end{definition}

A consequence of Proposition \ref{0a1} and Theorem \ref{a2} for $m=\omega$ is also the Suzumura's existence type extension theorem in \cite[Page 5]{suz0}. 
The following corollary shows this fact.

\begin{corollary}{\rm Let $R$ be a binary relation on $X$, $Y$ a subset of $X$ such that, if $x\neq y$ and $x, y\in Y$, then $(x,y)\notin\overline{R}$,
and $T$ an ordering on $Y$. Then, there exists an ordering extension $Q$ of $R$ such that $Q/Y=T$ if and only if $R$ is $S$-consistent.}
\end{corollary}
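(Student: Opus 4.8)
The plan is to obtain this corollary as a direct specialization of Theorem~\ref{a2} at the value $m=\omega$, with Proposition~\ref{0a1} serving only to translate the notion of $\Lambda(\omega)$-consistency into $S$-consistency. First I would note that $\omega\in\Omega_{_0}$, so Theorem~\ref{a2} is applicable with this choice of $m$, and that the hypotheses of the corollary on $Y$ and $T$ (namely that $(x,y)\notin\overline{R}$ for distinct $x,y\in Y$, and that $T$ is an ordering on $Y$) coincide verbatim with those of Theorem~\ref{a2}. Hence no additional assumption has to be verified before invoking it.

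Then I would carry out a two-step equivalence. By Theorem~\ref{a2} applied with $m=\omega$, there exists an ordering extension $Q$ of $R$ with $Q/Y=T$ if and only if $R$ is $\Lambda(\omega)$-consistent. By Proposition~\ref{0a1}, $R$ is $\Lambda(\omega)$-consistent if and only if $R$ is $S$-consistent. Chaining these two biconditionals yields exactly the claimed statement: the existence of an ordering extension $Q$ of $R$ with $Q/Y=T$ is equivalent to the $S$-consistency of $R$.

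Since all the substantive work---the Zorn's Lemma construction of a maximal $\Lambda(m)$-consistent extension, the verification that its transitive closure is a complete ordering satisfying $\overline{Q}/Y=T$, and the necessity direction ruling out a $P(Q)$-cycle---has already been discharged inside the proof of Theorem~\ref{a2}, I expect essentially no obstacle to remain here. The only point requiring any care is the purely bookkeeping observation that the limiting case $m=\omega$ is admissible and that $R^{\omega}=\overline{R}$, which is precisely what Proposition~\ref{0a1} encodes. I therefore anticipate the proof to be a one-line appeal to these two previously established results.
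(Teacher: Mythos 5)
Your proposal is correct and matches the paper's own derivation exactly: the paper presents this corollary as an immediate consequence of Theorem~\ref{a2} specialized to $m=\omega$ together with Proposition~\ref{0a1}, which identifies $\Lambda(\omega)$-consistency with $S$-consistency. No further argument is given or needed there, so your two-step chaining of the biconditionals is precisely the intended proof.
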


As a consequence of the proof of
Theorem \ref{a2}, the following result is also true:

\begin{corollary}\label{kl}{\rm 
Let $R$ be a binary relation on $X$, $m\in\Omega_{_0}$ and $Y$ a subset of $X$ such that, if $x\neq y$ and $x, y\in Y$, then $(x,y)\notin \overline{R}$,
and $T$ a linear order on $Y$. Then, there exists a linear order extension $Q$ of $R$ such that $Q/Y=T$ if and only if $R$ is a $\Delta(m)$-consistent.}
\end{corollary}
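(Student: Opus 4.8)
The plan is to re-run the proof of Theorem~\ref{a2} essentially unchanged, carrying along one extra invariant---namely $I(\cdot^{m})=\Delta$, which is exactly what separates $\Delta(m)$-consistency from $\Lambda(m)$-consistency---and then to read off antisymmetry of the resulting extension from the explicit description of $\overline{Q}$ computed in that proof. The core of the argument (the construction of $R^{\ast}$, the Zorn's Lemma maximality step, totality, and the identity $\overline{Q}/Y=T$) is already supplied by Theorem~\ref{a2}; only the tracking of the symmetric part is new, and that is where I expect the real work to lie.

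For sufficiency, assume $R$ is $\Delta(m)$-consistent. Since $T$ is now a linear order on $Y$, it is antisymmetric, and the four-case computation in the proof of Theorem~\ref{a2} yields $I((R^{\ast})^{m})=I(R^{m})$; together with the hypothesis $I(R^{m})=\Delta$ this gives $I((R^{\ast})^{m})=\Delta$, so $R^{\ast}$ is $\Delta(m)$-consistent. I would then take $\widetilde{\mathcal{R}}$ to be the family of $\Delta(m)$-consistent extensions of $R$ containing $R\cup T$ and verify it is closed under unions of chains: if $(\kappa,\lambda)\in I(\widehat{Q}^{m})$ with $\kappa\neq\lambda$, then because $(Q_{i})_{i\in I}$ is a chain the two finite $\widehat{Q}^{m}$-paths witnessing $(\kappa,\lambda)$ and $(\lambda,\kappa)$ lie in a common $Q_{i^{\ast}}$, forcing $(\kappa,\lambda)\in I(Q_{i^{\ast}}^{m})=\Delta$, a contradiction. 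Zorn's Lemma then furnishes a maximal element $Q\in\widetilde{\mathcal{R}}$ with $I(Q^{m})=\Delta$, and, exactly as in Theorem~\ref{a2}, $\overline{Q}$ is an ordering extension of $R$ with $\overline{Q}/Y=T$.

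The one genuinely new step is to show that $\overline{Q}$ is antisymmetric, i.e.\ that $I(\overline{Q})=\Delta$. For this I would invoke the identity $\overline{Q}=Q^{m}\cup\bigl(\bigcup_{p=m+1}^{\infty}P(Q^{p})\bigr)$ proved inside Theorem~\ref{a2}, together with the $m$-rank of symmetry of $Q$, which gives $I(Q^{p})=I(Q^{m})=\Delta$ for all $p\geq m$. Suppose $(\kappa,\lambda)\in I(\overline{Q})$ with $\kappa\neq\lambda$. By reflexivity of $Q$ we have $Q^{m}\subseteq Q^{p}$ for every $p>m$, so no matter which summand each of $(\kappa,\lambda),(\lambda,\kappa)$ belongs to, both pairs land in a single power $Q^{p}$ with $p\geq m$; this contradicts either the asymmetry of $P(Q^{p})$ or the equality $I(Q^{p})=\Delta$. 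Hence $I(\overline{Q})=\Delta$ and $\overline{Q}$ is the desired linear order extension. For $m=\omega$ the decomposition is replaced by $Q^{\omega}=\overline{Q}$ and the same reasoning applies.

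For necessity, suppose a linear order extension $Q$ with $Q/Y=T$ exists. As a linear order it is in particular an ordering, so the necessity argument of Theorem~\ref{a2} already makes $R$ $S$-consistent, hence $\Lambda(m)$-consistent at the relevant level. Antisymmetry supplies the remaining requirement: since $Q$ is reflexive and transitive, $Q^{m}=Q$ and $R^{m}\subseteq Q$, so after the usual reduction to reflexive $R$ every $(\kappa,\lambda)\in I(R^{m})$ lies in $I(Q)=\Delta$ while $\Delta\subseteq R^{m}$ forces the reverse inclusion; thus $I(R^{m})=\Delta$ and $R$ is $\Delta(m)$-consistent, which completes the plan.
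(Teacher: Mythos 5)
Your proposal is correct, but it follows a genuinely different route from the paper's own proof of Corollary \ref{kl}. You re-run the Zorn construction of Theorem \ref{a2} inside the smaller family of $\Delta(m)$-consistent extensions, carrying the invariant $I(\cdot^{m})=\Delta$ through the three places where it matters (the initial extension $R^{\ast}$, unions of chains, and the totality/maximality step), and you then extract antisymmetry of $\overline{Q}$ from the decomposition $\overline{Q}=Q^{m}\cup\bigcup_{p=m+1}^{\infty}P(Q^{p})$ together with $I(Q^{p})=I(Q^{m})=\Delta$; your case analysis there (using $Q^{m}\subseteq Q^{p}$, which follows from reflexivity) is sound, as is the chain-union argument via finiteness of the witnessing paths. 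The paper instead applies Theorem \ref{a2} as a black box to obtain an ordering extension $Q$ with $Q/Y=T$ and then tries to show that this $Q$ is already antisymmetric: it collapses the indifference classes of $Q$ into a quotient $X^{\approx}$ (a Jaffray-style move, and the text even drags in topological material about a subbase of $\tau$ that is copied from Proposition \ref{zxdsa} and plays no role here), and its direct argument that $I(Q)\subseteq I(R^{m})=\Delta$ rests on the claim that a pair which is $\overline{R}$-incomparable and $Q$-indifferent must lie in $Y$. That claim fails for an arbitrary ordering extension delivered by Theorem \ref{a2} (a maximal $\Lambda(m)$-consistent extension may well contain nontrivial indifferences outside $Y$); it is defensible only for the particular extension built inside that theorem's proof, where incomparable pairs outside $Y$ are ordered strictly. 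Your invariant-carrying construction closes exactly this gap, at the price of repeating the Zorn argument rather than quoting Theorem \ref{a2} verbatim. On necessity the two arguments essentially coincide --- both obtain $S$-consistency from Theorem \ref{a2} and deduce $I(R^{m})=\Delta$ (respectively $I(\overline{R})=\Delta$) from the antisymmetry of $Q$ --- and both equally gloss over the mismatch between the fixed $m$ of the statement and the ``largest natural $m$'' clause in Definition \ref{a54}(iii), a vagueness inherited from the paper rather than introduced by you.
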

\begin{proof} According to Proposition \ref{1a1} and Theorem \ref{a2}, there exists 
an ordering extension $Q$ of $R$ such that $P(R)\subseteq P(R^m)\subseteq P(Q)$ and $Q/Y=T$. 
Let $\approx$ be the equivalence relation defined by
\begin{center}
$x\approx y$ if and only if $(x,y)\in I(Q)$.
\end{center}
The quotient set by this equivalence relation $\approx$ will be denoted
 ${X\over \approx}=X^{^\approx}$, and its elements (equivalence classes) by $[x]$.
There exists on $X$ a linear order $\mathfrak{Q}$ defined by:
\begin{center}
($\forall x, y\in X)\ (x\mathfrak{Q} y\ \Leftrightarrow\ \exists x^\prime, y^\prime\in X,\ x\in [x^\prime], y\in [y^\prime], [x^\prime] Ry^\prime)$
\end{center}

An asymmetric, $\Lambda(m)$-consistent binary relation $R^{^\approx}$ is defined by:
\begin{center}
$\forall [x], [y]\in X^{^\approx}\ ([x]R^{^\approx} [y]\ \Leftrightarrow\ \exists x^\prime\in [x],\ \exists y^\prime\in [y], x^\prime Ry^\prime)$.
\end{center}
According to Corollary \ref{last}, there exists a strict linear order extension $\widetilde{Q^{^\approx}}$ of $R^{^\approx}$.
Therefore, 
as in the proof of Theorem 1 \cite[Pages 399-400]{jaf}, by using a subbase of $\tau$, we construct a strict linear order extension
$R^{\ast}$ of $R$ such that for each $y\in X$ the set $\{x\in X\vert xR^{\ast}y\}$ belongs to $\tau$.

We prove that $(x,y)\in I(R^m)=\Delta$, which implies that
$Q$ is antisymmetric and thus it is a linear order. Suppose to the contrary that $(x,y)\notin I(\overline{R})=I(R^m)$. Then, since $(x,y)\in P(R^m)\subseteq P(Q)$
and $(y,x)\in P(R^m)\subseteq P(Q)$ is impossible, we conclude that $(x,y)\notin R^m$ and $(y,x)\notin R^m$.

But then, $x, y\in Y$ and $(x,y)\in T$ or $(y,x)\in T$ which implies that $xP(Q)y$ or $yP(Q)x$ which is impossible.
The last contradiction shows that $Q$ is a linear extension of $R$.

Conversely, if there exists a linear order extension $Q$ of $R$, then by Theorem \ref{a2}, $R$ is $\Lambda(\omega)$-consistent. It remains to
show that
$I(\overline{R})=\Delta$. Suppose to the contrary that
$I(\overline{R})\neq\Delta$. This implies that, there exist $x,y\in X$, $x\neq y$, such that
\begin{center}
$(x,y)\in I(\overline{R})\subseteq I(Q)$ \ and\ $(x,y)\not\in \Delta$
\end{center}
which contradicts the anti-symmetry of $Q$.
\end{proof}

If $R$ is a partial order,
Corolarry \ref{kl} implies one of the main results of \cite[Theorem 2.2]{hir}.

As a corollary to Theorem \ref{a2}, we also obtain the following well known
inherited type extension theorem of Szpilrajn \cite{szp}.

\begin{corollary}\label{last}{\rm Every (strict) partial order $R$ possesses a (strict) linear order extension $Q$.
Moreover, if $x$ and $y$ are any two non-comparable elements of $R$, then there exists
a (strict) linear order extension $Q^{\prime}$ in which $xQ^{\prime}y$ and a (strict) linear order extension $Q^{\prime\prime}$ in which
$yQ^{\prime\prime}x$.}
\end{corollary}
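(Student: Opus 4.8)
The plan is to derive the whole statement from Corollary \ref{kl}, which already contains the linear-order extension machinery; the only genuine work is to check that a (strict) partial order meets the $\Delta(m)$-consistency hypothesis for the simplest value $m=1$. First I would dispose of the reflexive case, where $R$ is a partial order (reflexive, transitive, antisymmetric). Because $R$ is reflexive and transitive we have $R^2=R$, hence $R^m=R$ for every $m\geq 1$ and in particular $R=\overline{R}$. Transitivity already yields $\Lambda(1)$-consistency (as noted in the paper, transitivity implies $\Lambda(m)$-consistency for every $m\in\Omega_{_0}$), while antisymmetry together with reflexivity gives $I(R)=\Delta$, so that $I(R^1)=\Delta$. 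Thus $R$ is $\Delta(1)$-consistent.

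For the bare existence claim I would invoke Corollary \ref{kl} with $m=1$, taking $Y$ to be a singleton and $T$ the (diagonal) linear order on it. The incomparability condition on $Y$ is then vacuous, and the corollary returns a linear order extension $Q$ of $R$. For the ``moreover'' part, given two non-comparable elements $x,y$ (so $(x,y),(y,x)\notin R=\overline{R}$), I would set $Y=\{x,y\}$ and let $T$ be the linear order on $Y$ with $xTy$. Since $x$ and $y$ are $\overline{R}$-incomparable, the hypothesis of Corollary \ref{kl} is satisfied, so there is a linear order extension $Q^{\prime}$ of $R$ with $Q^{\prime}/Y=T$, whence $xQ^{\prime}y$. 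Interchanging the roles of $x$ and $y$, i.e. taking $T$ with $yTx$, produces a linear order extension $Q^{\prime\prime}$ with $yQ^{\prime\prime}x$.

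Finally, the strict case reduces to the reflexive one. For a strict partial order $R$ (irreflexive and transitive) I would pass to $R\cup\Delta$ and verify that it is a partial order: reflexivity is immediate, transitivity follows by splitting on which factors lie on the diagonal, and antisymmetry holds because $(x,y),(y,x)\in R$ with $x\neq y$ would force $(x,x)\in R$ by transitivity, contradicting irreflexivity. Applying the reflexive case to $R\cup\Delta$ yields a linear order extension $Q$, and its asymmetric part $P(Q)$ is a strict linear order containing $P(R\cup\Delta)\supseteq R$; the prescribed pairs on $\{x,y\}$ carry over to $P(Q^{\prime})$ and $P(Q^{\prime\prime})$.

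The step I expect to be the main (though light) obstacle is confirming that the hypotheses of Corollary \ref{kl} are genuinely met at $m=1$: that $I(R^{1})=\Delta$ faithfully encodes antisymmetry, and that the passage to $R\cup\Delta$ in the strict case preserves it. Everything else is a direct invocation of Corollary \ref{kl}, with no further construction required.
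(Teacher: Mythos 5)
Your verifications are sound in themselves: a partial order is indeed $\Delta(1)$-consistent (reflexivity and transitivity give $R^m=R=\overline{R}$, transitivity gives $\Lambda(1)$-consistency, and antisymmetry plus reflexivity give $I(R^1)=\Delta$); the choices $Y=\{x,y\}$, $T=\{(x,y)\}$ handle the ``moreover'' clause; and the passage from a strict partial order $R$ to $R\cup\Delta$, followed by taking asymmetric parts of the resulting extensions, correctly transfers everything to the strict case. The problem is the tool you chose. In this paper Corollary \ref{kl} is not available at the point where Corollary \ref{last} must be proved: the paper's own proof of Corollary \ref{kl} explicitly invokes Corollary \ref{last} (it is used there to produce a strict linear order extension $\widetilde{Q^{\approx}}$ of the quotient relation $R^{\approx}$). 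Deriving \ref{last} from \ref{kl}, as you propose, therefore closes a circle: \ref{last} is used to prove \ref{kl}, which you then use to prove \ref{last}. Within the paper's logical architecture your argument establishes nothing.

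The paper instead obtains Corollary \ref{last} directly from Theorem \ref{a2}, taking $Y=\{x,y\}$, $T=\{(x,y)\}$, $m=1$, and $R$ a (strict) partial order; that is the route you should take, and your reductions can be reused almost verbatim on top of it. Be aware, though, of one subtlety that the paper itself glosses over: Theorem \ref{a2} delivers an \emph{ordering} extension (reflexive, total, transitive), not automatically an antisymmetric one, and an ordering extension of a partial order may a priori place distinct elements in its symmetric part. So if you rebuild your proof on Theorem \ref{a2}, you must supply the antisymmetry step yourself --- for instance by the $I$-computation you already sketched, showing via $P(R)\subseteq P(Q)$ and $I(\overline{R})=I(R)=\Delta$ that no two distinct elements can end up $Q$-equivalent, or by passing to a quotient by $I(Q)$. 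In short: keep your verifications, but hang them on Theorem \ref{a2}, not on Corollary \ref{kl}.
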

\begin{proof} It is an immediate result of Theorem \ref{a2} for $Y=\{x,y\}$, $T=\{(x,y)\}$, $m=1$ and $R$ being (asymmetric and transitive) reflexive, transitive and antisymmetric.
\end{proof}

Since transitivity implies $\Lambda(m)$-consistency for al $m\in \Omega_{_0}$, the following corollary is an immediate consequence of Theorem \ref{a2}.

\begin{corollary}\label{a6}{\rm (Hanson \cite{han} and Fishburn \cite{fis}).
Every quasi-ordering has an ordering extension.}
\end{corollary}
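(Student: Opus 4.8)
The plan is to obtain Corollary~\ref{a6} as the degenerate case of Theorem~\ref{a2} in which the prescribed subset $Y$ carries no nontrivial order, so that ``$Q/Y=T$'' becomes an empty constraint and an ordering \emph{extension} is all that survives. The only genuine content is to verify that a quasi-ordering satisfies the consistency hypothesis demanded by the extension-producing direction of that theorem.

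First I would record that a quasi-ordering $R$ is $S$-consistent. Suppose $x=z_{_0}$, $(z_{_k},z_{_{k+1}})\in R$ for all $k\in\{0,\dots,m-1\}$, and $z_{_m}=y$; since $R$ is transitive this chain collapses to $(x,y)\in R$, so $(y,x)\in P(R)$ would force $(x,y)\notin R$, a contradiction. Hence no chain witnesses a failure of $S$-consistency, and by Proposition~\ref{0a1} the relation $R$ is $\Lambda(\omega)$-consistent. Alternatively, using reflexivity and transitivity one has $R^n=R=\overline{R}$ for every $n\geq 1$, whence Proposition~\ref{1a1} gives $m$-consistency through $P(R)\subseteq P(R^m)=P(R)$, and the rank-of-symmetry condition holds through $I(R^n)=I(R)$; either route makes an admissible consistency index $m\in\Omega_{_0}$ available.

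Next I would invoke Theorem~\ref{a2} with $m=\omega$, $Y=\emptyset$ and $T=\emptyset$. The incomparability requirement on $Y$, namely that distinct elements of $Y$ are $\overline{R}$-incomparable, holds vacuously, and $T=\emptyset$ is vacuously an ordering on $Y$. Because $R$ is $\Lambda(\omega)$-consistent, the direction of Theorem~\ref{a2} that produces an ordering extension from a $\Lambda(m)$-consistent relation delivers an ordering extension $Q$ of $R$ with $Q/Y=T$; since this last condition is empty, $Q$ is simply an ordering extension of $R$, which is the assertion. One may equally take $Y=\{x_{_0}\}$ with $T=\{(x_{_0},x_{_0})\}$ to avoid invoking the empty relation.

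I do not expect a real obstacle, since the statement is a specialization of Theorem~\ref{a2}. The single point that warrants care is the bookkeeping of the index $m$ in Definition~\ref{a54}(iii): for a quasi-ordering \emph{every} finite $m$, as well as $\omega$, satisfies both $m$-consistency and $m$-rank of symmetry, so it is cleanest to route the argument through $S$-consistency and Proposition~\ref{0a1} and to use $m=\omega$ unambiguously, rather than to argue about a ``largest natural.''
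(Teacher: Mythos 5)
Your proposal is correct and takes essentially the same route as the paper, which obtains Corollary~\ref{a6} in one line by noting that transitivity implies the requisite $\Lambda(m)$-consistency and then specializing Theorem~\ref{a2} to trivial $Y$ and $T$. Your extra step of routing through $S$-consistency and Proposition~\ref{0a1} to fix $m=\omega$ merely tidies the bookkeeping around Definition~\ref{a54}(iii) that the paper glosses over, and does not constitute a different argument.
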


\section{Refinements of Szpilrajn's type theorems}

In this paragraph, we give a general Dushnik-Miller inherited type extension theorem in which all
the well known Dushnik-Miller extension theorems are obtained
as special cases.

\begin{theorem}\label{a23}{\rm  Let $R$ be a binary relation on $X$. Then, $\overline{R}$ has as realizer the set of 
ordering extensions of $R$ if and only if $R$ is $\Lambda(m)$-consistent for some $m\in \Omega_{_0}$.}
\end{theorem}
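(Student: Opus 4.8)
The plan is to prove that the family $\mathcal{E}$ of all ordering extensions of $R$ realizes $\overline{R}$ precisely when $R$ is $\Lambda(m)$-consistent, feeding both implications into the apparatus already assembled in Theorem~\ref{a2} and Corollary~\ref{aty2}. As in the proof of Theorem~\ref{a2}, I would first reduce to the case where $R$ is reflexive (cf.\ \cite[Lemma, Page 387]{szp}); then $\overline{R}$ is reflexive too, every pair in $inc(\overline{R})$ consists of two distinct points, and the diagonal creates no gap between $\overline{R}$ and the (automatically reflexive) members of $\mathcal{E}$.

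For necessity I take a realizer to be a nonempty family whose members intersect to the relation realized, so the hypothesis forces $\mathcal{E}\neq\emptyset$; pick any ordering extension $Q_{0}$ of $R$. Running the argument from the converse part of Theorem~\ref{a2} on $Q_{0}$---were there a strict cycle $\tau=\pi_{_0}P(R)\pi_{_1}R\cdots R\pi_{_\sigma}R\pi_{_0}=\tau$, then $\tau P(Q_{0})\tau$, which is impossible---shows that $R$ is $S$-consistent, hence $\Lambda(\omega)$-consistent by Proposition~\ref{0a1}, i.e.\ $\Lambda(m)$-consistent for $m=\omega\in\Omega_{_0}$.

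For sufficiency, assume $R$ is $\Lambda(m)$-consistent. Condition $(\bar{\beta})$ is exactly Corollary~\ref{aty2}, and $\mathcal{E}\neq\emptyset$ by Theorem~\ref{a2} applied with a trivial $Y$. It remains to establish $(\bar{\alpha})$, that is $\bigcap\mathcal{E}=\overline{R}$. The inclusion $\overline{R}\subseteq\bigcap\mathcal{E}$ is immediate, since each $Q\in\mathcal{E}$ contains $R$ and is transitive. For the reverse inclusion I would fix $(x,y)\notin\overline{R}$ and split on whether $(y,x)\in\overline{R}$. If $(y,x)\notin\overline{R}$ then $(x,y)\in inc(\overline{R})$, and Theorem~\ref{a2} with $Y=\{x,y\}$ and $T$ the ordering on $Y$ placing $y$ strictly above $x$ yields an ordering extension $Q$ with $Q/Y=T$, so that $(x,y)\notin Q$. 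If instead $(y,x)\in\overline{R}$, then $(y,x)\in P(\overline{R})$ and the key lemma below gives $(x,y)\notin Q$ for every $Q\in\mathcal{E}$ (one suffices). In either case $(x,y)\notin\bigcap\mathcal{E}$, which yields $(\bar{\alpha})$.

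The step I expect to be the main obstacle is the lemma that every $Q\in\mathcal{E}$ strictly extends $\overline{R}$, namely $P(\overline{R})\subseteq P(Q)$; it is what lets the case $(y,x)\in\overline{R}$ go through, since there $x$ and $y$ are $\overline{R}$-comparable and Theorem~\ref{a2} is unavailable. I would argue as follows. Let $(y,x)\in P(\overline{R})$ and fix a path $y=z_{_0},(z_{_k},z_{_{k+1}})\in R,\ z_{_n}=x$ with $n\geq 1$; since $R\subseteq Q$ and $Q$ is transitive, $(y,x)\in Q$. Assume, for contradiction, $(x,y)\in Q$. Then the path together with $(x,y)$ closes a $Q$-cycle on $z_{_0},\dots,z_{_n}$, so transitivity forces $(z_{_i},z_{_j})\in I(Q)$ for all $i,j$, in particular $(z_{_{k+1}},z_{_k})\in Q$ for every $k$. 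But $(x,y)\notin\overline{R}$ prevents the reversed path from lying in $R$, so some edge has $(z_{_k},z_{_{k+1}})\in R$ while $(z_{_{k+1}},z_{_k})\notin R$, i.e.\ $(z_{_k},z_{_{k+1}})\in P(R)\subseteq P(Q)$, whence $(z_{_{k+1}},z_{_k})\notin Q$, contradicting the previous sentence. Hence $(x,y)\notin Q$. This lemma moreover identifies the ordering extensions of $R$ with those of $\overline{R}$, which is the structural reason the construction closes up.
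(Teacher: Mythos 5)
Your proof is correct, and its skeleton matches the paper's: both directions lean on Theorem~\ref{a2} and Corollary~\ref{aty2}, the inclusion $\overline{R}\subseteq\bigcap\mathcal{E}$ is trivial, the reverse inclusion splits on whether $(y,x)\in\overline{R}$, and the incomparable case is resolved by forcing $y$ above $x$ (your invocation of Theorem~\ref{a2} with $Y=\{x,y\}$ is exactly the paper's explicit construction $S=R\cup\{(\kappa,\lambda)\mid \kappa\overline{R}y \ \mathrm{and}\ x\overline{R}\lambda\}$). The genuine divergence is the case $(y,x)\in\overline{R}$. The paper handles it by adjoining $(y,x)$ to $R$, distinguishing whether $\overline{R}$ is complete, and, when it is not, running a second Zorn's lemma argument to produce a maximal (hence complete) transitive extension $\widehat{T}$ of $R$, at which point it asserts $(y,x)\in P(\overline{R})\subseteq P(\widehat{T})$ without justifying that inclusion. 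Your key lemma --- every transitive extension $Q$ of $R$ satisfies $P(\overline{R})\subseteq P(Q)$, proved by the finite cycle argument: a $Q$-cycle through the $R$-path from $y$ to $x$ makes all its edges symmetric in $Q$, while some edge of the path must lie in $P(R)\subseteq P(Q)$ because the reversed path cannot lie entirely in $R$ --- is precisely the missing justification; moreover it needs no consistency hypothesis and no appeal to Zorn. So your route is more elementary (one transfinite argument, inside Theorem~\ref{a2}, instead of two, and no case distinction on the completeness of $\overline{R}$) and it actually repairs a gap the paper leaves implicit; what the paper's construction buys instead is an explicit maximal complete transitive extension, a device it recycles in similar form elsewhere (e.g.\ Proposition~\ref{a14}). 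One caution on your closing aside: the identification of ordering extensions of $R$ with those of $\overline{R}$ is unconditional only in the direction your lemma proves; the converse (an extension of $\overline{R}$ is an extension of $R$) still needs $P(R)\subseteq P(\overline{R})$, i.e.\ $S$-consistency, which does hold under the theorem's hypothesis but is not free.
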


\begin{proof} To prove necessity, let $R$ be a $\Lambda(m)$-consistent binary relation on $X$ for some $m\in\Omega_{_0}$ and let  ${{\mathcal{Q}}}$ be
the set of all order extensions of $R$. By Theorem \ref{a2}, ${{\mathcal{Q}}}$ is non-empty.
We show that
$\overline{R}=\displaystyle\bigcap_{Q\in {{\mathcal{Q}}}}Q$.
Indeed, since 
$\overline{R}\subseteq \displaystyle\bigcap_{Q\in {{\mathcal{Q}}}}Q$, we have to show that
$\displaystyle\bigcap_{Q\in {{\mathcal{Q}}}}Q\subseteq \overline{R}$.
Suppose to the contrary that there exists an $(x,y)\in
\displaystyle\bigcap_{Q\in {{\mathcal{Q}}}}Q$ with $(x,y)\notin
\overline{R}$. We first prove that $(y,x)\notin \overline{R}$.
Suppose to the contrary that
$(y,x)\in \overline{R}$.
Since 
$(y,x)\in P(R)\subseteq P(Q)$ 
contradicts the fact that
$(x,y)\in Q$, we conclude that
$(y,x)\notin R$.
Define
\begin{center}
$R^{\prime}=R\cup \{(y,x)\}$.
\end{center}
Clearly, $R\subset R^{\prime}\subseteq \overline{R}$. We also have $P(R)\subset P(R^{\prime})\subseteq P(\overline{R})$.
To prove the second inclusion suppose to the contrary that $(\kappa,\lambda)\in P(R^{\prime})$ and $(\kappa,\lambda)\notin P(\overline{R})$.
It follows that $(\lambda,\kappa)\in \overline{R}$. If $(\kappa,\lambda)=(y,x)$, then $(x,y)\in\overline{R}$ which is impossible. If 
$(\kappa,\lambda)\in P(R^\prime)\setminus\{(y,x)\}=P(R)$,
then $(\kappa,\lambda)\in I(\overline{R})=I(R^m)$ a contradiction to $(\kappa,\lambda)\in P(R)$ and $(\lambda,\kappa)\in R^m$. Therefore, $P(R)\subset P(R^\prime)\subseteq P(\overline{R})$.
If $\overline{R}$ is complete then $\overline{R}\in \mathcal{Q}$. But then, $(y,x)\in P(\overline{R})$ implies that
$(x,y)\notin
\displaystyle\bigcap_{Q\in {{\mathcal{Q}}}}Q$, a contradiction). Thus,
$\overline{R}$ is incomplete.
Let $\mathcal{T}$ be the set of transitive extensions of $R$. Since $\overline{R}\in\mathcal{T}$, this set is non-empty.
Then, as in the proof of Theorem 1,there exists a maximal element $\widehat{T}$ of $\mathcal{T}$.
We prove that $\widehat{T}$ is complete. Suppose to the contrary that $(x^{\ast},y^{\ast})\notin \widehat{T}$
and $(y^{\ast},x^{\ast})\notin \widehat{T}$ for some $x^{\ast}, y^{\ast}\in X$.
Then, it is easy to check that the relation
\begin{center}
$\widetilde{T}=\widehat{T}\cup\{(\kappa,\lambda)\vert \kappa \widehat{T}y^{\ast} \ {\rm and}\ x^{\ast} \widehat{T}\lambda\}$
\end{center}
is transitive, a contradiction to the maximal character of $\widehat{T}$.
Therefore, $\widehat{T}\in \mathcal{Q}$.
Since $(y,x)\in P(\overline{R})\subseteq P(\widehat{T})$ we have $(x,y)\notin \widehat{T}$, again a contradiction to 
$(x,y)\in \displaystyle\bigcap_{Q\in {{\mathcal{Q}}}}Q$. Therefore, in any case we have $(y,x)\notin \overline{R}$.
We now prove that $(x,y)\notin \overline{R}$ jointly to $(y,x)\notin \overline{R}$
leads again to a contradiction, and thus, $\displaystyle\bigcap_{Q\in {{\mathcal{Q}}}}Q\subseteq \overline{R}$.
Indeed,
let 
\begin{center}
$S=R\cup\{(\kappa,\lambda)\vert \kappa \overline{R}y \ {\rm and}\ x \overline{R}\lambda\}$.
\end{center}
Then, since $R$ is $\Lambda(m)$-consistent, as in the proof of Theorem 1, there exists an ordering extension $\widehat{S}$ of $R$ such that
$P(R)\subseteq P(S)\subseteq P(\widehat{S})$.
Since $(y,x)\in P(S)$ ($(x,x)\in\overline{R}$, $(y,y)\in\overline{R}$, $(x,y)\notin\overline{R}$ and $(y,x)\not\in\overline{R}$) we have that
$(x,y)\notin \widehat{S}$, a contradiction to 
$(x,y)\in \displaystyle\bigcap_{Q\in {{\mathcal{Q}}}}Q$. 
This contradiction confirms that $\displaystyle\bigcap_{Q\in {{\mathcal{Q}}}}Q\subseteq \overline{R}$.
To finish the proof of necesity, it remains to show that $\mathcal{Q}$ is a realizer.
But, this is an immediate consequence of the Corollary \ref{aty2}.

To prove sufficiency,
suppose that $\overline{R}$ has as realizer the set
of all
order extensions of $R$, let ${\mathcal Q}$.
We prove that $R$ is $\Lambda(\omega)$-consistent.
Indeed, since $\displaystyle\bigcap_{Q\in \mathcal{Q}}Q=\overline{R}$ we have $P(R)\subseteq
\displaystyle\bigcap_{Q\in \mathcal{Q}}P(Q)\subseteq
P(\displaystyle\bigcap_{Q\in \mathcal{Q}}Q)=P(\overline{R})=P(R^\omega)$. Therefore,
$R$ is $\Lambda(\omega)$-consistent.
The last conclusion completes the proof.
\end{proof}

Theorems \ref{a2} and \ref{a23} and remark \ref{mnha} imply the following corollary.

\begin{corollary}\label{a23}{\rm  Let $R$ be a binary relation on $X$. Then, $R^m$ has as realizer the set of
reflexive and complete (tournament) extensions of $R$ if and only if $R$ is $m$-consistent for some $m\in \Omega_{_0}$.}
\end{corollary}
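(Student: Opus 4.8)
The plan is to run the argument of Theorem \ref{a23} in parallel, systematically replacing the transitive closure $\overline{R}$ by the $m$-fold composition $R^m$, ordering extensions by reflexive and complete (tournament) extensions, and $\Lambda(m)$-consistency by plain $m$-consistency. I would write $\mathcal{T}$ for the family of reflexive and complete extensions of $R$ that are themselves $m$-consistent (equivalently, those delivered by the Zorn construction of Theorem \ref{a2} with the rank-of-symmetry clause dropped). By Remark \ref{mnha} this family is non-empty whenever $R$ is $m$-consistent, a representative member being the tournament $R^m\cup(\bigcup_{p=m+1}^{\infty}P(R^p))$ exhibited there.

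For necessity, I would assume $R$ is $m$-consistent and show that $\mathcal{T}$ is a realizer of $R^m$, i.e. that $\displaystyle\bigcap_{Q\in\mathcal{T}}Q=R^m$ together with the covering condition $(\bar{\beta})$. The inclusion $R^m\subseteq\bigcap_{Q}Q$ is where the tournament case genuinely departs from the transitive one: for each $Q\in\mathcal{T}$, Proposition \ref{1a1} gives $P(Q)\subseteq P(Q^m)$, and since $R\subseteq Q$ forces $R^m\subseteq Q^m$, any $(x,y)\in R^m\setminus Q$ would yield $(y,x)\in P(Q)\subseteq P(Q^m)$ and hence $(x,y)\notin Q^m$, contradicting $(x,y)\in R^m\subseteq Q^m$; thus $R^m\subseteq Q$. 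For the reverse inclusion I would argue by contradiction exactly as in Theorem \ref{a23}: given $(x,y)\in\bigcap_{Q}Q$ with $(x,y)\notin R^m$, form $S=R\cup\{(\kappa,\lambda)\mid \kappa R^m y\text{ and } x R^m\lambda\}$, check (as in the $R^{\ast}$ computation of Theorem \ref{a2}) that $S$ is $m$-consistent with $(y,x)\in P(S)$, and invoke Remark \ref{mnha} to obtain a tournament extension $\widehat{S}\in\mathcal{T}$ with $P(S)\subseteq P(\widehat{S})$; then $(x,y)\notin\widehat{S}$, contradicting $(x,y)\in\bigcap_{Q}Q$. The covering condition $(\bar{\beta})$ follows in the same way: for $(x,y)\in inc(R^m)$, the construction with $Y=\{x,y\}$ and $T=\{(x,y)\}$ delivers a member of $\mathcal{T}$ containing $(x,y)$, which is the tournament analogue of Corollary \ref{aty2}.

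For sufficiency I would assume $\mathcal{T}$ is a realizer of $R^m$, so $\bigcap_{Q}Q=R^m$. Since every $Q\in\mathcal{T}$ is an extension of $R$ we have $P(R)\subseteq P(Q)$ for each $Q$, whence $P(R)\subseteq\bigcap_{Q}P(Q)\subseteq P(\bigcap_{Q}Q)=P(R^m)$; Proposition \ref{1a1} then yields that $R$ is $m$-consistent. This is verbatim the sufficiency step of Theorem \ref{a23} with $\omega$ replaced by $m$.

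The \emph{main obstacle} is the inclusion $R^m\subseteq\bigcap_{Q}Q$. In Theorem \ref{a23} the corresponding inclusion $\overline{R}\subseteq\bigcap_{Q}Q$ is automatic because ordering extensions are transitive; here an arbitrary reflexive and complete extension need not contain $R^m$ (a cyclic tournament can reverse a composed pair), so the family $\mathcal{T}$ must be restricted to the $m$-consistent tournament extensions, and the whole argument hinges on verifying that the passage $S\mapsto\widehat{S}$ preserves $m$-consistency. That verification amounts to replaying the cycle argument by which $R^{\ast}$ is shown $m$-consistent in the proof of Theorem \ref{a2}; crucially, one may drop entirely the rank-of-symmetry clause and its four-case analysis, since only the $m$-consistency half of $\Lambda(m)$-consistency is needed, which makes the present proof a strictly lighter variant of that of Theorem \ref{a2}.
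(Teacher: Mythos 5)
The step you wave through---``check (as in the $R^{\ast}$ computation of Theorem~\ref{a2}) that $S$ is $m$-consistent''---is exactly where the proof breaks. In Theorem~\ref{a2} the contradiction extracted from a $P$-cycle through an added pair is $(x,y)\in\overline{R}$, which contradicts the standing hypothesis there that the inserted pair is $\overline{R}$-incomparable; you, however, only know $(x,y)\notin R^m$, and the argument gives nothing for pairs in $\overline{R}\setminus R^m$. For such pairs $S$ genuinely fails to be $m$-consistent: take the paper's Example~\ref{panay} with $m=2$, $R=G\cup\Delta$ and $(x,y)=(x_1,x_4)\in G^3\setminus G^2$. Then $(x_4,x_2)\in S$ (since $x_4R^2x_4$ and $x_1R^2x_2$) while $(x_2,x_4)\notin S$ (that would require $x_1R^2x_4$), so $(x_4,x_2)\in P(S)$ and it closes the $S$-chain $x_2\,S\,x_3\,S\,x_4$: $S$ violates $2$-consistency, and your construction produces no member of $\mathcal{T}$ omitting $(x_1,x_4)$.

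The deeper problem is that your reformulated statement is false, so no patch of the $S$-step can succeed. For $m\geq 2$, any reflexive, complete, $m$-consistent relation $Q$ is automatically transitive: padding cycles with loops shows $Q$ is $2$-consistent, and if $aQb$ and $bQc$, then $2$-consistency gives $(c,a)\notin P(Q)$, which together with completeness forces $(a,c)\in Q$. Hence your family $\mathcal{T}$ is exactly the family of ordering extensions of $R$---completeness erases the very distinction between $m$-consistency and transitivity that your proof means to exploit. Consequently, if $R$ is $m$-consistent but not $S$-consistent (again Example~\ref{panay}, i.e.\ precisely the new cases this corollary is supposed to cover beyond the transitive-closure theorem), then $\mathcal{T}$ is \emph{empty} by the sufficiency part of Theorem~\ref{a2} together with Proposition~\ref{0a1}, so it realizes nothing; and if $R$ is $S$-consistent, then $\bigcap_{Q\in\mathcal{T}}Q=\overline{R}$, which strictly exceeds $R^m$ whenever $R^m\neq\overline{R}$. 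Note finally that the tournament $G^{\ast}$ of Remark~\ref{mnha}---evidently the kind of extension the paper has in mind---is itself not $2$-consistent ($x_1G^{\ast}x_3G^{\ast}x_4$ with $(x_4,x_1)\in P(G^{\ast})$), so the realizing family, if one exists at all, can consist neither of all tournament extensions (their intersection is just $R\cup\Delta$, as you essentially observed) nor of the $m$-consistent ones; the paper offers no proof beyond a citation of its earlier results, and any genuine argument would have to work with the specific tournaments $Q^m\cup\bigcup_{p>m}P(Q^p)$ constructed inside the proof of Theorem~\ref{a2}.
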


The following result is
an immediate consequence of Corollary  \ref{kl}.

\begin{corollary}\label{fdc1}{\rm Let $R$ be a binary relation on $X$ and let $m\in\Omega_{_0}$. Then, $\overline{R}$ has as realizer 
the set of
linear order extensions of $R$ if and only if $R$ is $\Delta(m)$-consistent.}
\end{corollary}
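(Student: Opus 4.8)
The plan is to obtain both implications from Corollary \ref{kl}, following the template of Theorem \ref{a23} but with linear order extensions in place of ordering extensions and $\Delta(m)$-consistency in place of $\Lambda(m)$-consistency. Write $\mathcal{L}$ for the set of all linear order extensions of $R$; I must verify the two realizer conditions for $\overline{R}$, namely $\bigcap_{Q\in\mathcal{L}}Q=\overline{R}$ and that every $(x,y)\in inc(\overline{R})$ belongs to some $Q\in\mathcal{L}$.

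For necessity, assume $R$ is $\Delta(m)$-consistent. The first thing I would record is that $\Delta(m)$-consistency makes $\overline{R}$ antisymmetric, i.e. $I(\overline{R})=\Delta$ (this is exactly the feature exploited in Corollary \ref{kl}), and that it entails $\Lambda(m)$-consistency, so Corollary \ref{kl} is available. Condition $(\bar{\beta})$ is then immediate: for $(x,y)\in inc(\overline{R})$ apply Corollary \ref{kl} with $Y=\{x,y\}$ and the linear order $T$ on $Y$ having $(x,y)\in T$; the extension $Q$ it produces satisfies $Q/Y=T$, hence $(x,y)\in Q$. For the intersection, $\overline{R}\subseteq\bigcap_{Q\in\mathcal{L}}Q$ is clear because every linear order extension is a transitive set containing $R$. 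For the reverse inclusion I would take $(x,y)\notin\overline{R}$ and split into two cases. If $(x,y)\in inc(\overline{R})$, Corollary \ref{kl} applied with $Y=\{x,y\}$ and $T$ now chosen so that $(y,x)\in T$ yields $Q\in\mathcal{L}$ with $(x,y)\notin Q$. If instead $(y,x)\in\overline{R}$, then $x\neq y$ and any $Q\in\mathcal{L}$ already excludes $(x,y)$, since $(y,x)\in\overline{R}\subseteq Q$ together with $(x,y)\in Q$ would put $(x,y)$ into $I(Q)=\Delta$. Either way $(x,y)\notin\bigcap_{Q\in\mathcal{L}}Q$, giving the equality.

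For sufficiency, assume $\mathcal{L}$ is a realizer of $\overline{R}$, so $\bigcap_{Q\in\mathcal{L}}Q=\overline{R}$ and $\mathcal{L}\neq\emptyset$. Reproducing the computation from Theorem \ref{a23}, $P(R)\subseteq\bigcap_{Q\in\mathcal{L}}P(Q)\subseteq P(\bigcap_{Q\in\mathcal{L}}Q)=P(\overline{R})=P(R^{\omega})$ shows $R$ is $\Lambda(\omega)$-consistent. To upgrade to $\Delta(\omega)$-consistency I would verify $I(\overline{R})=\Delta$ directly: if $(a,b)\in I(\overline{R})$ then $(a,b),(b,a)\in\overline{R}=\bigcap_{Q\in\mathcal{L}}Q$, so $(a,b)\in I(Q)=\Delta$ for each (antisymmetric) $Q\in\mathcal{L}$, forcing $a=b$. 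Thus $R$ is $\Delta(\omega)$-consistent, hence $\Delta(m)$-consistent for the relevant $m\in\Omega_{_0}$.

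The only point requiring real care is the reverse inclusion in the intersection identity. There one must treat separately the genuinely $\overline{R}$-incomparable pairs, handled by constructing an extension through Corollary \ref{kl}, and the pairs $(x,y)$ with $(y,x)\in\overline{R}$ but $(x,y)\notin\overline{R}$, where the exclusion is forced not by building a new extension but by the antisymmetry that $\Delta(m)$-consistency guarantees for every linear extension. This antisymmetry is precisely what shortens the argument relative to Theorem \ref{a23}, where the analogous step required the elaborate construction of a maximal transitive---and then complete---extension.
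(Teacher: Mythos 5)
Your proposal is correct and follows essentially the same route as the paper: the paper dispatches Corollary \ref{fdc1} as an ``immediate consequence of Corollary \ref{kl},'' and your argument is exactly that reduction, spelled out by running the realizer verification of Theorem \ref{a23} with Corollary \ref{kl} supplying the linear order extensions. Your observation that antisymmetry of linear orders handles the case $(y,x)\in\overline{R}$, $(x,y)\notin\overline{R}$ directly (avoiding the maximal-transitive-extension construction of Theorem \ref{a23}) is precisely what makes the corollary ``immediate'' in the paper's sense.
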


\par\smallskip\par
The following corollary is an irreflexive variant of
Corollary \ref{fdc1}.

\begin{corollary}\label{a10}{\rm Let $R$ be a binary relation on $X$ and let $m\in\Omega_{_0}$. Then, $\overline{R}$ has as realizer 
the set of
strict linear order extensions of $R$ if and only if $R$ is asymmetric and $\Lambda(m)$-consistent.}
\end{corollary}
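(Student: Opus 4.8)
The plan is to obtain the statement from its reflexive counterpart, Corollary \ref{fdc1}, through the standard correspondence $Q\mapsto Q\cup\Delta$ (with inverse $Q^{\prime}\mapsto P(Q^{\prime})$) between strict linear orders and linear orders. First I would record the routine transfer facts: since $R$ is asymmetric one has $P(R\cup\Delta)=R$, so $Q$ is a strict linear order extension of $R$ exactly when $Q\cup\Delta$ is a linear order extension of $R\cup\Delta$; taking transitive closures commutes with adjoining the diagonal, $\overline{R\cup\Delta}=\overline{R}\cup\Delta$; and, because each strict order is irreflexive, intersections agree off the diagonal, $\bigcap_{Q}(Q\cup\Delta)=(\bigcap_{Q}Q)\cup\Delta$, while the separation clause $(\bar\beta)$ transfers verbatim between the distinct incomparable pairs of $\overline{R}$ and of $\overline{R\cup\Delta}$. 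Consequently $\overline{R}$ admits the strict-linear-order realizer if and only if $\overline{R\cup\Delta}$ admits the linear-order realizer, and Corollary \ref{fdc1}, applied to $R\cup\Delta$, turns the latter into the assertion that $R\cup\Delta$ is $\Delta(m)$-consistent.

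It then remains to match the two hypotheses, namely to prove that $R$ is asymmetric and $\Lambda(m)$-consistent if and only if $R\cup\Delta$ is $\Delta(m)$-consistent, and I would route this through the observation that both are equivalent to $\overline{R}$ being irreflexive, i.e. a genuine strict partial order. Asymmetry gives $I(R)=\emptyset$, so by Remark \ref{IAM} there is no cycle of length at most $m$; the $m$-rank-of-symmetry clause $I(R^{n})=I(R^{m})$ for $n\ge m$ then excludes cycles of every length, since a cycle of length $\ell$ would deposit diagonal pairs into the powers $R^{k\ell}$ but not into the intermediate ones, so that $I(R^{n})$ could never stabilize. Once $\overline{R}$ is acyclic, $(R\cup\Delta)^{m}=\bigcup_{j=0}^{m}R^{j}$ carries no off-diagonal symmetric pair, whence $I((R\cup\Delta)^{m})=\Delta$ and $R\cup\Delta$ is $\Delta(m)$-consistent; the converse is read off at the level $j=1$, which returns $I(R)=\emptyset$ and hence asymmetry, with $\Lambda(m)$-consistency restricting from $R\cup\Delta$ back to $R$.

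For the necessity direction it is cleanest to argue directly, in parallel with the sufficiency half of Theorem \ref{a23}. If $\mathcal{S}$ is a realizer of $\overline{R}$ by strict linear orders, then every member of $\mathcal{S}$ is asymmetric, so $\overline{R}=\bigcap_{Q\in\mathcal{S}}Q$ is asymmetric and therefore so is $R\subseteq\overline{R}$; moreover $P(R)\subseteq\bigcap_{Q\in\mathcal{S}}P(Q)\subseteq P(\overline{R})=P(R^{\omega})$, which by Proposition \ref{1a1} is exactly $\Lambda(\omega)$-consistency. Consistently with the preceding analysis, for an asymmetric relation this is the only admissible index, so the conclusion $\Lambda(m)$-consistency is realized with $m=\omega\in\Omega_{_0}$.

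I expect the main obstacle to be the hypothesis-matching of the second paragraph, and within it precisely the claim that, under asymmetry, the $m$-rank-of-symmetry requirement embedded in $\Lambda(m)$-consistency forces $\overline{R}$ to be acyclic. This is the one genuinely substantive step: it is here that the vanishing of $I(R)$ and the eventual constancy of $I(R^{n})$ interact, whereas the order bijection, the realizer transfer, and the necessity argument are all routine verifications.
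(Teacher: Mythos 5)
Your overall architecture coincides with the paper's own proof: the paper likewise passes from $R$ to $R\cup\Delta$, asserts that $R\cup\Delta$ is $\Delta(m)$-consistent, invokes the linear-order realizer result (Corollary \ref{fdc1}) to get $\overline{R}\cup\Delta=\bigcap_{Q\in\mathcal{Q}}Q$, strips the diagonal so that each $Q\setminus\Delta$ is a strict linear order extension, and proves necessity exactly as you do, from $I(R)\subseteq I(\overline{R})\subseteq I(Q)=\emptyset$ together with $P(R)\subseteq P(\overline{R})=P(R^{\omega})$. The difference is that the paper leaves the hypothesis-matching claim unproved, whereas you attempt to supply it, and the one step you yourself flag as substantive is argued incorrectly.

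The flawed implication is: ``a cycle of length $\ell$ would deposit diagonal pairs into the powers $R^{k\ell}$ but not into the intermediate ones, so that $I(R^{n})$ could never stabilize.'' This is false in general. Take an asymmetric relation consisting of two cycles of coprime lengths through a common vertex, say $x\to a\to b\to x$ and $x\to c\to d\to e\to x$: then $(x,x)\in R^{n}$ for every sufficiently large $n$, the relation is strongly connected with period one, $R^{n}$ eventually equals $X\times X$, and $I(R^{n})$ stabilizes perfectly well despite the cycles. (What fails in this example is $m$-consistency for large $m$, not stabilization --- and $m$-consistency is precisely the ingredient your argument does not invoke at this point.) The conclusion you need is nevertheless true, and the repair is short and uses only what you already have on the table: if $(x,x)\in R^{\ell}$ for some $\ell>m$, then the rank-of-symmetry clause gives $(x,x)\in I(R^{\ell})=I(R^{m})\subseteq R^{m}$, so there is a closed $R$-walk of length exactly $m$ through $x$; by Remark \ref{IAM} every edge of such a walk lies in $I(R)$, which is empty by asymmetry --- a contradiction. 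Combined with your correct use of Remark \ref{IAM} for walks of length at most $m$, this yields acyclicity of $R$, hence $I\bigl((R\cup\Delta)^{m}\bigr)=\Delta$, and the rest of your transfer argument and your necessity paragraph go through as written.
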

\begin{proof} Suppose that $R$ is an asymmetric and $\Lambda(m)$-consistent binary relation for some $m\in\Omega_{_0}$. Then, $R\cup \Delta$ is $\Delta(m)$-consistent. Let
${{\mathcal{Q}}}$ be the class of linear order extensions of $R$. Then,
$\overline{R}\cup \Delta=\displaystyle\bigcap_{Q\in {{\mathcal{Q}}}}Q$. 
It follows that $\overline{R}=\displaystyle\bigcap_{Q\in {\mathcal
Q}}Q\setminus \Delta$, where $Q\setminus \Delta$ is a strict linear
order extension of $R$. Conversely, suppose that $\overline{R}$ is
the intersection of all strict linear order extensions of $R$. Then, $R$ is $\Lambda(\omega)$-consistent and asymmetric 
since $I(R)\subseteq I(\overline{R})\subseteq I(Q)=\emptyset$.
\end{proof}
\par\smallskip\par

Next is a result due to Dushnik and Miller \cite[Theorem 2.32]{DM}.

\begin{corollary}\label{a11}{\rm If $R$ is any (strict) partial order on a set $X$, then there exists
a collection $\mathcal{Q}$ of (strict) linear orders on $X$ which realize $R$.}
\end{corollary}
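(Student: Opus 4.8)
The plan is to derive both the non-strict and the strict versions directly from the realizer corollaries already established, by checking that a (strict) partial order satisfies the relevant consistency hypotheses. The only real work is translating the defining properties of a partial order into the language of $\Delta(m)$-consistency (non-strict case) and of asymmetry together with $\Lambda(m)$-consistency (strict case); once this translation is in place, the conclusion follows immediately from the machinery built on Theorem~\ref{a2}.

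For the non-strict case, let $R$ be a partial order, i.e.\ reflexive, transitive and antisymmetric. First I would observe that transitivity forces $\overline{R}=R$, since $R^{k}\subseteq R$ for every $k$ while $R\subseteq \overline{R}$ always holds; consequently a realizer of $\overline{R}$ is literally a realizer of $R$, which is exactly what the statement asks for. Next, transitivity yields $\Lambda(m)$-consistency (as noted in the remarks following Theorem~\ref{a2}), and antisymmetry gives $I(R)=\Delta$, because any $(x,y)\in I(R)$ satisfies $(x,y)\in R$ and $(y,x)\in R$, whence $x=y$. Taking $m=1$ and using $R^{1}=R$ we obtain $I(R^{m})=\Delta$, so $R$ is $\Delta(1)$-consistent. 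Corollary~\ref{fdc1} then states precisely that $\overline{R}=R$ has as realizer the set of linear order extensions of $R$, furnishing the desired collection $\mathcal{Q}$.

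For the strict case, let $R$ be a strict partial order, i.e.\ irreflexive and transitive. As before, transitivity gives $\overline{R}=R$ and $\Lambda(m)$-consistency. The key additional observation is that irreflexivity together with transitivity implies asymmetry: if both $(x,y)\in R$ and $(y,x)\in R$ held, transitivity would force $(x,x)\in R$, contradicting irreflexivity. Thus $R$ is asymmetric and $\Lambda(1)$-consistent, and Corollary~\ref{a10} delivers a realizer of $\overline{R}=R$ consisting of strict linear order extensions of $R$.

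The argument carries essentially no analytic difficulty, since the heavy lifting, namely constructing the extensions and verifying the intersection and covering conditions, is already contained in Corollaries~\ref{fdc1} and \ref{a10}, which themselves rest on Theorem~\ref{a2}. The one point I would handle with care is the identification $\overline{R}=R$: because the realizer corollaries are phrased in terms of $\overline{R}$ rather than $R$, one must invoke transitivity explicitly to ensure that the collection they produce realizes $R$ itself and not merely its transitive closure.
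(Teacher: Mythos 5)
Your proof is correct, and it reaches the conclusion by a slightly different (and in fact more careful) route than the paper. The paper disposes of this corollary in one line: it invokes Theorem \ref{a23} directly, ``letting $R$ be a (strict) partial order.'' But Theorem \ref{a23} only produces a realizer consisting of \emph{ordering} extensions (reflexive, complete, transitive), which need not be antisymmetric, so the paper's ``immediate'' step silently elides exactly the point the corollary is about, namely that the realizing relations are (strict) \emph{linear} orders. You instead route through Corollary \ref{fdc1} and Corollary \ref{a10}, the variants tailored to antisymmetry, and you supply precisely the bookkeeping that makes those hypotheses apply: transitivity gives $\overline{R}=R$ and $\Lambda(m)$-consistency; antisymmetry (together with reflexivity of a partial order, which you should cite for the inclusion $\Delta\subseteq I(R)$ -- antisymmetry alone only gives $I(R)\subseteq\Delta$) yields $I(R^{1})=\Delta$, i.e.\ $\Delta(1)$-consistency; and irreflexivity plus transitivity yields asymmetry in the strict case. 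Your closing observation that the realizer theorems are phrased for $\overline{R}$, so that $\overline{R}=R$ must be invoked explicitly, is also a genuine point that the paper's one-line proof passes over. In short: both arguments rest on the same machinery descending from Theorem \ref{a2}, but yours picks the corollaries that actually deliver antisymmetric extensions, whereas the paper's citation of Theorem \ref{a23} alone leaves a (small but real) gap between ``ordering'' and ``linear order.''
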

\begin{proof}This follows immediately from Theorem \ref{a23}, by letting $R$ to be (strict) partial order.
\end{proof}

\par\smallskip\par

\par\smallskip\par

\par\smallskip\par

The next result, proved by Donaldson and Weymark \cite{DW},
strengthens Fishburn's  Lemma 15.4 in \cite{fis} and Suzumura's
Theorem A(4) in \cite{suz}.

\begin{corollary}\label{a29}{\rm Every quasi-ordering is the intersection of a collection of
orderings.}
\end{corollary}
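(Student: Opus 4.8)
The plan is to obtain this as a one-step consequence of the Dushnik--Miller type extension theorem (Theorem \ref{a23}), since a quasi-ordering already carries, by definition, the two features that feed into that theorem: reflexivity and transitivity. So let $R$ be a quasi-ordering on $X$. The first thing I would record is that transitivity forces $\overline{R}=R$: any chain $x=z_0,(z_k,z_{k+1})\in R,z_m=y$ collapses to $(x,y)\in R$ by iterated transitivity, so the transitive closure adds nothing. Consequently the relation whose realizer is produced by Theorem \ref{a23} is $R$ itself rather than a strictly larger closure.

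Next I would verify that $R$ is $\Lambda(m)$-consistent for some $m\in\Omega_0$. This is exactly the implication that transitivity yields $\Lambda(m)$-consistency, already invoked in the derivations of Arrow's Lemma and of Corollary \ref{a6}; the quickest justification takes $m=1$. Since $R=R^1$ we have $P(R)\subseteq P(R^1)$, so $R$ is $1$-consistent by Proposition \ref{1a1}, and since $R^n=R$ for every $n\geq 1$ we get $I(R^n)=I(R)$, which is the $1$-rank of symmetry. Hence the hypothesis of the necessity direction of Theorem \ref{a23} is satisfied.

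It then remains only to read off the conclusion. By Theorem \ref{a23}, the set $\mathcal{Q}$ of all ordering extensions of $R$ is a realizer of $\overline{R}=R$, and its non-emptiness is guaranteed through Corollary \ref{a6} (or through Theorem \ref{a2}). Condition $(\bar{\alpha})$ in the definition of a realizer states precisely that the intersection of the members of $\mathcal{Q}$ coincides with $R$, that is, $R=\bigcap_{Q\in\mathcal{Q}}Q$. Since each member of $\mathcal{Q}$ is by construction an ordering, namely a total quasi-ordering, this displays $R$ as the intersection of a collection of orderings, which is the assertion.

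I do not expect a genuine obstacle here, as all the real work has been front-loaded into Theorem \ref{a23}; the proof is essentially a check that the hypotheses apply. The only mild point of care is that the definition of $\Lambda(m)$-consistency bundles together $m$-consistency \emph{and} the $m$-rank of symmetry, so one must confirm both conditions rather than just the first. For a transitive $R$, however, both hold at once and trivially, since $R^n=R$ for all $n\geq 1$.
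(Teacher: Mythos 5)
Your proof is correct and follows essentially the same route as the paper: the paper's own proof simply invokes Theorem \ref{a23} with $m=1$, relying on the fact that a quasi-ordering is $\Lambda(1)$-consistent and equals its own transitive closure, which is exactly what you verify in detail. The only cosmetic difference is that you spell out the checks ($\overline{R}=R$, $1$-consistency via Proposition \ref{1a1}, $1$-rank of symmetry from $R^n=R$, which uses reflexivity as well as transitivity) that the paper leaves implicit.
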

\begin{proof}
It is an immediate consequence of the sufficiency part of Theorem \ref{a23}, by letting $m=1$.
\end{proof}
\par
\bigskip
\par

\begin{definition}{\rm  \cite[Definition 6]{dug}. Given relations $R$ and $R^{\prime}$, $R^{\prime}$ is a {\it compatible extension} of $R$ if
$R\subseteq R^{\prime}$ and $P(R)\subseteq P(R^{\prime})$. 
}
\end{definition}
In what follows, ${{\mathcal{R}}}$ denotes 
the class of binary relations which are compatible extensions of $R$.

We recall the following definitions from \cite{dug}.
\begin{definition}{\rm The class ${{\mathcal{R}}}$ is {\it closed upward} if, for all chains ${\mathcal C}$ in ${{\mathcal{R}}}$,}
\begin{center}
$\bigcup\{R^{\prime}\vert R^{\prime}\in {\mathcal C}\}\in {{\mathcal{R}}}$.
\end{center}
\end{definition}
\begin{definition}{\rm The class ${{\mathcal{R}}}$ is {\it arc-receptive} if, for all distinct $s$ and $t$ and for all
transitive $R^{\prime}\in {{\mathcal{R}}}$, $(t,s)\notin R^{\prime}$ implies
$\overline{R\cup\{(s,t)\}}\in {{\mathcal{R}}}$.}
\end{definition}
\begin{proposition}\label{a1t3}{\rm  Assume ${{\mathcal{R}}}$ is closed upward and arc-receptive. If
$R$ is $\Lambda(m)$-consistent 
for some $m\in \Omega_{_0}$
and $\overline{R}\in {{\mathcal{R}}}$, then
\begin{center}
$\overline{R}=\bigcap\{R^{\prime}\in {{\mathcal{R}}}\vert R^{\prime}$ is a
complete, transitive extension of $R\}$.
\end{center}
}
\end{proposition}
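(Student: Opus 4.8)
The plan is to prove the asserted equality by two inclusions, reducing the substantive one to the realizer theorem already established and then indicating the self-contained argument that actually uses the hypotheses \emph{closed upward} and \emph{arc-receptive}. Throughout I would assume, as in the proof of Theorem~\ref{a2}, that $R$ is reflexive, so that $\Delta\subseteq\overline{R}$ and the diagonal is never spuriously separated. Write $\mathcal{S}=\{R'\in\mathcal{R}\mid R' \text{ is a complete, transitive extension of } R\}$. The inclusion $\overline{R}\subseteq\bigcap\mathcal{S}$ is immediate: every $R'\in\mathcal{S}$ is transitive and contains $R$, hence contains the smallest such relation, namely $\overline{R}$.

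For the reverse inclusion I would first observe that a complete transitive relation is exactly an ordering (complete means reflexive and total, so complete and transitive means total quasi-ordering), and that ``$R'\in\mathcal{R}$'' says precisely that $R'$ is a compatible extension, $R\subseteq R'$ and $P(R)\subseteq P(R')$. Hence $\mathcal{S}$ coincides with the set $\mathcal{Q}$ of ordering extensions of $R$. Since $R$ is $\Lambda(m)$-consistent, the realizer theorem (Theorem~\ref{a23}) gives $\overline{R}=\bigcap_{Q\in\mathcal{Q}}Q=\bigcap\mathcal{S}$; this simultaneously yields $\bigcap\mathcal{S}\subseteq\overline{R}$ and shows $\mathcal{S}\neq\emptyset$, so the two displayed sets are equal.

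To exhibit the result as the Duggan-type specialization and to use the two structural hypotheses directly, I would alternatively prove $\bigcap\mathcal{S}\subseteq\overline{R}$ by separation. Fix $(x,y)\notin\overline{R}$ with $x\neq y$. The family $\mathcal{T}=\{T\mid T \text{ transitive},\ T\in\mathcal{R},\ \overline{R}\subseteq T,\ (x,y)\notin T\}$ is nonempty because $\overline{R}\in\mathcal{R}$ lies in it. The union of a chain in $\mathcal{T}$ is transitive, still omits $(x,y)$, still contains $\overline{R}$, and lies in $\mathcal{R}$ by the closed-upward hypothesis; so Zorn's Lemma furnishes a maximal $T^{\ast}\in\mathcal{T}$. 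The crux is to show $T^{\ast}$ is complete. If not, choose $u\neq v$ with $(u,v)\notin T^{\ast}$ and $(v,u)\notin T^{\ast}$; arc-receptivity applied to the transitive $T^{\ast}\in\mathcal{R}$ produces the transitive relations $A=\overline{T^{\ast}\cup\{(u,v)\}}\in\mathcal{R}$ and $B=\overline{T^{\ast}\cup\{(v,u)\}}\in\mathcal{R}$, each properly containing $T^{\ast}$. I claim one of them still omits $(x,y)$: if $(x,y)\in A$, then since $(x,y)\notin T^{\ast}$ a path witnessing it must traverse the new arc $(u,v)$ exactly once (a second traversal would force $(v,u)\in T^{\ast}$), whence $x\,T^{\ast}u$ or $x=u$, and $v\,T^{\ast}y$ or $v=y$; symmetrically $(x,y)\in B$ gives $x\,T^{\ast}v$ or $x=v$, and $u\,T^{\ast}y$ or $u=y$. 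Chaining ``$x\to u$'' with ``$u\to y$'' then forces $(x,y)\in T^{\ast}$ (by transitivity, or by reading off an equality when an endpoint collapses), the case $x=u=y$ being barred by $x\neq y$ --- contradicting $(x,y)\notin T^{\ast}$. Hence $A$ or $B$ is a member of $\mathcal{T}$ strictly larger than $T^{\ast}$, contradicting maximality, so $T^{\ast}$ is complete and belongs to $\mathcal{S}$ while omitting $(x,y)$.

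The main obstacle is precisely this completeness step: one must adjoin a comparison between an incomparable pair $u,v$ without the transitive closure resurrecting the forbidden pair $(x,y)$, and the two-arc dichotomy above is what guarantees that one of the two admissible adjunctions is safe. The role of $\Lambda(m)$-consistency, together with $\overline{R}\in\mathcal{R}$, is to guarantee that $\overline{R}$ is a genuine compatible extension of $R$ (equivalently, by Proposition~\ref{0a1}, that passing to the transitive closure preserves the strict part of $R$), which is exactly what seeds the Zorn argument and secures $\mathcal{S}\neq\emptyset$.
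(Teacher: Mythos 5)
Your proposal is correct, and it takes a genuinely different route after the one step it shares with the paper. The paper also begins by invoking Theorem~\ref{a23} to get $\overline{R}=\bigcap\{R'\mid R' \text{ complete, transitive extension of } R\}$, but it then devotes the rest of its proof to showing that \emph{every} complete, transitive extension $R'$ of $R$ belongs to $\mathcal{R}$, via a Zorn argument that climbs from $\overline{R}$ up to the prescribed $R'$ (adjoining pairs of $R'\setminus\overline{R}$ by arc-receptivity, handling chains by closed-upwardness). You instead either (i) identify $\mathcal{S}$ with the set $\mathcal{Q}$ of ordering extensions outright --- legitimate under the paper's literal definition of $\mathcal{R}$ as the class of compatible extensions of $R$, though, as you note, this leaves the two structural hypotheses idle --- or (ii) run a per-pair separation: a maximal transitive member of $\mathcal{R}$ containing $\overline{R}$ and omitting a fixed $(x,y)\notin\overline{R}$ must be complete by your two-arc dichotomy. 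Route (ii) is what gives the proposition content in Duggan's setting, where $\mathcal{R}$ is merely \emph{a} class of compatible extensions and $\mathcal{S}$ need not equal $\mathcal{Q}$; and it is more robust than the paper's climbing argument, because arc-receptivity never permits adjoining the reverse of a pair already present, so the paper's ascending chain cannot in general reach an $R'$ containing symmetric pairs outside $\overline{R}$. Concretely, take $R=\Delta$ on a two-element set $X$ and let $\mathcal{R}$ be the antisymmetric compatible extensions of $R$: this class is closed upward and arc-receptive and $\overline{R}=\Delta\in\mathcal{R}$, yet the complete transitive extension $X\times X$ lies outside $\mathcal{R}$, so the paper's intermediate claim fails even though the proposition's conclusion (and your separation proof of it) still holds. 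The one point to make explicit in (ii) is that $T^{\ast}\in\mathcal{S}$ requires $P(R)\subseteq P(T^{\ast})$; here this is automatic because members of $\mathcal{R}$ are compatible extensions by definition, but on Duggan's fully general reading (an arbitrary class of relations) it would need the extra check that compatibility is preserved by unions of chains and by arc-adjunction.
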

\begin{proof}To prove the corollary, let $m\in \mathbb{N}$ and $R$ be a $\Lambda(m)$-consistent binary relation
such that $\overline{R}\in {{\mathcal{R}}}$.
It follows from Theorem \ref{a23} that,
\begin{center}
$\overline{R}=\bigcap\{R^{\prime}\ \vert R^{\prime}$ is a complete,
transitive extension of $R\}$.
\end{center}
It remains to prove that $R^{\prime}\in {{\mathcal{R}}}$. Because
$R\subseteq R^{\prime}$ by transitivity of $R^{\prime}$, we obtain
$\overline{R}\subseteq R^{\prime}$. If $R^\prime=\overline{R}$, then
$R^{\prime}\in {{\mathcal{R}}}$. 
Otherwise, suppose that $\overline{R}\subset
R^\prime$. We first show that there exists a transitive extension of
$R$, let $Q$, such that $Q\in {{\mathcal{R}}}$ and $\overline{R}\subset
Q\subseteq R^{\prime}$. 
Indeed, assume that $s,t\in X$ are such that
$(s,t)\in R^{\prime}\setminus \overline{R}$. There are two cases to
consider: (i) $(t,s)\in R^{\prime}$; (ii) $(t,s)\notin R^{\prime}$.
\par\bigskip\par\noindent
Case (i). $(t,s)\in R^{\prime}$. In this case, since ${{\mathcal{R}}}$ is
arc-receptive, $\overline{R}\in {{\mathcal{R}}}$ and $(s,t)\notin
\overline{R}$ we conclude that 
$Q=\overline{\overline{R}\cup
\{(t,s)\}}\in {{\mathcal{R}}}$. We now prove that $Q$ is a transitive
extension of $R$.
Since $Q$ is transitive, it
suffices to show that $Q$ is an extension of $\overline{R}$.
Clearly, $\overline{R}\subset Q$. To verify that
$P(\overline{R})\subset P(Q)$, take any $(p,q)\in P(\overline{R})$
and suppose $(p,q)\notin P(Q)$.

Since $(p,q)\in \overline{R}\subset
\overline{\overline{R}\cup \{(t,s)\}}$, this means that $(q,p)\in
\overline{\overline{R}\cup \{(t,s)\}}$. 
Hence, there exists
$z_{_0},z_{_1},z_{_2},..., z_{_m}\in X$ such that
\begin{center}
$q=z_{_0}\{{\overline{R}\cup \{(t,s)\}}\}z_{_1}\{{\overline{R}\cup
\{(t,s)\}}\}z_{_2}... \{{\overline{R}\cup
\{(t,s)\}}\}z_{_m}=p$.
\end{center}
Thus, there exists at least one $k\in \{0,1,...,m-1\}$ such that
$(z_{_k},z_{_{k+1}})=(t,s)$, for otherwise $(q,p)\in \overline{R}$,
a contradiction.
Let $z_{_\lambda}$ be the first occurrence of $t$
and let $z_{_\mu}$ the last occurrence of $s$. Then, since $(p,q)\in
P(\overline{R})\subseteq \overline{R}$,
\begin{center}
$s=z_{_\mu}\overline{R}
z_{_{\mu+1}}...\overline{R}z_{_m}=p\overline{R}q\overline{R}
z_{_0}...\overline{R}z_{_\lambda}=t$.
\end{center}
Hence, $(s,t)\in \overline{R}$, a contradiction. Since $R^\prime$ is
transitive, $\overline{R}\subset Q\subseteq R^\prime$.
\par\bigskip
\par\noindent
Case (ii). $(t,s)\notin R^{\prime}$. In this case, we must have
$(t,s)\notin \overline{R}$, since otherwise, we must have $(t,s)\in
R^{\prime}$, a contradiction.
\par\noindent
Let $Q=\overline{\overline{R}\cup \{(s,t)\}}$. Then, as in the case
(i), we obtain $Q\in {{\mathcal{R}}}$ and $\overline{R}\subset Q\subseteq
R^\prime$.
Let $\widehat{{\mathcal{Q}}}=(\widehat{Q_i})_{_{i\in I}}$ be the set of
transitive extensions of $R$ such that $\overline{R}\subset
\widehat{Q_{_i}}\subseteq R^{\prime}$ and $\widehat{Q_{_i}}\in {\mathcal
R}$. Let ${\mathcal C}$ be a chain in $\widehat{{\mathcal{Q}}}$, and
$\widehat{C}=\bigcup {\mathcal C}$. Clearly, $\overline{R}\subset
\widehat{C}\subseteq R^{\prime}$. Since ${{\mathcal{R}}}$ is closed upward,
$\widehat{C}\in {{\mathcal{R}}}$. Therefore, by Zorn's lemma,
$\widehat{{\mathcal{Q}}}$ has an element, say $\widetilde{Q}$, that is
maximal with respect to set inclusion. Then,
$R^{\prime}=\widetilde{Q}\in {{\mathcal{R}}}$. Otherwise, there exists
$(s,t)\in R^{\prime}\setminus \widetilde{Q}$ such that
$Q^{\prime}=\overline{\widetilde{Q}\cup\{(s,t)\}}$ or
$Q^{\prime}=\overline{\widetilde{Q}\cup\{(t,s)\}}$ is a transitive
extension of $R$ satisfying $R^m\subset
\widetilde{Q}\subset Q^{\prime}\subseteq R^{\prime}$, which is
impossible by maximality of $\widetilde{Q}$. This completes the
proof.
\end{proof}
\par\smallskip

Since $S$-consistency is equivalent to $\Lambda(\omega)$-consistency, the following
result is an
immediate corollary of the previous proposition.

\begin{corollary}\label{a13}{\rm (Duggan's General Extension Theorem \cite{dug}). Assume ${{\mathcal{R}}}$ is closed upward and arc-receptive. If
$R$ is $S$-consistent and $\overline{R}\in {{\mathcal{R}}}$, then
\begin{center}
$\overline{R}=\bigcap\{R^{\prime}\in {{\mathcal{R}}}\vert R^{\prime}$ is a
complete, transitive extension of $R\}$.
\end{center}
}
\end{corollary}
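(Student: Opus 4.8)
The plan is to derive this statement as the $m=\omega$ special case of Proposition \ref{a1t3}. The index set $\Omega_{_0}=\{\omega,1,2,3,\ldots\}$ explicitly contains the first infinite ordinal $\omega$, so the choice $m=\omega$ is admissible in the hypothesis of that proposition. The only gap to fill is between the hypothesis imposed here---that $R$ is $S$-consistent---and the hypothesis required there---that $R$ is $\Lambda(m)$-consistent for some $m\in\Omega_{_0}$.

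That gap is closed by Proposition \ref{0a1}, which asserts that $R$ is $\Lambda(\omega)$-consistent if and only if $R$ is $S$-consistent. First I would invoke this equivalence to rewrite the assumed $S$-consistency of $R$ as $\Lambda(\omega)$-consistency. Combined with the standing assumptions that $\mathcal{R}$ is closed upward and arc-receptive and that $\overline{R}\in\mathcal{R}$, every hypothesis of Proposition \ref{a1t3} is then satisfied for the value $m=\omega$. Applying that proposition verbatim yields
\[
\overline{R}=\bigcap\{R^{\prime}\in\mathcal{R}\mid R^{\prime}\ \text{is a complete, transitive extension of}\ R\},
\]
which is exactly the asserted identity.

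I do not expect any genuine obstacle here, because all of the substantive work---the Zorn's Lemma maximality argument on transitive extensions trapped between $\overline{R}$ and $R^{\prime}$, the case analysis according to whether $(t,s)\in R^{\prime}$, and the verification that each arc-adjunction $\overline{\overline{R}\cup\{(t,s)\}}$ remains inside $\mathcal{R}$---was already discharged in the proof of Proposition \ref{a1t3}. The one point one might scrutinize is whether $S$-consistency is strong enough to drive that argument; but this is precisely what Proposition \ref{0a1} settles, since it shows that $S$-consistency coincides exactly with the $m=\omega$ instance of $\Lambda(m)$-consistency, so nothing is lost in passing to the specialization.
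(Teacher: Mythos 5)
Your proposal is correct and matches the paper's own argument exactly: the paper also derives this corollary by invoking the equivalence of $S$-consistency with $\Lambda(\omega)$-consistency (Proposition \ref{0a1}) and then applying Proposition \ref{a1t3} with $m=\omega$, which is admissible since $\omega\in\Omega_{_0}$. Nothing further is needed.
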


\par\smallskip

Clearly, Theorem \ref{a23} concludes all the extension theorems referred to
Duggan \cite[pp. 13-14]{dug}.

\section{Applications}

Actually, it is well known that the notion of maximal element has interesting applications to the study of 
economic and game theories.
In fact,
it plays a central role in many economic models, including global maximum of a utility function and Nash equilibrium 
of a noncooperative game or equilibrium of an economy (Debreu \cite{deb}).
We prove the following propositions as a general application
of the notion of inherited type extension theorems.

\par\noindent
\begin{proposition}\label{a14}
{\rm Let $R$ be a $\Delta(m)$-consistent binary relation on some non-empty set $X$, $m\in \mathbb{N}$, and let $x^{\ast}$ be a maximal element of $R$ in $X$.
Then, there exists a linear order
extension $Q$ of $R$ such that $x^{\ast}$ is a maximal element of $Q$ in $X$.}
\end{proposition}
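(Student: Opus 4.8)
The plan is to reduce the statement to two ingredients: the existence of \emph{some} linear order extension of $R$ (guaranteed by Corollary \ref{kl}, since $R$ is $\Delta(m)$-consistent) and a ``lifting'' of $x^{\ast}$ to the top of that extension. The whole argument hinges on one observation about maximality, which I would isolate first, as it is the only genuinely non-routine step.

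The hard part will be the following lemma: if $R$ is $\Delta(m)$-consistent and $x^{\ast}$ is maximal in $R$, then no $a\neq x^{\ast}$ satisfies $aRx^{\ast}$. Indeed, suppose $aRx^{\ast}$ with $a\neq x^{\ast}$. Maximality of $x^{\ast}$ forces $x^{\ast}Ra$, so $a$ and $x^{\ast}$ lie on a $2$-cycle. Traversing this cycle shows $(a,x^{\ast})\in R^{2k+1}$ and $(x^{\ast},a)\in R^{2k+1}$ for every $k$, hence $(a,x^{\ast})\in I(R^{2k+1})$. Choosing $k$ with $2k+1\geq m$ and invoking the $m$-rank of symmetry built into $\Delta(m)$-consistency, one gets $I(R^{2k+1})=I(R^{m})=\Delta$, so $a=x^{\ast}$, a contradiction. (Telescoping the same argument along a path even shows that no $a\neq x^{\ast}$ satisfies $a\overline{R}x^{\ast}$, but the one-step version suffices for the construction below.) The subtle point here is precisely the use of $m$-rank of symmetry to collapse the cycle created by maximality; without it one could only conclude $aI(R)x^{\ast}$, which need not force $a=x^{\ast}$ when $m>1$.

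Next I would take, via Corollary \ref{kl} applied with $Y=\{x^{\ast}\}$ and the trivial ordering $T=\{(x^{\ast},x^{\ast})\}$ (whose incomparability hypothesis is vacuous for a singleton), some linear order extension $Q_{0}$ of $R$, which exists exactly because $R$ is $\Delta(m)$-consistent. I then define the candidate order by lifting $x^{\ast}$ to the top:
\[
Q \;=\; \{(a,b)\in Q_{0}\mid a\neq x^{\ast}\neq b\}\;\cup\;(\{x^{\ast}\}\times X).
\]
A routine case analysis on whether $a,b,c$ equal $x^{\ast}$ shows that $Q$ is reflexive, total, antisymmetric and transitive, i.e.\ a linear order; the key simplifying remark is that the only pair of $Q$ whose second coordinate is $x^{\ast}$ is $(x^{\ast},x^{\ast})$, so $x^{\ast}$ sits as a global top and cannot be dominated.

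Finally I would verify that $Q$ extends $R$. For $R\subseteq Q$: a pair $(a,b)\in R$ with $a,b\neq x^{\ast}$ lies in $Q_{0}$ and hence in $Q$; a pair $(x^{\ast},b)$ lies in $Q$ by construction; and a pair $(a,x^{\ast})$ with $a\neq x^{\ast}$ is excluded by the lemma. For $P(R)\subseteq P(Q)$: given $(a,b)\in P(R)$ we already have $(a,b)\in Q$, while $(b,a)\in Q$ would force either $(b,a)\in Q_{0}$ (contradicting $(a,b)\in P(R)\subseteq P(Q_{0})$) or $b=x^{\ast}=a$, both impossible, the residual case $b=x^{\ast}$ again being killed by the lemma. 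Thus $Q$ is a linear order extension of $R$, and since $x^{\ast}Qb$ for every $b\in X$, the implication $bQx^{\ast}\Rightarrow x^{\ast}Qb$ holds trivially, so $x^{\ast}$ is a maximal element of $Q$, completing the proof.
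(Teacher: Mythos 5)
Your proposal is correct, but it takes a genuinely different route from the paper's own proof. The paper first shows, by the same cycle-collapsing mechanism you use (the $m$-rank of symmetry together with $I(R^m)=\Delta$), that $x^{\ast}$ remains maximal in $\overline{R}$; it then reruns the entire Zorn-type construction of Theorem \ref{a2} under the extra constraint of maximality: for incomparable pairs $(x,y)$ with $x\neq x^{\ast}$ it adjoins the block $\{(\kappa,\lambda)\mid \kappa\overline{R}y \mbox{ and } x\overline{R}\lambda\}$, so that nothing is ever placed strictly above $x^{\ast}$, verifies that $\Delta(m)$-consistency and the maximality of $x^{\ast}$ survive both this one-step extension and unions of chains, and finally obtains the desired complete extension as $\overline{Q}$ for a Zorn-maximal element $Q$ of that family. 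You instead isolate a sharper structural lemma --- under $\Delta(m)$-consistency, maximality of $x^{\ast}$ forbids any $a\neq x^{\ast}$ with $aRx^{\ast}$, because the $2$-cycle that maximality would create puts $(a,x^{\ast})$ in $I(R^{2k+1})=I(R^{m})=\Delta$ for $2k+1\geq m$ --- and then perform surgery on an arbitrary linear order extension $Q_{0}$ supplied by Corollary \ref{kl}: delete every pair involving $x^{\ast}$ and reinsert $x^{\ast}$ as the global top. Your verification that the resulting $Q$ is a linear order extension is sound, since the lemma is exactly what guarantees $R\subseteq Q$ (there is no $R$-pair entering $x^{\ast}$ to lose) and the top position of $x^{\ast}$ makes maximality trivial. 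Your route is shorter, invokes Zorn's lemma only once (inside the already-established Corollary \ref{kl}, rather than redoing the chain argument ad hoc), and in fact proves something stronger: $x^{\ast}$ becomes the \emph{greatest} element of $Q$, not merely a maximal one. What the paper's heavier approach buys is methodological rather than logical: it exhibits the ``inherited-type'' technique the paper is advertising, in which the property to be preserved is threaded through the extension process itself, a template that also applies to constraints which cannot be restored by an after-the-fact relocation of a single element. Both arguments ultimately rest on the same key fact, namely that $\Delta(m)$-consistency converts the symmetric pairs forced by maximality into equalities.
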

\begin{proof} We first show that 
$x^{\ast}$ is a maximal element
of $\overline{R}$. Indeed, suppose to the contrary that 
$(y,x^{\ast})\in P(\overline{R})$ for some $y\in X$.
It then follows that there exists $l\in \mathbb{N}$ and alternatives $t_{_1}, t_{_2},...,t_{_l}$ such that
$y Rt_{_1}... t_{_{m-1}} R t_{_l} R x^{\ast}$. Since $(t_{_l},x^{\ast})\notin P(R)$, we conclude that
$(t_{_l},x^{\ast})\in I(R)\subseteq I(R^m)$. Hence, because of $\Delta(m)$-consistency, we conclude that $t_{_m}=x^{\ast}$.
Similarly, $(t_{_{m-1}},x^{\ast})\in R$, and an induction argument based on this logic yields
$y=x^{\ast}$, a contradiction to $(y,x^{\ast})\in P(\overline{R})$. Hence, $x^{\ast}$ is a maximal element of $\overline{R}$.
If $\overline{R}$ is complete, then it is a linear order extension of $R$ which has $x^{\ast}$ as maximal element.
Otherwise, there are $x, y\in X$ such that $(x,y)\notin \overline{R}$ and $(y,x)\notin \overline{R}$. Clearly, one of $x$ and $y$ is different from $x^{\ast}$. Let $x\neq x^{\ast}$.
We define
\begin{center}
$R^{\ast}=R\cup\{(\kappa,\lambda)\vert \kappa \overline{R}y, \ x\overline{R}\lambda,  (x,y)\in inc(\overline{R})\ x,y\in X, \ {\rm and}\ x\neq x^{\ast}\}.$
\end{center}
Then, as in Theorem \ref{a2}, we conclude that $R^{\ast}$ is a $\Lambda(m)$-consistent extension of $R$.
Since $I((R^{\ast})^m)=I(R^m)=\Delta$, we conclude that $R^{\ast}$ is $\Delta(m)$-consistent.
To show that $x^{\ast}$ is a maximal element
of $R^{\ast}$, 
suppose to the contrary that $(\kappa,x^{\ast})\in P(R^{\ast})$ for some $\kappa\in X$.
Since $x^{\ast}$ is a maximal element of $\overline{R}$, we conclude that $\kappa \overline{R}y$ and $x\overline{R} x^{\ast}$. It follows that 
$(x,x^{\ast})\in I(\overline{R})=I(R^m)=\Delta$, a contradiction to 
$x\neq x^{\ast}$.
Hence, $x^{\ast}$ is a maximal element of $R^{\ast}$.
Suppose that
$\widetilde{{\mathcal{R}}}=\{\widetilde{R}_i\vert i\in I\}$ denote the
set of $\Delta(m)$-consistent extensions of $R$ which has $x^{\ast}$ as maximal element.
Since $R^{\ast}\in \widetilde{{\mathcal{R}}}$ we have that $\widetilde{{\mathcal{R}}}\neq \emptyset$.
Let ${{\mathcal{Q}}}=(Q_{_i})_{_{i\in
I}}$ be a chain in $\widetilde{{\mathcal{R}}}$, and let
$\widehat{Q}=\displaystyle\bigcup_{i\in I}Q_{_i}$. We show
that $\widehat{Q}\in \widetilde{{\mathcal{R}}}$.
As in the proof of Theorem \ref{a2}, we conclude that $\widehat{Q}$ is a $\Delta(m)$-consistent extension of $R$.
To verify that
$x^{\ast}$ is a maximal element of $\widehat{Q}$, take any $y\in X$
and suppose $(y,x^{\ast})\in P(\widehat{Q})=P(\displaystyle\bigcup_{i\in I}Q_{_i})$. Clearly,
$y\neq x^{\ast}$ and $(y,x^{\ast})\in Q_{_{i^{^\ast}}}$
for some $i^{^\ast}\in I$.
Since $(x^{\ast},y)\notin \displaystyle\bigcup_{i\in I}Q_{_i}$
we conclude that $(x^{\ast},y)\notin Q_{_{i}}$ for each $i\in I$.
Hence, $(y,x^{\ast})\in P(Q_{_{i^{^\ast}}})$, a contradiction to $Q_{_{i^{^\ast}}}\in \widetilde{{\mathcal{R}}}$.
Therefore, $\widehat{Q}\in \widetilde{{\mathcal{R}}}$.
By Zorn's lemma $\widetilde{\mathcal R}$  possesses an
element, say $Q$, that is maximal with respect to set
inclusion.
Therefore, as above we can prove that 
$\overline{Q}$ is an extension of $R$ which has $x^{\ast}$ as maximal element. 
We
prove that $\overline{Q}$ is complete. Indeed, take any $x,y\in X$ such that $(x,y)\notin \overline{Q}$ and $(y,x)\notin \overline{Q}$.
We define
\begin{center}
$Q^{\ast}=Q\cup\{(\kappa,\lambda)\vert \kappa \overline{Q}y, \ x \overline{Q}\lambda,  (x,y)\notin \overline{Q},  (y,x)\notin \overline{Q} \ {\rm and}\ x\neq x^{\ast}\}.$
\end{center}
Then, as in case of $R^{\ast}$ above, we have that $Q^{\ast}$ is a $\Delta(m)$-consistent binary relation which has $x^{\ast}$ as maximal element, 
a contradiction to the maximality of $Q$. The last contradiction implies that $\overline{Q}$ is complete.
\end{proof}
\par\bigskip\par

As a corollary of the previous result we have a generalization of Sophie Bade's result in \cite[Theorem 1]{bad}(she uses transitive binary relations) which
shows that the set of Nash equilibria of any game\footnote{In this case, $G=\{(A_{_i},R_{_i})\vert i\in I\}$ is an arbitrary (normal-form)
game. Where $I$ is a set of players, player $i$'s nonempty action space is denoted
by $A_{_i}$ and $R_{_i}$ is player $i$'s preference relation on the outcome space $A=\displaystyle\prod_{i\in I}A_{_i}$.} with incomplete preferences
can be characterized in terms of certain derived games with complete preferences.
More general, it is shown a similarity between
the theory of games with incomplete preferences and the existing theory of
games with complete preferences. I put in mind the following definition:
\par\smallskip
\par\noindent
{\bf Definition}. We say that a game $G^{\prime}=\{(A_{_i},R^{\prime}_{_i})\vert i\in I\}$ is a completion of a game
$G=\{(A_{_i},R_{_i})\vert i\in I\}$ if $R^{\prime}_{_i}$ is a complete extension of $R_{_i}$ for each $i$.
In what follows, we denote the set of all Nash equilibria\footnote{An action profile
$a=(a_{_1},...,a_{_{|I|}})$ is a {\it Nash equilibrium} if
for no player
$i$ there exists an action $a^{\prime}_{_i}\in A_{_i}$ such that $(a^{\prime}_{_i},a_{_{-i}})R_{_i}(a_{_i},a_{_{-i}})$.}
of a game $G$ by $N(G)$. In the following theorem, each preference relation $R_{_i}$ is assumed to be $\Delta(m)$-consistent for some $m\in\mathbb{N}$.

\par\smallskip
\par\noindent
\begin{corollary}\label{a15} {\rm Let $G=\{(A_{_i},R_{_i})\vert i\in I\}$
be any game. Then}
\begin{center}
$N(G)=\bigcup\{N(G^{\prime})\vert  G^{\prime}$ {\rm is a completion of} $G\}$.
\end{center}
\end{corollary}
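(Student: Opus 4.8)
The plan is to establish the two set inclusions separately, after reducing the statement to a per-player problem. An action profile $a=(a_{_1},\dots,a_{_{|I|}})$ is a Nash equilibrium of $G$ exactly when, for every player $i$, there is no deviation $b=(b_{_i},a_{_{-i}})$ in the slice $D_{_i}:=\{(b_{_i},a_{_{-i}})\mid b_{_i}\in A_{_i}\}$ with $(b,a)\in P(R_{_i})$; equivalently, $a$ is a maximal element of the restriction of $R_{_i}$ to $D_{_i}$. Since a completion chooses the extension $R^{\prime}_{_i}$ of each $R_{_i}$ independently, and since being a Nash equilibrium is the conjunction over players of this best-reply condition, it suffices to show, for each fixed $i$, that the best reply at $a$ can be preserved by some complete extension of $R_{_i}$, and then to assemble the chosen extensions into a single completion $G^{\prime}$.

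The inclusion $N(G)\supseteq\bigcup\{N(G^{\prime})\mid G^{\prime}\text{ is a completion of }G\}$ is immediate. If $a\in N(G^{\prime})$ for some completion $G^{\prime}$, then for each $i$ no $b\in D_{_i}$ satisfies $(b,a)\in P(R^{\prime}_{_i})$; because $R^{\prime}_{_i}$ is an extension of $R_{_i}$ we have $P(R_{_i})\subseteq P(R^{\prime}_{_i})$, so a fortiori no $b\in D_{_i}$ satisfies $(b,a)\in P(R_{_i})$, whence $a\in N(G)$.

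For the reverse inclusion, let $a\in N(G)$ and fix $i$. The restriction of $R_{_i}$ to $D_{_i}$ is $\Delta(m)$-consistent and has $a$ as a maximal element, so Proposition \ref{a14}, applied with $D_{_i}$ in the role of the underlying set and $a$ in the role of $x^{\ast}$, yields a linear order extension of $R_{_i}$ on $D_{_i}$ that still has $a$ maximal. I would then lift this slice order to a complete extension $R^{\prime}_{_i}$ of $R_{_i}$ on all of $A$ by rerunning the Zorn's lemma construction in the proof of Theorem \ref{a2}: one maximizes over the family of $\Delta(m)$-consistent extensions of $R_{_i}$ that keep $a$ maximal on $D_{_i}$, completing each incomparable pair exactly as in Proposition \ref{a14}, but \emph{refusing} any completing pair that would place some $b\in D_{_i}$ strictly above $a$. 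Choosing $R^{\prime}_{_i}$ to be such a maximal extension for every $i$ then produces a completion $G^{\prime}$ with $a\in N(G^{\prime})$.

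The main obstacle is precisely this lifting step, namely controlling the interaction between the slice $D_{_i}$ and the rest of $A$. Because $R_{_i}$ is only $\Delta(m)$-consistent and need not be transitive, a deviation $b\in D_{_i}$ may fail to dominate $a$ \emph{directly} and yet be joined to $a$ by an $\overline{R_{_i}}$-chain that leaves the slice, along which a complete extension could import a strict domination $(b,a)\in P(R^{\prime}_{_i})$ by transitivity; Proposition \ref{a14} avoids the analogous difficulty only because its protected element is globally maximal. Closing this gap requires upgrading the hypothesis from direct to transitive-closure maximality, that is, showing that a best reply $a$ for $R_{_i}$ along $D_{_i}$ is also maximal for $\overline{R_{_i}}$ along $D_{_i}$. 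This is exactly where $\Delta(m)$-consistency should be invoked as in the opening paragraph of the proof of Proposition \ref{a14}, collapsing any offending chain through the identity $I(R_{_i}^m)=\Delta$; once slice-maximality for $\overline{R_{_i}}$ is secured, checking that the refusal rule never conflicts with $\Delta(m)$-consistency is the only remaining routine verification.
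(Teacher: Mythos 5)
Your skeleton (the easy inclusion via $P(R_i)\subseteq P(R'_i)$, then per-player slice maximality, Proposition \ref{a14}, and assembly into one completion) is the same as the paper's, and you are right that the paper's one-line appeal to Proposition \ref{a14} is not literal: that proposition assumes and preserves maximality in the \emph{whole} set, not in a slice. But the repair you propose is false. You claim that $\Delta(m)$-consistency upgrades maximality of $a$ for $R_i$ along $D_i$ to maximality for $\overline{R_i}$ along $D_i$, ``collapsing any offending chain through $I(R_i^m)=\Delta$.'' That collapse works in Proposition \ref{a14} only because the protected element is \emph{globally} maximal: there, the last link $(t_l,x^{\ast})\in R$ of an offending chain is forced into $I(R)$ by maximality, and only then does $I(R^m)=\Delta$ drive the induction. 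With slice maximality the chain may pass outside $D_i$ and no link is forced to be symmetric. Concretely, take two players with $A_1=\{x,y\}$, $A_2=\{u,v\}$, $a^{\ast}=(x,u)$, $b=(y,u)$, $c=(y,v)$, and let $R_1=\Delta\cup\{(b,c),(c,a^{\ast})\}$, $R_2=\Delta$. Each $R_i$ is reflexive, $S$-consistent and satisfies $I(\overline{R_i})=\Delta$, hence is $\Delta(m)$-consistent in the paper's sense; $a^{\ast}$ is a Nash equilibrium of $G$ since $(b,a^{\ast})\notin R_1$ and player $2$'s deviation is incomparable to $a^{\ast}$. Yet $(b,a^{\ast})\in P(\overline{R_1})$ with $b$ in player $1$'s slice, so your intermediate claim fails.

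The same example shows the lifting step cannot be fixed along your lines: your construction (Zorn over $\Delta(m)$-consistent extensions, then transitive closure as in Theorem \ref{a2}) outputs a \emph{transitive} completion, and every transitive complete extension $R_1'$ of $R_1$ must put $(b,a^{\ast})\in P(R_1')$, because it preserves $(b,c),(c,a^{\ast})\in P(R_1)$ and transitivity plus these two strict comparisons force the third. Hence $a^{\ast}$ is a Nash equilibrium of \emph{no} transitive completion, and the ``refusal rule'' you impose is incompatible with completeness: the Zorn argument's final step (complete any remaining incomparable pair, then close transitively) breaks down exactly at the refused pair. The corollary survives only because the paper's completions are required to be merely \emph{complete} extensions, with no transitivity demanded, and the correct argument exploits this rather than Proposition \ref{a14}: given $a^{\ast}\in N(G)$, set $R_i'=R_i\cup\{(a^{\ast},b)\,\vert\, b\in B_i\}\cup\{(u,v),(v,u)\,\vert\,\{u,v\}\ \mbox{otherwise incomparable}\}$. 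Slice maximality of $a^{\ast}$ for $R_i$ itself (not for $\overline{R_i}$) is exactly what guarantees $P(R_i)\subseteq P(R_i')$, since the only way a reversed pair $(q,p)$ could have been added against some $(p,q)\in P(R_i)$ is $q=a^{\ast}$, $p\in B_i$; then $R_i'$ is a complete extension keeping $a^{\ast}$ a best reply, and assembling these over $i$ gives the required completion $G'$ with $a^{\ast}\in N(G')$.
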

\begin{proof} Clearly, $\bigcup\{N(G^{\prime})\vert  G^{\prime}$ {\rm is a completion of} $G\}\subseteq N(G)$.
Conversely, let $a^{\ast}\in N(G)$, that is, $a^{\ast}$ is a Nash
equilibrium of $G$. Let us define $B_{_i}=\{(a_i,a^{\ast}_{-i})\vert a_i \in A_i\}$ for all players $i$. Then,
for any player $i$, $a^{\ast}$ is a maximal element of $R_{_i}$ in $B_i$. By Proposition \ref{a14}, there exists a
completion $R^{\prime}_{_i}$ of $R_{_i}$ for each player $i$ such that $a^{\ast}$ is maximal point of $R^{\prime}_{_i}$ in $B_i$.
Consequently $a^{\ast}$ is a Nash equilibrium of the completion $G^{\prime}=\{(A_{_i},R^{\prime}_{_i})\vert i\in I\}$. Hence,
$ N(G)\subseteq \bigcup\{N(G^{\prime})\vert  G^{\prime}$ {\rm is a completion of} $G\}$.
\end{proof}

As we have pointed out in the introduction, we are often interested in particular 
binary relations
which have an ordering
extension that satisfies some additional conditions. 
The following proposition, which generalizes the main result in \cite{jaf}, is
an application to this specific case.

\begin{proposition}\label{zxdsa}{\rm Let 
$(X,\tau)$ be a topological space and $m\in\Omega_{_0}$. If $R$ is an asymmetric, $\Lambda(m)$-consistent
and upper semicontinuous binary relation on $X$, then $R$ has an upper semicontinuous strict linear order extension.}
\end{proposition}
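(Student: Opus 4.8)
The plan is to reduce the statement to the strict-order existence results already proved and to the semicontinuous totalization technique of Jaffray. First I would unpack what the hypotheses give. Since $R$ is asymmetric we have $P(R)=R$, so upper semicontinuity says precisely that every lower section $\{x\in X\mid xRy\}$ is $\tau$-open. Because $R$ is moreover $\Lambda(m)$-consistent, Corollary \ref{a10} guarantees that strict linear order extensions of $R$ exist and that $\overline{R}$ is their intersection; in particular $\overline{R}$ is a strict partial order. Equivalently, as in the proof of Corollary \ref{a10}, $R\cup\Delta$ is $\Delta(m)$-consistent, and it remains upper semicontinuous because $P(R\cup\Delta)=P(R)=R$, so its lower sections coincide with those of $R$. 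This places us exactly in the setting where a strict linear order with open lower sections is sought.

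Next I would build the extension by Zorn's lemma applied to the family $\mathcal{E}$ of upper semicontinuous strict partial orders on $X$ containing $R$, ordered by inclusion. The decisive bookkeeping observation is that upper semicontinuity survives unions of chains: for a chain $(Q_i)$ one has $\{x\mid x(\bigcup_i Q_i)y\}=\bigcup_i\{x\mid xQ_iy\}$, a union of open sets, while transitivity and asymmetry of $\bigcup_i Q_i$ follow since any finite collection of pairs already lies in a single $Q_i$. Hence every chain has an upper bound in $\mathcal{E}$ and a maximal element $Q^{\ast}$ exists, provided $\mathcal{E}\neq\emptyset$.

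The substance of the argument, and the step I expect to be the main obstacle, is to show both that $\mathcal{E}$ is nonempty (there is an upper semicontinuous \emph{transitive} extension of $R$ to start from) and that the maximal $Q^{\ast}$ is total. Both points are delicate precisely because the transitive closure of an upper semicontinuous relation need not be upper semicontinuous: adjoining a pair $(a,b)$ and re-closing forces the possibly non-interior point $a$ into the lower section of every element lying $Q^{\ast}$-above $b$, which can destroy openness. This is exactly the difficulty resolved in the construction of Theorem 1 of \cite{jaf} (pp. 399--400), and it is handled by letting a subbase of $\tau$ dictate the direction in which each incomparability is resolved: given incomparable $a$ and $b$, one selects a subbasic open set separating them and adjoins the comparison pointing the ``open'' way, so that the enlarged lower sections stay open. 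I would carry out this subbase-guided insertion, exactly as in the proof of Corollary \ref{kl}, to manufacture from any non-total member of $\mathcal{E}$ a strictly larger member, contradicting maximality of $Q^{\ast}$ unless $Q^{\ast}$ is already total.

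Finally, the relation $Q^{\ast}$ produced this way is asymmetric, transitive, total, and has open lower sections, i.e. it is an upper semicontinuous strict linear order; and since $R\subseteq Q^{\ast}$ with $P(R)=R\subseteq Q^{\ast}=P(Q^{\ast})$, it is an extension of $R$ in the required sense. This would complete the argument, the only genuinely topological work being the subbase separation step imported from \cite{jaf}.
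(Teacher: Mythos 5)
You take a genuinely different route from the paper, and it breaks down at exactly the step you yourself flag as the main obstacle. One half of that obstacle is actually not an obstacle at all: nonemptiness of $\mathcal{E}$ is automatic. Since $R$ is asymmetric, $P(R)=R$, so every section $\{x\mid xRz\}$ is open; by induction $\{x\mid xR^{k+1}y\}=\bigcup_{z:\,zRy}\{x\mid xR^{k}z\}$ is a union of open sets, hence all lower sections of $\overline{R}=\bigcup_{k}R^{k}$ are open, and (granting Corollary \ref{a10}, which makes $\overline{R}$ asymmetric) $\overline{R}$ is itself an upper semicontinuous strict partial order lying in $\mathcal{E}$. The genuine gap is the totality of the maximal element. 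Your ``subbase-guided insertion'' presupposes that an incomparable pair $a,b$ can be separated by an open set, but no separation axiom whatsoever is imposed on $\tau$. If $a$ and $b$ are topologically indistinguishable and $\overline{R}$-incomparable, then \emph{every} strict linear order extension $L$ must rank them, say $aLb$, and the lower section $\{x\mid xLb\}$ contains $a$ but not $b$ (by irreflexivity), hence cannot be open; so no insertion in either direction keeps you inside $\mathcal{E}$, and the maximal element of $\mathcal{E}$ is simply not total. Concretely, take $X=\{a,b\}$ with the indiscrete topology and $R=\emptyset$: all hypotheses of the proposition hold ($R$ is asymmetric, $S$-consistent hence $\Lambda(\omega)$-consistent, and upper semicontinuous), yet $\mathcal{E}=\{\emptyset\}$ and no upper semicontinuous strict linear order on $X$ exists. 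So your maximality argument cannot close as stated --- and this example also shows that the difficulty is not mere bookkeeping but requires a separation-type hypothesis of the kind living inside Jaffray's construction.

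The paper's own proof has a different architecture: it runs no Zorn argument over upper semicontinuous relations at all. It first passes to the quotient $X^{\approx}$ under the equivalence $x\approx y$ iff $x$ and $y$ cover each other (identical $R$-predecessors and $R$-successors), notes that the induced relation $R^{\approx}$ is asymmetric and $\Lambda(m)$-consistent, extends it by the purely order-theoretic Corollary \ref{last}, and only then invokes the construction of Theorem 1 of \cite{jaf} (pp.~399--400), using a subbase of $\tau$, to produce in one global step a strict linear order extension of $R$ whose lower sections lie in $\tau$. In other words, the topological content is delegated wholesale to Jaffray's construction after the quotient step, not recreated pair-by-pair inside a maximality argument. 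Note also that your citation of Corollary \ref{kl} for the subbase technique is misplaced: that corollary is purely order-theoretic (the sentence mentioning $\tau$ in its printed proof is a stray fragment of the present proposition's proof), so it cannot supply the topological mechanism your insertion step needs.
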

\begin{proof} 
To begin with, we associate to $R$ the equivalence relation $\approx$ defined by
\begin{center}
$x\approx y$ if and only if ($\forall z\in X) [(zRx \Leftrightarrow zRy)$  and $ (xRz \Leftrightarrow yRz)]$,
\end{center}
that is,
$x\approx y$ if and only if $x$ covers $y$ and $y$ covers $x$.
The quotient set by this equivalence relation $\approx$ will be denoted
 ${X\over \approx}=X^{^\approx}$, and its elements (equivalence classes) by $[x]$.

An asymmetric, $\Lambda(m)$-consistent binary relation $R^{^\approx}$ is defined by:
\begin{center}
$\forall [x], [y]\in X^{^\approx}\ ([x]R^{^\approx} [y]\ \Leftrightarrow\ \exists x^\prime\in [x],\ \exists y^\prime\in [y], x^\prime Ry^\prime)$.
\end{center}
According to Corollary \ref{last}, there exists a strict linear order extension $\widetilde{Q^{^\approx}}$ of $R^{^\approx}$.
Therefore, 
as in the proof of Theorem 1 \cite[Pages 399-400]{jaf}, by using a subbase of $\tau$, we construct a strict linear order extension
$R^{\ast}$ of $R$ such that for each $y\in X$ the set $\{x\in X\vert xR^{\ast}y\}$ belongs to $\tau$.
\end{proof}

In direction of the inherited type Szpilrajn extension theorems, Demuynck \cite{dem} give results for complete extensions satisfying various additional properties
such as
convexity, homotheticity and monotonicity.
Since Demuynck's paper generalizes $S$-consistency by replacing the transitive closure $\overline{R}$ of $R$ with a more general function $F$,
I conjecture that these results 
can be extended to the case of $\Lambda(m)$-consistent binary relation for all $m\in \Omega_{_0}$.

The following proposition
is given
 as a general application
of the Dushnik-Miller's inherited type extension theorem.

For each $x\in X$, we define (see \cite[Page 20]{nac}):
$i(x)=\{y\in X\vert x\overline{R}y\}$, $d(x)=\{y\in X\vert y\overline{R}x\}$ and  
$I_x=i(x)\cup d(x)$. 
For any $x\in X$,
$x^m$ and $x^M$ denote the minimum and the maximum of $I_x$.

\begin{definition}{\rm A binary relation $R$ on $X$ has {\it finite decomposition incomparability}, if there exists $n\in\mathbb{N}$ and 
$(x_{_\mu},y_{_\mu})\in inc(\overline{R})$, $\mu\in \{1,2,...,n\}$, such that:

(1) $(x_{\mu}^{m},y_{\mu}^{M})\notin \overline{R}$ and 
$(y_{\mu}^{m},x_{\mu}^{M})\notin \overline{R}$, and 

(2) $inc(\overline{R})=\{(\kappa,\lambda)\in I_{_{x_{_\mu}}}\times I_{_{y_{_\mu}}}\vert (\kappa,\lambda)\in inc(\overline{R}),\
1\leq \mu\leq n\}$.}
\end{definition}

\begin{proposition}
{\rm Let $m\in\Omega_0$ and let $R$ be a continuous $\Lambda(m)$-consistent binary relation
on a topological space $(X,\tau)$
having
finite decomposition incomparability.
Then, the dimension of 
$R$ is finite.
}
\end{proposition}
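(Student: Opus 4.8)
The plan is to extract, from the realizer of $\overline{R}$ furnished by Theorem~\ref{a23} (the set of all ordering extensions of $R$, which is a realizer precisely because $R$ is $\Lambda(m)$-consistent), a \emph{finite} subfamily whose intersection is still $\overline{R}$; this bounds the dimension of $R$ by the size of that subfamily. Because every ordering extension of $R$ is transitive and contains $R$, it contains $\overline{R}$, so the intersection of \emph{any} family of ordering extensions already contains $\overline{R}$, and the whole work is to arrange that each pair lying outside $\overline{R}$ is cut by at least one member of a finite family. As in the proof of Theorem~\ref{a2} I first assume, without loss of generality, that $R$ (hence $\overline{R}$) is reflexive, so that $x\in I_{x}$ for every $x$.

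For each $\mu\in\{1,\dots,n\}$ I would isolate the two \emph{extremal} pairs $P_\mu^{+}=(x_\mu^{M},y_\mu^{m})$ and $P_\mu^{-}=(y_\mu^{M},x_\mu^{m})$. The first task is to check $P_\mu^{\pm}\in inc(\overline{R})$. One direction of noncomparability is exactly condition~(1) of finite decomposition incomparability, namely $(x_\mu^{m},y_\mu^{M})\notin\overline{R}$ and $(y_\mu^{m},x_\mu^{M})\notin\overline{R}$; the other direction follows from the incomparability of $(x_\mu,y_\mu)$, since $(x_\mu^{M},y_\mu^{m})\in\overline{R}$ together with $x_\mu\,\overline{R}\,x_\mu^{M}$ and $y_\mu^{m}\,\overline{R}\,y_\mu$ would force $x_\mu\,\overline{R}\,y_\mu$, a contradiction (and symmetrically for $P_\mu^{-}$). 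The decisive point is a \emph{sandwiching} property: for any $(\kappa,\lambda)\in I_{x_\mu}\times I_{y_\mu}$ the defining inequalities of the extrema give $x_\mu^{m}\,\overline{R}\,\kappa\,\overline{R}\,x_\mu^{M}$ and $y_\mu^{m}\,\overline{R}\,\lambda\,\overline{R}\,y_\mu^{M}$, so that forcing $P_\mu^{+}$ (resp. $P_\mu^{-}$) in an ordering extension propagates, by transitivity, to $(\kappa,\lambda)$ (resp. $(\lambda,\kappa)$).

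Next, by Corollary~\ref{aty2} applied to the incomparable pairs $P_\mu^{\pm}$, I would choose ordering extensions $Q_\mu^{+}$ and $Q_\mu^{-}$ of $R$ with $P_\mu^{+}\in P(Q_\mu^{+})$ and $P_\mu^{-}\in P(Q_\mu^{-})$; the fact that the forced pair lands in the \emph{asymmetric} part comes from the construction behind Theorem~\ref{a2}, where $Q/Y=T$ and $T$ is taken to be the strict order on the two--point set. Put $\mathcal{F}=\{Q_\mu^{+},Q_\mu^{-}\mid 1\le\mu\le n\}$, a family of size at most $2n$. To show $\bigcap\mathcal{F}=\overline{R}$, take any $(a,b)\notin\overline{R}$. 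If $(b,a)\in P(\overline{R})$, then $(b,a)\in P(Q)$ for every ordering extension $Q$, so $(a,b)$ is absent from each member of $\mathcal{F}$. If instead $(a,b)\in inc(\overline{R})$, then by condition~(2) there is a $\mu$ with $a\in I_{x_\mu}$ and $b\in I_{y_\mu}$; feeding $b\,\overline{R}\,y_\mu^{M}$, $P_\mu^{-}\in P(Q_\mu^{-})$ and $x_\mu^{m}\,\overline{R}\,a$ through the transitive sandwich yields $(b,a)\in P(Q_\mu^{-})$, whence $(a,b)\notin Q_\mu^{-}$. In either case $(a,b)\notin\bigcap\mathcal{F}$, giving $\bigcap\mathcal{F}\subseteq\overline{R}$; the reverse inclusion is immediate, and the dimension of $R$ is at most $2n$.

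I expect two points to require the most care. The first is the \emph{strictness} of the forced relation: one must ensure $P_\mu^{\pm}\in P(Q_\mu^{\pm})$ and then propagate this strictness along the sandwich (if $(a,b)\in Q_\mu^{-}$ held, the chain $x_\mu^{m}\,Q_\mu^{-}\,a\,Q_\mu^{-}\,b\,Q_\mu^{-}\,y_\mu^{M}$ would give $(x_\mu^{m},y_\mu^{M})\in Q_\mu^{-}$, contradicting $P_\mu^{-}\in P(Q_\mu^{-})$), for without genuine cutting the offending pair would survive in the intersection. The second is the existence of the extrema $x_\mu^{m},x_\mu^{M},y_\mu^{m},y_\mu^{M}$ of $I_{x_\mu},I_{y_\mu}$, which is where continuity of $R$ enters, through the closedness of the sets $i(\cdot)$ and $d(\cdot)$. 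The remainder is routine transitive bookkeeping.
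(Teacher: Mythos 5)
Your proposal is correct and takes essentially the same route as the paper's own proof: both arguments pass to the $2n$ extremal incomparable pairs $(x_\mu^M,y_\mu^m)$ and $(y_\mu^M,x_\mu^m)$, force each of them (strictly) in an ordering extension of $R$ --- the paper via the augmented relations $\mathfrak{R}_\mu$, $\mathfrak{R}_\mu^D$ and Theorem~\ref{a2}, you via Corollary~\ref{aty2} with $Q/Y=T$ strictness --- and then recover $\overline{R}$ as the intersection of these $2n$ extensions through the transitive sandwich $x_\mu^m\,\overline{R}\,\kappa\,\overline{R}\,x_\mu^M$ and $y_\mu^m\,\overline{R}\,\lambda\,\overline{R}\,y_\mu^M$. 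The small divergences do not alter the argument (you derive incomparability of the extremal pairs from transitivity through $(x_\mu,y_\mu)$ where the paper invokes continuity, and your strictness-propagation step makes explicit what the paper compresses into ``proceeds in a similar way to Theorem~\ref{a23}''), though note that, exactly as in the paper, the existence of the extrema $x_\mu^m,x_\mu^M,y_\mu^m,y_\mu^M$ requires compactness of the sets $I_{x_\mu},I_{y_\mu}$, not merely the closedness that continuity of $R$ provides.
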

\begin{proof}
Without loss of generality, assume that $R$ is reflexive. We first show that for each $x\in X$ the sets $i(x)$ and $d(x)$ are closed.
To prove the case of $i(x)$, let $z\notin i(x)$. Then, $(x,z)\notin \overline{R}\supseteq R$. Then, by \cite[Proposition 1]{nac}, there exists 
an open $R$-increasing
neighbourhood $O_x$ of $x$ and 
an open $R$-decreasing
neighbourhood $O_z$ of $z$ such that $O_x\cap O_z=\emptyset$. 
Since $x\in O_x$ and $O_x$ is $R$-increasing we conclude that $i(x)\subseteq O_x$. 
It follows that $z\in O_z\subseteq X\setminus i(x)$.
Therefore, $i(x)$is closed.
Similarly, we prove that $d(x)$ is closed which implies that 
$I_{_x}$ is closed as well. Hence, $I_{_x}$ is compact.
Then, $I_{_x}$ has a
maximum element $x^{M}$ and a minimum element $x^{m}$. To see this, note that if $I_{_x}$ has no largest element, then $\{I_{_x}\setminus d(z)\vert z\in I_{_x}\}$ is an open cover of $I_{_x}$ in subspace topology with no finite subcover, 
and if $I_{_x}$ has no smallest element, then $\{I_{_x}\setminus i(z)\vert z\in I_{_x}\}$  is an open cover of $I_{x}$ in subspace topology with no finite subcover.
Since
$R$ has finite decomposition incomparability, there exists $n\in\mathbb{N}$ and 
$(x_{_{\mu}},y_{_\mu})\in inc(\overline{R})$, $\mu\in \{1,2,...,n\}$, such that 
$(x_{\mu}^{m},y_{\mu}^{M})\notin \overline{R}$ and 
$(y_{\mu}^{m},x_{\mu}^{M})\notin \overline{R}$, and 
$inc(\overline{R})=\{(\kappa,\lambda)\in I_{_{x_{_\mu}}}\times I_{_{y_{_\mu}}}\vert 1\leq \mu\leq n\}$.
On the other hand, by the continuity of $R$ we have $(y_{\mu}^{M},x_{\mu}^{m})\notin \overline{R}$ and 
$(x_{\mu}^{M},y_{\mu}^{m})\notin \overline{R}$. It follows that 
$(x_{\mu}^{m},y_{\mu}^{M}), (y_{\mu}^{m},x_{\mu}^{M})\in inc(\overline{R})$.
We prove that $dim(R)\leq n$. We define
\begin{center}
$\mathfrak{R}_\mu=R\cup\{(\kappa,\lambda)\in X\times X\vert \kappa\overline{R}y_{\mu}^M\ {\rm and}\ x_{\mu}^m\overline{R}\lambda\}$
\end{center}
and
\begin{center}
$\mathfrak{R}_\mu^D=R\cup\{(\lambda,\kappa)\in X\times X\vert \lambda\overline{R}x_{\mu}^M\ {\rm and}\ y_{\mu}^m\overline{R}\kappa\}$.
\end{center}
By Theorem \ref{a2},  there exist linear order extensions $Q_\mu$ and $Q_\mu^D$ of
$R$ such that 
\begin{center}
$inc(\overline{R})\cap (I_{_{y_{_\mu}}}\times I_{_{x_{_\mu}}})\subseteq Q_\mu$ and $inc(\overline{R})\cap (I_{_{x_{_\mu}}}\times I_{_{y_{_\mu}}})\subseteq Q_\mu^D$.
\end{center}

We prove that $\overline{R}=\displaystyle\bigcap_{m=1}^n(Q_m\cap Q_m^D)$. Clearly,  $\overline{R}\subseteq \displaystyle\bigcap_{m=1}^n(Q_m\cap Q_m^D)$.
To prove the converse, let $(\alpha,\beta)\in  \displaystyle\bigcap_{m=1}^n(Q_m\cap Q_m^D)$ and $(\alpha,\beta)\notin \overline{R}$.
The proof proceeds in a similar way to Theorem \ref{a23}, as follows:
We first prove that $(\beta,\alpha)\notin \overline{R}$ and by the finite decomposition incomparability property of $R$, 
there exists $\mu^{\ast}\in \{1,2,...,n\}$ such that
$(\alpha,\beta)\in I_{_{x_{_{\mu^{\ast}}}}}\times I_{_{y_{_{\mu^{\ast}}}}}$. Then, we prove that $(\alpha,\beta)\notin Q_{\mu^{\ast}}$,
a contradiction to  $(\alpha,\beta)\in  \displaystyle\bigcap_{m=1}^n(Q_m\cap Q_m^D)$. The last conclusion completes the proof.
\end{proof}

Another example is the following:
In the games that are compositions of $m$ individualist
games\footnote{If a game with player set $N=\{1,...,n\}$ admits a partition $N_{_1},...,N_{_m}$ in such a way
that
\begin{center}
${\mathcal W}=\{S\subseteq N: |S\cap N_{_i}|\geq 1$,\  for all $i=1,...,m\}$
\end{center}
\par\noindent
we shall say that this game is a {\it composition of $m$ individualist games via unanimity.}}
$(N,u_{_i})$ $(i=1,...,m)$ via unanimity, the usual description of the game, by means
of minimal winning coalitions, requires $n_{_1} \cdot...\cdot n_{_m}$ coalitions (with $n_{_i}=|N_{_i}|)$ and if each
one of them has $m$ players, then, $m\cdot n_{_1} \cdot...\cdot n_{_m}$
digits are needed to describe the game.
Using \cite[Theorem 3.1]{FP},\footnote{Let $(N,\mathcal{W})$ be a composition of $m$ individualist games
$(N_{_i},u_{_i})$ $(i=1,...,m)$ with $1\leq n_{_1}\leq ...\leq n_{_m}$ via unanimity and let $p<m$ such that either $n_{_p}= 1$,
$n_{_{p+1}}>1$ or $p=0$ if $n_{_1}>1$. Then the dimension of $(N,\mathcal{W})$ is $m-p$.
}$(n+1)\cdot(m-p)$ ($p<m$) digits are required to describe the game.
This latter number is generally much smaller than the former, and so, the description
of the game is much shorter.

Many other
interesting applications of the dimension of a binary relation are obtained in Economics. For example, Ok \cite[Proposition 1]{ok}
shows that if $(X,\succ)$ is a preordered set with $X$ countable
and $dim(X,\succ)<\infty$, then $\succ$ is representable by means of a real function $u$ in such a way that\ $x\succ
y$ if and only if $u(x)>u(y)$.
From the multicriteria point of view, the classical crisp\footnote{Given a
finite set of alternatives $X=\{x_1,x_2,...,x_n\}$, a crisp partial
order set $R\subseteq X\times X$ is characterized by a mapping
\begin{center}
$\mu: X\times X \longrightarrow \{0,1\}$
\end{center}
being
\par
(i) \ irreflexive: $\mu(x_i,x_i)=0$\ \ $\forall x_i\in X,$
\par
(ii) antisymmetric:  $\mu(x_i,x_j)=1 \Rightarrow  \mu(x_j,x_i)=0$,
\par
(iii) transitive:  $\mu(x_i,x_j)=\mu(x_j,x_k)=1\Rightarrow
\mu(x_i,x_k)=1$.
It is therefore assumed that  $\mu(x_i,x_j)=1$ means that
alternative $x_i$ is strictly better than $x_j$ ($\mu(x_i,x_j)=0$
otherwise).} dimension
refers to a minimal representation of crisp partial orders as the
intersection of linear orders, in the sense that each of one of
these linear orders is a possible underlying criterion.
Brightwell and Scheinerman \cite{BS},
on the basis of Dushnik-Miller's original theorem, prove that
the fractional dimension\footnote{Brightwell and Scheinerman \cite{BS} introduce the notion of {\it fractional dimension}
of a poset $(X,\succ)$. Let $\mathcal{F}=\{L_{_1},L_{_2},...,L_{_t}\}$ be a nonempty multiset of linear extensions
of $(X,\succ)$. The authors in \cite{BS} call $\mathcal{F}$ a $k$-fold realizer of $(X,\succ)$ if for each incomparable
pair $(x,y)$, there are at least $k$ linear extensions in $\mathcal{F}$ which reverse the pair
$(x,y)$, i.e., |\{$i=1,...,t: y<x$ in $L_{_i}\}|\geq k$. We call a $k$-fold realizer of size $t$
a $-t$-realizer. The {\it fractional dimension} of $(X,\succ)$ is then the least rational number
$fdim(X,\succ)\geq 1$ for which there exists a $k-t$-realizer of $(X,\succ)$ so that
${k\over t}\geq  {1\over {fdim(X,\succ)}}$.
Using this terminology, the dimension of $(X,\succ)$, is the least $t$ for
which there exists a $1$-fold realizer of $(X,\succ)$.} of a partially ordered set $(X,\succ)$ arises naturally when considering a particular
two-person game on $(X,\succ)$, e.t.c.

\par\bigskip\smallskip\par\noindent

\par\noindent
{\it Address}: {\tt {Athanasios Andrikopoulos} \\ {Department of Economics\\ University of Ioannina\\ Greece}
\par\noindent
{\it E-mail address}:{\tt aandriko@cc.uoi.gr}


\begin{thebibliography}{99}
\bibitem{arr}Arrow K. J., (1951; second ed. 1963), Social Choice and Individuals Values, Wiley, New York.




\bibitem{bad} Bade S., (2005), Nash equilibrium in games
with incomplete preferences, {\it Economic Theory}, {\bf 26}, pp. 309-332.



\bibitem{BBD} Blackorby C., Bossert W., and Donaldson D., (1999), Rationalizable solutions to pure population problems,
{\it Social Choice and Welfare}, {\bf 16}, pp. 395-407.

\bibitem{BP}
Bonnet, R., Pouzet, M., Linear extensions of ordered sets. Ordered sets (Banff, Alta., 1981), pp. 125-170, NATO Adv. Study Inst. Ser. C: Math. Phys. Sci., 83, Reidel, Dordrecht-Boston, Mass., 1982.








\bibitem{BSS}
Bossert, W., Sprumont, Y., Suzumura, K.: Upper semicontinuous extensions of binary relations.
J. Math. Econ. 37, (2002), 231-246.








\bibitem{BS} Brightwell R., G., Scheinerman R., E., (1992), Fractional Dimension of Partial Orders,
{\it Order}, pp. 139-158.



\bibitem{cla} Clark A. S., (1988), An extension theorem for rational choice functions, {\it Review of Economics Studies},
pp. 485-492.




\bibitem{dem}
Demuynck T., (2009), A general extension result with applications to convexity, homotheticity
and monotonicity, {\it Mathematical Social Sciences}, 57, 96-109.


\bibitem{DL}
Demuynck T., Lauwers L., (2009),  Nash rationlization of collective choice over lotteries,
{\it Mathematical Social Sciences}, 66, 283-299.

\bibitem{deb}
Debreu G., (1959), Theory of Value: An Axiomatic Analysis of Economic Equilibrium. Wiley.



\bibitem{DW}Donaldson D., Weymark J.A., (1998), A quasi-ordering is the intersection of orderings, {\it Journal of Economic
Theory} {\bf 78}, pp. 382-387.

\bibitem{dug}Duggan J., (1999), A General Extension Theorem for Binary Relations, {\it Journal of Economic Theory} {\bf 86},
pp. 1-16.

\bibitem{DM}Dushnik B., Miller E.W., (1941), Partially ordered sets, {\it American Journal of Mathematics} {\bf 63},
pp. 600-610.

\bibitem{fis}Fishburn P. C., (1973), The theory of Social Choice, Princeton University Press, Princeton.

\bibitem{fuc}
Fuchs L., (1950), On the extension of the Partial Order of Groups, {\it Amer. J. of Math}. Vol. 72, No. 1, 191-194.

\bibitem{FP}Freixas J., Puente A. M., (2001), A note about games-composition dimension, {\it Discrete Applied Mathematics}
{\bf 113}, 265-273.


\bibitem{han} Hansson B., (1968), Choice structures and preference relations, {\it Synthese} {\bf 18}, 443-458.


\bibitem{HP} Herden J., and Pallack A., (2002), On the continuous analogue of the szpilrajn Theorem I,
{\it Mathematical Social Sciences} {\bf 43}, pp. 115-134.

\bibitem{hir} 
Hiraguti T., (1955), On the dimension of orders, {\it Sci. Rep. Kanazawa Univ.} 4, no. 1, 1-20.


\bibitem{jaf}
Jaffray J., (1975), Semicontinuous extensions of a partial order, {\it Journal of Mathematical Economics}, 2, 395-406.

\bibitem{KS}
Kontolatou A., Stabakis J., (1990), Embedding groups into linear or lattice structures, {\it Bull. Calcutta Math. Soc.} 82, 290-297.



\bibitem{NP} Nehring K., and Puppe C., (1998),
Extended partial orders: a unifying structure for abstract choice
theory, {\it Annals of Operations Research} {\bf 80}, 27-48.






\bibitem{RR} Rabinovitch I., Rival L., (1979), The rank of a distributive lattice, {\it Discrete Math.}, {\bf 25}, 275-279.





\bibitem{ric} Richter M., (1966), Revealed preference theory, {\it Econometrica} {\bf 34}, pp. 635-645.


\bibitem{nac} 
Nachbin L., (1965), Topology and order, Van Nostrand Mathematical Studies, No. 4., D. Van Nostrand Co., Inc., Princeton, N. J.



\bibitem{sho} Sholomov L. A., (2000), Explicit form of neutral decision rules for basic rationality conditions,
{\it Mathematical Social Sciences} {\bf  39}, pp. 81-107.



\bibitem{ste} Stehr Zyklische Ordnungen M. O., (1996),
Axiome und einfache Eigenschaften, University of Hamburg, Hamburg,
Germany, Diplomarbeit.
\bibitem{suz0} Suzumura K., (2009), An Extension of Arrow's Lemma with Economic Applications,
COE/RES discussion paper series No. 79,  Hitotsubashi University.

\bibitem{suz} Suzumura K., (1976), Remarks on the theory of collective choice, \textit{Economica} \textbf{43}, pp. 381-390.


\bibitem{szp} Szpilrajn E., (1930), Sur l'extension de l'ordre partiel, {\it Fundamenta Matematicae} {\bf 16}, pp. 386-389.

\bibitem{wey} Weymark J.E., (2000), A generalization of Moulin's Pareto extension theorem,
{\it Mathematical Social Sciences} {\bf 39}, pp. 235-240.




\end{thebibliography}
\end{document}

\bibitem{con} Conway J., (1976), On numbers and games, Academic Press, New York.

Ordered Sets Proceedings of the NATO Advanced Study Institute held at Banff, 
Canada, August 28 to September 12, 1981 

Bonnet R., Pouzet M., Linear extensions of ordered sets. In: Ordered Sets (I. Rival, od.), D. Reidel, Dordrecht, 1982, pp. 125-170.

 Nash?œ?s theorem was an example of what mathematicians call an existence theorem; he proves that there is an equilibrium. It is not what is called a constructive theorem, which you could put on a computer to come up with the next equilibrium

Clearly, in case of $Y=\emptyset$ and $T=\emptyset$, Suzumura's Main Theorem implies the \textquotedblleft existence type\textquotedblright  extension theorem
in \cite{suz0}.
According to suzumura \cite[page 4]{suz0}, Arrow's lemma and 
its generalization by him
imply Szpilrajn's theorem as a special case where $Y=\emptyset$ and $T=\emptyset$.
But this is not the case, in fact we have $Y=\{x,y\in X\vert (x,y)\notin \overline{R}\ {\rm and}\ (y,x)\notin \overline{R}\}$. The Arrow's lemma and its generalization
by Suzumura are complicated and they are based on 
the Szpilrajn's extension theorem.

Clearly, in case of $Y=\emptyset$ and $T=\emptyset$, Theorem \ref{a2} implies the \textquotedblleft existence type\textquotedblright  extension theorem of Suzumura 
\cite{suz0} and in case of $Y=\{x, y\}$ and $Y=\{x,y\in X\vert (x,y)\notin \overline{R}\ {\rm and}\ (y,x)\notin \overline{R}\}$, it implies the \textquotedblleft existence type\textquotedblright  extension theorems of Suzumura 
\cite{suz}
and Arrow \cite[pp. 64-68]{arr}.

According to suzumura \cite[page 4]{suz0}, Arrow's lemma \cite[pp. 64-68]{arr} and its generalization by him \cite[Main Theorem]{suz}
imply Szpilrajn's theorem as a special case where $Y=\emptyset$ and $T=\emptyset$.
But this is not the case, in fact we have $Y=\{x,y\in X\vert (x,y)\notin \overline{R}\ {\rm and}\ (y,x)\notin \overline{R}\}$. The Arrow's lemma and its generalization
by Suzumura are complicated and they are based on 
the Szpilrajn's extension theorem.

To prove it, we first show that
$R^m=\displaystyle\bigcap_{Q\in {{\mathcal{Q}}}}Q$, where $\mathcal{Q}$ is the family of all ordering extensions of $R$.
Indeed, since 
$R^m\subseteq \displaystyle\bigcap_{Q\in {{\mathcal{Q}}}}Q^m=\displaystyle\bigcap_{Q\in {{\mathcal{Q}}}}Q$, we have to show that
$\displaystyle\bigcap_{Q\in {{\mathcal{Q}}}}Q\subseteq R^m$.
Suppose to the contrary that there exists an $(x,y)\in
\displaystyle\bigcap_{Q\in {{\mathcal{Q}}}}Q$ with $(x,y)\notin
R^m$. 
It follows that $(y,x)\notin
R$. 
Indeed, suppose to the contrary that $(y,x)\in R$. Then, 
$(y,x)\in P(R)\subseteq P(Q)$, a contradiction to $(x,y)\in Q$.
We now prove that
$(y,x)\notin
R^m$ too. Indeed, suppose to the contrary that $(y,x)\in R^m$. Define
\begin{center}
$R^{\prime}=R\cup \{(y,x)\}$.
\end{center}
Since $(x,y)\notin R$ we conclude that $P(R)\subseteq P(R^{\prime})$ as well as, since $R^m$ is transitive, it is easy to check that $(y,x)\in R^m$ implies that
$(R^{\prime})^m$ is transitive too. On the other hand, $R^{\prime}$ is $\Lambda(m)$-trapping, because otherwise, 
as above,
we conclude that $(x,y)\in R^m$ which is impossible.
Therefore, $R^{\prime}$ is a $\lambda(m)$-consistent binary relation.
By 
the sufficiency part of the proof above,
we conclude that
$R^{\prime}$ has an ordering extension $\widehat{R}$. Since
$(y,x)\in P(R^{\prime})$, we must then have $(y,x)\in
P(\widehat{R})$. Note that $R^{\prime}$ is an extension of $R$.
Hence, $\widehat{R}$ is also a ordering extension of $R$.
Therefore $\widehat{R}\in {{\mathcal{Q}}}$. Noting that $(x,y)\notin
\widehat{R}$, this is a contradiction with the assumption that
$(x,y)\in \displaystyle\bigcap_{Q\in {{\mathcal{Q}}}}Q$. 
The last contradiction shows that $(x,y)\notin R^m$ and $(y,x)\notin R^m$.
Let us define
\begin{center}
$R^{\ast}=R\cup\{(\kappa,\lambda)\vert \kappa R^my, \ x R^m\lambda\}$.
\end{center}
Then, as in the proof of the sufficiency part, for $Y=\{x,y\}$ and $T=\{(y,x)\}$ we can prove that
$R$ has a ordering extension $\widetilde{R}$. 
Since
$(y,x)\in P(R^{\ast})$, we must then have $(y,x)\in
P(\widetilde{R})$. Note that $R^{\ast}$ is an extension of $R$.
Hence, $\widetilde{R}$ is also an ordering extension of $R$.
Therefore $\widetilde{R}\in {{\mathcal{Q}}}$. But then, $(x,y)\notin
\widetilde{R}$ which is a contradiction with the assumption that
$(x,y)\in \displaystyle\bigcap_{Q\in {{\mathcal{Q}}}}Q\setminus R^m$. It follows that
$\displaystyle\bigcap_{Q\in {{\mathcal{Q}}}}Q\subseteq R^m$ which implies that
$R^m=\displaystyle\bigcap_{Q\in {{\mathcal{Q}}}}Q$.
Finally, to prove that $R$ is $m$-transitive, let $(\kappa,\mu)\in R^m$ and  $(\mu,\lambda)\in R^m$
for some $\kappa, \mu, \lambda\in X$. It follows that 
\begin{center}
$(\kappa,\lambda)\in (\displaystyle\bigcap_{Q\in {{\mathcal{Q}}}}Q^m)\circ (\displaystyle\bigcap_{Q\in {{\mathcal{Q}}}}Q^m)=
\displaystyle\bigcap_{Q\in {{\mathcal{Q}}}}[Q^m\circ (\displaystyle\bigcap_{Q\in {{\mathcal{Q}}}}Q^m)]\subseteq
\displaystyle\bigcap_{Q\in {{\mathcal{Q}}}}(Q^m\circ Q^m)=
\displaystyle\bigcap_{Q\in {{\mathcal{Q}}}}Q^{2m}=\displaystyle\bigcap_{Q\in {{\mathcal{Q}}}}Q=R^m$.
\end{center}

The following theorem generalize the Szpilrajn's extension
Theorem.

\begin{theorem}\label{a2}{\rm Let $R$ be a quasi-ordering on $X$, $Y$ a subset of $X$ such that, if $x\neq y$ and $x, y\in Y$, then $(x,y)\notin R$,
and $T$ an ordering on $Y$. Then, there exists an ordering extension $Q$ of $R$ such that $Q/Y=T$ if and only if $R$ is $\Lambda$-consistent.}
\end{theorem}
\begin{proof}Let $R$ be a $\Lambda$-consistent binary relation on $X$.
We put

\begin{center}
$R^{\ast}=R\cup\{(\kappa,\lambda)\vert \kappa Ry, \ x R\lambda, {\rm where} \ x,y\in Y\ {\rm and} \ \ (x,y)\in T\}=R\cup \widetilde{R}.$
\end{center}
Clearly, $(x,y)\in \widetilde{R}$ since $R$ is reflexive.
We first prove that $R^{\ast}$ is an extension of $R$, that is,
$R\subseteq R^{\ast}$ and $P(R)\subseteq P(R^{\ast})$.
The first is obvious from the definition of $R^{\ast}$. To prove the second, let 
$(\kappa,\lambda)\in P(R)$. Then, $(\kappa,\lambda)\in P(R)\subseteq R\subseteq R^{\ast}$.
Suppose to the contrary that $(\kappa,\lambda)\notin P(R^{\ast})$. It follows that $(\lambda,\kappa)\in R^{\ast}$.
We have two cases to consider:
(i) $(\lambda,\kappa)\in R$; (ii) $\lambda Ry, \ x R\kappa$ and $(x,y)\notin R$.
In case (i), we have a contradiction to $(\kappa,\lambda)\in P(R)$.
In case (ii), 
$\lambda Ry, \ x R\kappa$ jointly to and $(\kappa,\lambda)\in P(R)$ implies that $(x,y)\in\overline{R}\subseteq R$ which is impossible.
The last contradiction shows that $(\kappa,\lambda)\in P(R^{\ast})$ which implies that $P(R)\subseteq P(R^{\ast})$.
We have that $R^{\ast}$ 
is $\Lambda$-consistent, for suppose otherwise there are alternatives
$\nu, z_{_0},z_{_1},z_{_2},z_{_3}\in X$ such that
\begin{center}
$\nu=z_{_0}P(R\cup \widetilde{R})z_{_1}(R\cup \widetilde{R})z_{_2}(R\cup \widetilde{R})z_{_3}=\nu.$
\end{center}
Since $R$ is $\Lambda$-consistent, there must exists $k=0,1,2$ such
that $(z_{_k},z_{_{k+1}})=(x,y)\in \widetilde{R}$ and for all $i\in \{0,1,2\}$
with $i\neq k$, $(z_{_i},z_{_{i+1}})\in R^{\ast}$ if and only if
$(z_{_i},z_{_{i+1}})\in R$. 
It follows that $(y,x)\in R^m\subseteq$, a contradiction.
Hence, $R^{\ast}$ is 
$\Lambda$-consistent.